\documentclass[lettersize, journal]{IEEEtran}

\usepackage[utf8]{inputenc} 
\usepackage[T1]{fontenc}
\usepackage{ifthen}
\usepackage{array}
\usepackage{mathrsfs}
\usepackage{amsthm}
\usepackage{multirow}
\usepackage{multicol}
\usepackage[dvipsnames]{xcolor}
\usepackage{pgfplots}
\pgfplotsset{compat=1.18}
\usepackage{comment}
\usepackage{braket}
\usepackage{stmaryrd}
\usepackage{mathtools}
\usepackage{bm}
\usepackage{amsmath,amssymb,amsfonts}
\usepackage{graphicx}
\usepackage{textcomp}
\usepackage{tikz}
\usetikzlibrary{patterns}
\usepackage{booktabs}
\usepackage[caption=false,font=footnotesize]{subfig}

\newcommand{\F}{\mathbb{F}}      
\newcommand{\sym}[1]{#1}         
\newcommand{\6}{\mathbf } 

\makeatletter
\renewcommand*\env@matrix[1][*\c@MaxMatrixCols c]{%
  \hskip -\arraycolsep
  \let\@ifnextchar\new@ifnextchar
  \array{#1}}
\makeatother

\usepackage[ruled, vlined, linesnumbered]{algorithm2e}

\SetCommentSty{mycommfont}

\usepackage[noadjust]{cite} 

\theoremstyle{plain}
\newtheorem{The}{Theorem}
\newtheorem{Prop}{Proposition}
\newtheorem{Lem}{Lemma}
\newtheorem{Cor}{Corollary}

\newtheorem{Def}{Definition}

\theoremstyle{definition}
\newtheorem{Exa}{Example}

\theoremstyle{remark}

\newcommand{\rk}{\mathrm{rk}}
\newcommand{\vspan}{\mathrm{span}}
\newcommand{\rs}{\mathrm{row}}
\newcommand{\kr}{\mathrm{ker}}

\newcommand{\rsim}{\!\overset{\rs}{\sim}\!}

\newcommand{\CP}[1]{C_{\mathrm{#1}}^\perp}

\newcommand{\wt}[1]{\left\lVert{#1}\right\rVert}
\newcommand{\vstack}[2]{\left[\begin{array}{@{}c@{}}{#1}\\{#2}\end{array}\right]}

\usepackage[acronym,shortcuts]{glossaries}
\newacronym{CSS}{CSS}{Calderbank-Shor-Steane}
\newacronym{DPM}{DPM}{dyadic permutation matrix}
\newacronym{LER}{LER}{logical error rate}

\newacronym{MILP}{MILP}{mixed-integer linear program}

\newacronym{LDPC}{LDPC}{low-density parity-check}
\newacronym{QLDPC}{qLDPC}{quantum low-density parity-check}
\newacronym{QC}{QC}{quasi-cyclic}
\newacronym{QC-LDPC}{QC-LDPC}{quasi-cyclic low-density parity-check}
\newacronym{QD}{QD}{quasi-dyadic}
\newacronym{QM}{QM}{quantum Margulis}
\newacronym{QD-LDPC}{QD-LDPC}{quasi-dyadic low-density parity-check}
\newacronym{BP}{BP}{belief propagation}
\newacronym{BP4}{BP$4$}{quaternary belief propagation}
\newacronym{CN}{CN}{check-node}
\newacronym{MS}{MS}{Min-Sum}
\newacronym{BP+OSD}{BP2+OSD}{belief propagation plus ordered-statistics decoding}
\newacronym{QECC}{QECC}{quantum error-correcting code}
\newacronym{QEC}{QEC}{quantum error correction}
\newacronym{RM}{RM}{Reed-Muller}
\newacronym{MSD}{MSD}{magic state distillation}
\newacronym{GB}{GB}{generalized bicycle}
\newacronym{BB}{BB}{bivariate bicycle}
\newacronym{LP}{LP}{Lifted Product}
\newacronym{SD}{SD}{self-dual}
\newacronym{GHP}{GHP}{generalized hypergraph product}
\newacronym{DC}{DC}{dual-containing}
\newacronym{IS}{IS}{intersecting subsets}
\newacronym{HP}{HP}{hypergraph product}
\newacronym{LHCB}{LHCB}{left-hand conveyor belt}
\newacronym{GM}{GM}{generator matrix}
\newacronym{SD-XYZ}{SD-XYZ}{self-dual XYZ stabilizer}
\newacronym{LC}{LC}{local Clifford}
\newacronym{GRM}{GRM}{generalized Reed-Muller}
\newacronym
  [longplural={parity-check matrices},
   shortplural={PCMs}]
  {PCM}{PCM}{parity-check matrix}

\pdfmapfile{+custom.map}

\usepackage{mathabx}
\usepackage{pifont}

\usepackage{booktabs}

\usepackage{array}

\usepackage[normalem]{ulem}

\theoremstyle{definition}

\begin{document}

\title{
Quantum XYZ Stabilizer Codes 

\thanks{The work of Alessio Baldelli was partially supported by Agenzia per la Cybersicurezza Nazionale (ACN) under the programme for promotion of XL cycle PhD research in cybersecurity (CUP I32B24001750005).
The work of Davide Orsucci and Francisco L\'azaro is part of the HESC lighthouse project within Munich Quantum Valley initiative and is supported by the Bavarian state government with funds from the Hightech Agenda Bavaria.}
}

\author{
\IEEEauthorblockN{
Alessio Baldelli\IEEEauthorrefmark{2}\IEEEauthorrefmark{3},
Davide Orsucci\IEEEauthorrefmark{3},
Francisco Lázaro\IEEEauthorrefmark{3},
Massimo Battaglioni\IEEEauthorrefmark{2}
}\\
\IEEEauthorblockA{
\IEEEauthorrefmark{2}\emph{
    Department of Information Engineering,} Universit{\`a} Politecnica delle Marche, 
    60131 Ancona, Italy \\
Email: \texttt{a.baldelli@pm.univpm.it}, \texttt{m.battaglioni@univpm.it} \\
}
\IEEEauthorblockA{
\IEEEauthorrefmark{3}\emph{
    Institute of Communications and Navigation,} German Aerospace Center (DLR), 
    82234 We{\ss}ling, Germany \\
Email: \texttt{\{davide.orsucci, francisco.lazaroblasco\}@dlr.de}
}
}

\maketitle

\begin{abstract}
    Stabilizer codes are often constructed within the Calderbank--Shor--Steane (CSS) framework, where two mutually orthogonal binary classical codes define $X$ and $Z$-type stabilizer generators. 
    While this structure is algebraically convenient,  additional non-CSS constraints may help suppress low-weight logical operators and improve decoding performance in the finite-length regime. 
    We thus introduce quantum XYZ stabilizer codes, whose parity-check matrix (PCM) is built from three pairwise orthogonal binary PCMs associated with $X$-, $Y$-, and $Z$-type stabilizer generators. 
    A nontrivial point is that an XYZ code instance is not automatically genuinely non-CSS: the same stabilizer group may admit a CSS generating set. 
    We characterize this collapse, obtaining  algebraic and rank conditions for deciding when the $Y$-type checks are redundant and when they define genuinely non-CSS stabilizer constraints. 
    We also derive upper and lower bounds on the quantum minimum distance, including bounds for mixed Pauli logical operators. 
    The novel framework includes a known non-CSS topological code, namely the XYZ$^2$ hexagonal code, and yields also sparse finite-length quantum low-density parity-check (qLDPC) constructions from intersecting-subset and quasi-dyadic code families.
   Simulations under depolarizing code-capacity noise and quaternary belief propagation decoding show that the proposed XYZ qLDPC instances can outperform representative CSS qLDPC instances with similar finite-length parameters. 
\end{abstract}

\begin{IEEEkeywords}
    Quantum error correction, CSS codes, stabilizer codes, belief propagation decoding.
\end{IEEEkeywords}

\section{Introduction}
\label{sec:intro}

Over the last thirty years, many \ac{QEC} techniques have been developed, which use redundant physical qubits for the detection and correction of errors in such quantum systems. 
The framework almost universally employed to construct a \ac{QECC} is that of \emph{stabilizer codes}, which stands as the quantum equivalent of classical linear codes~\cite{gottesman1997stabilizercodesquantumerror}. 
The best-known approach for obtaining stabilizer codes is the \emph{\ac{CSS}} construction~\cite{Calderbank1996, Steane1996}.
In a \ac{CSS} code, the stabilizer generators can be divided into two sets: generators containing only $X$-type operators, up to identities, and generators containing only $Z$-type operators, up to identities. These two sets are derived from two classical codes that are mutually orthogonal.

Many widely used and extensively studied \acp{QECC} belong to \ac{CSS} families. 
Notable examples include Shor's 9-qubit code, the first proposed \ac{QECC}~\cite{shor1995scheme}, as well as several  \emph{topological} codes~\cite{bravyi1998quantum, dennis2002topological}, such as the toric code~\cite{kitaev1997quantum_1} and surface codes~\cite{fowler2012surface}.
Topological codes in two dimensions are fundamentally limited in the achievable minimum distance $d$ and coding rate $R=k/n$ (where $n$ is code length and $k$ the number of encoded qubits), since the quantity $d^2k/n$ is at most constant~\cite{bravyi2010}. 
Thus, motivated by the need to improve the rate--distance tradeoff, the efforts of the research community have shifted towards \emph{\ac{QLDPC}} codes, which are codes whose stabilizer generators are \emph{sparse}, or equivalently have low weight, but for which geometric locality is not imposed. 
This flexibility has motivated a line of work on \ac{QLDPC} code constructions with progressively improved tradeoffs between minimum distances and coding rates~\cite{Tillich_HGP, Pant_Kal_almost_linear, panteleev_good}.
However, asymptotic scaling does not necessarily predict finite-length performance: at short code lengths, specific instances from asymptotically weaker families may outperform instances from asymptotically stronger ones.

Most known \ac{QLDPC} constructions live within the \ac{CSS} framework. 
A question which arises naturally is what happens if this restriction is relaxed and Pauli-$Y$ operators are also employed in the definition of the stabilizer generators. 
This idea is at the basis, for example, of \emph{XYZ product codes}, first introduced in~\cite{maurice_thesis} as a generalization of \ac{HP} codes~\cite{Tillich_HGP}, and further investigated in~\cite{Leverrier_2022, liang2024_xyzproduct}, where the authors show that specific instances can achieve an almost-linear minimum distance and constant rate. 
A further example of a topological code involving $Y$-type stabilizers is the XYZ$^2$ hexagonal stabilizer code~\cite{Srivastava2022_xyzhexagonal}.

In this work, we investigate a structured extension of the \ac{CSS} framework that is related to, but structurally distinct from, the constructions discussed above. 
Unlike XYZ product codes, our framework does not impose a specific three-fold product structure; instead, we define the stabilizer \ac{PCM} directly from three binary classical \acp{PCM} associated with pure $X$-, $Y$-, and $Z$-type stabilizer generators and satisfying pairwise orthogonality conditions. 
In other words, the \ac{CSS} code is enriched with an additional set of pure $Y$-type checks. 
We call the resulting codes \emph{quantum XYZ stabilizer codes}. 
The proposed framework therefore occupies an intermediate position between \ac{CSS} and stabilizer codes: it allows genuinely non-\ac{CSS} stabilizer structures, while its description through three classical codes allows us to use classical coding-theoretic arguments for their construction and distance analysis. 
However, the presence of all three Pauli types alone does not guarantee that the XYZ code is genuinely non-\ac{CSS}.
Characterizing the conditions under which this occurs is one of the main objectives of this work.

\subsection{Main Motivation}
\label{subsec:motivation}

The \ac{CSS} framework offers a particularly convenient algebraic structure, but it also restricts the stabilizer configurations available for constructing a code with given parameters. Relaxing this restriction enlarges the code-design space and may provide additional freedom to exclude low-weight logical operators when designing codes with a  given block length and rate. This possibility is especially relevant at finite block lengths, where the detailed structure of individual code instances can be more important than the asymptotic scaling of the underlying family. Motivated by this observation, we investigate the following conjecture: quantum XYZ stabilizer code families can achieve larger minimum distance than comparable \ac{CSS} code families at the same block length and rate. In this work, we investigate this conjecture in the finite-length regime. To formalize the comparison, let $d_{\mathcal{S}}(n,k)$ and $d_{\mathrm{CSS}}(n,k)$ denote the largest minimum distance achievable by, respectively, stabilizer and \ac{CSS} codes with parameters $\llbracket n,k \rrbracket$. Since \ac{CSS} codes form a subclass of stabilizer codes, it trivially holds that~$d_{\mathrm{CSS}}(n,k) \leq d_{\mathcal{S}}(n,k)$.  Conversely, by the stabilizer-to-\ac{CSS} mapping of \cite[Theorem~1]{Kovalev_Pryadko}, any stabilizer code $\llbracket n,k,d \rrbracket$ can be mapped to a \ac{CSS} code $\llbracket 2n,2k,d' \rrbracket$ with $d \leq d' \leq 2d$ and stabilizer generator weights at most doubled\footnote{A related construction gives a \ac{CSS} code with parameters $\llbracket 4n, 2k, 2d \rrbracket$, again with at most doubled generator weights~\cite[Lemma~2]{Bravyi_MFCs_2010}.}. 
Hence, we get
\begin{align}
  d_{\mathrm{CSS}}(n,k) \leq  d_{\mathcal{S}}(n,k)  \leq  d_{\mathrm{CSS}}(2n,2k).
\end{align}
Therefore, defining the asymptotic relative distances as a function of the rate $R = k/n$, we get
\begin{align}
    d_{\ast}^{\mathrm{rel}}(R) &:= \limsup_{n\to\infty} \frac{1}{n} d_{\ast}(n,\lfloor Rn\rfloor),
\end{align}
with $\ast\in\{\mathcal{S}, \mathrm{CSS}\}$. As a consequence, we obtain
\begin{align}
\label{eq:rel_d}
    d_{\mathrm{CSS}}^{\mathrm{rel}}(R) \leq d_{\mathcal{S}}^{\mathrm{rel}}(R) \leq 2 d_{\mathrm{CSS}}^{\mathrm{rel}}(R).
\end{align}
Eq.~\eqref{eq:rel_d} shows that relaxing the \ac{CSS} restriction cannot yield more than a constant-factor improvement in the asymptotic relative distance. It does not, however, preclude substantial improvements for specific finite-length parameters. Although this comparison concerns general stabilizer codes, quantum XYZ stabilizer codes provide a structured non-\ac{CSS} setting in which such a finite-length advantage can be investigated using convenient analytical tools. Our objective is therefore to determine whether the additional design freedom offered by quantum XYZ stabilizer codes can produce better distance--rate tradeoffs in the short-block-length regime. Although a larger minimum distance does not necessarily imply better performance under every decoder and noise model, the numerical results presented later show that the proposed \ac{QLDPC} XYZ instances can achieve lower \ac{LER} than representative \ac{CSS} instances under low-complexity \ac{BP} decoding.

\subsection{Our Contribution} \label{subsec:Contribution}

As anticipated, our main contribution is the introduction of quantum XYZ stabilizer codes, a class of stabilizer codes whose \ac{PCM} is specified by three binary classical codes satisfying pairwise orthogonality conditions.

We first provide a structural characterization of this construction. 
We show how the row spans associated with the $X$-, $Y$-, and $Z$-type stabilizers decompose into reducible and irreducible components, thereby separating the part that is already generated by an underlying \ac{CSS} structure from the part that is genuinely XYZ. 
This leads to necessary and sufficient conditions for an XYZ code to be (or not to be) \ac{CSS}, and to rank criteria for detecting when the  $Y$-type checks do not introduce a genuinely new stabilizer structure. We also discuss the stronger notions of genuineness under uniform and local Pauli relabeling, clarifying that the  presence of $X$-, $Y$-, and $Z$-type generators is  not sufficient to obtain a genuinely quantum XYZ stabilizer code.

Then, we perform a theoretical analysis of the distance properties of XYZ codes. We derive upper bounds by restricting the logical-operator search to pure $X$-, $Y$-, and $Z$-type logical operators. We then derive lower bounds by associating to every XYZ code a collection of \ac{CSS} codes obtained by removing one irreducible Pauli component at a time. Since these \ac{CSS} codes contain the original XYZ codespace, their distances provide lower bounds on the XYZ minimum distance. Finally, we refine this analysis for mixed Pauli logical operators, obtaining a bound that explicitly uses the interaction between the three classical component codes. These results explain, at the level of logical operators, how additional $Y$-type stabilizer constraints can increase the minimum distance with respect to the CSS structures from which the construction is derived.

We also provide a constructive contribution.  
We first show that a known topological non-\ac{CSS} code, namely the XYZ$^2$ hexagonal code~\cite{Srivastava2022_xyzhexagonal}, fits naturally within the proposed XYZ framework. 
Then, we use our framework to design two new sparse finite-length \ac{QLDPC} XYZ code families, one based on \ac{IS} codes~\cite{Ostrev_IS} and one on \ac{QD} codes~\cite{baldelli2026ISIT}.

Finally, we evaluate the proposed \ac{QLDPC} XYZ instances over the depolarizing code-capacity channel using \ac{BP4} decoding. The simulations compare the new XYZ codes with representative \ac{CSS} \ac{QLDPC} codes of comparable block length, rate, and stabilizer generator weight. 
The results provide finite-length evidence that the additional $Y$-type constraints can translate the distance improvements suggested by our analysis into improved \ac{LER} performance under iterative decoding.

\subsection{Paper Organization} \label{subsec:paper_org}

The remainder of the paper is organized as follows. 
Section~\ref{sec:preli} introduces the notation and preliminary material used throughout the paper. Section~\ref{sec:quantum_XYZ_codes} defines quantum XYZ stabilizer codes from three pairwise orthogonal classical codes and studies their relation with \ac{CSS} codes. 
In particular, it introduces the reducible and irreducible components of the XYZ check spaces and derives algebraic and rank criteria for identifying genuinely non-\ac{CSS} instances. 
Section~\ref{sec:bounds_dmin} investigates the minimum distance of XYZ codes by deriving upper bounds from pure logical operators and lower bounds from \ac{CSS} codes imposed by the XYZ stabilizer structure, including bounds for mixed logical operators. 
Section~\ref{sec:XYZ_families} discusses families of stabilizer codes that fit the XYZ framework, including a known topological code instance and novel finite-length \ac{QLDPC} constructions based on \ac{IS} and \ac{QD} codes. 
Section~\ref{sec:num_res} presents numerical simulations over the code-capacity noise model using a \ac{BP4} decoder and compares the proposed XYZ \ac{QLDPC} instances with representative \ac{CSS} \ac{QLDPC} codes of comparable parameters. 
Finally, Section~\ref{sec:conclusions} concludes the paper and outlines possible directions for future work.

\section{Notation and Preliminaries}
\label{sec:preli}

For $a \in \mathbb{N}$, let $[a] := \{1,2,\dots,a\} \subset \mathbb{N}$. We use calligraphic uppercase letters to denote tuples, sets of subsets, and groups, e.g., $\mathcal{S}$. 
Given a group $\mathcal{G}$, the normalizer of a subgroup $\mathcal{S}\subseteq \mathcal{G}$ is denoted by $\mathcal{N}(\mathcal{S})$. 
Column vectors, e.g., $\6a$, $\6b$, and matrices, e.g., $\6A$, $\6B$, are denoted by bold lowercase and uppercase letters, respectively. 
Unless otherwise specified, the operations that involve binary vectors and matrices are performed over the binary field $\mathbb{F}_2$. 
We denote transposition by $(\cdot)^{\top}$.
The $m \times m$ identity matrix is denoted as $\6I_m$, and the subscript is omitted when it is clear from the context. 
Similarly, the  the all-zero  and all-one matrices (vectors) are denoted by $\60$ and  $\61$, respectively.
Given a matrix $\6A \in \mathbb{F}_2^{m\times n}$, we denote by $\rk(\6A)$ its rank over $\mathbb{F}_2$, while using $\kr(\6A):=\{\mathbf v\in\mathbb F_2^n:\mathbf A\mathbf v=\mathbf 0\}$, 
$\vspan(\6A) := \{\6A \6v \in \mathbb{F}_2^m \mid \6v \in \mathbb{F}_2^n \}$ and $\rs(\6A) = \vspan(\6A^\top)= \{(\6w^\top \6A)^\top \in \mathbb{F}_2^n \mid \6w \in \mathbb{F}_2^m \}$
to indicate the null space, (column) span, and row span of $\6A$, respectively; with this convention, a vector in the row span of $\6A$, $\6v \in \rs(\6A)$, is a column vector.
Moreover, we write $\6A \rsim \6B$ to indicate $\rs(\6A) = \rs(\6B)$. 
We omit round parentheses for functions of block matrices, e.g., $\vspan [\6A \, \6B] = \vspan([\6A \, \6B])$. The Hamming weight (or simply weight) of the vector $\6a$, i.e., the number of its non-zero entries, is denoted by $\wt{\6a}$. 

\subsection{Classical Linear Codes}
\label{subsec:classical_codes}
A \emph{classical binary linear code} $C$ with code length $n$, encoding $k$ bits, and having \textit{minimum (Hamming) distance} $d$ is a $k$-dimensional linear subspace of $\mathbb{F}_2^{n}$, where the minimum distance $d$ is equal to the weight of a minimum-weight non-zero codeword $\6c \in C$.
The coding rate of $C$ is $R := k/n$. A linear code is defined as the null space of a \ac{PCM} $\6H\in\mathbb{F}_2^{m \times n}$, with $m \geq n-k = \rk(\6H)$ (where the \ac{PCM} may have linearly dependent rows), that is,
$C := \kr(\6H) = \left\{\6c \in\mathbb F_2^n\mid  \6H\6c = \60\right\}$,
where $\6{c} \in C$ is a codeword. Alternatively, we can define $C := \vspan(\6G)$, where $\6G \in \mathbb{F}_2^{n \times k}$ satisfying $\6H \6G = \60$ is a (full-rank) generator matrix of the code. We denote the \emph{dual code} of $C$ by $C^\perp[n, n - k, d^\perp]$, that is, 
$C^\perp := \left\{\6w \in\mathbb F_2^n \mid \6c^\top\6w = 0, \, \forall \, \6c \in C \right\}$.
With $C_1 + C_2$ we denote the linear subspace obtained by vector spaces addition, namely $C_1 + C_2 = \{ c_1 + c_2 \, | \, c_1 \in C_1, c_2 \in C_2 \}$ and with $C_1 \oplus C_2$ their \emph{direct sum}, corresponding to the addition of linearly independent subspaces.

\subsection{Quantum Information} \label{subsec:quantum_information}

We denote the single-qubit Pauli group as 
\begin{align}
    \mathcal{P}_1  :=  \{\phi\,\6P \mid \phi\in\Phi,\ \6P\in\{\6I,\6X,\6Y,\6Z\}\},
    \Phi := \{\pm 1, \pm i\},
\end{align} 
where $\6Y=i\6X\6Z$.\footnote{We will use the uppercase bold letters when we denote Pauli operators as matrices, e.g., $\6X$, $\6Y$, and uppercase italic letters when we want to specify the type of an operator, e.g., $\sym{X}$-type, $Y$-type.} We define a \emph{Pauli operator on $n$ qubits} as the $n$-fold tensor product of $n$ elements of $\mathcal{P}_1$, which can be expressed as $\6P = \phi \, \6P_1 \otimes \, \dots \, \otimes \6P_{n} \in (\mathbb{C}^{2 \times 2})^{\otimes n}$, where $\phi\in\Phi$ is the \emph{global phase}, and $\6P_i \in \{ \6I, \6X, \6Y, \6Z \}$. 
Then, the Pauli operators act on $n$ qubits and form the so-called \emph{n-qubit Pauli group} $\mathcal{P}_{n}$. 
In Pauli tensor products, the symbol $\otimes$ and identities will be omitted, and we will just indicate the qubit on which each Pauli is acting (e.g., $\6P = \6X_2\6Y_5$ acts with $\6{X}$ on qubit 2 and $\6{Y}$ on qubit 5). 
We also denote with $\text{wt}(\6P)$ the \emph{weight} of a Pauli operator $\6P \in \mathcal{P}_{n}$, which consists of the number of non-trivial, i.e., non-identity, elements in its associated tensor product decomposition.

Next, we consider the quotient Pauli group $\mathcal{P}_n/\Phi$, where Pauli operators differing only by a global phase are regarded as equivalent, on which we define the bijective map:
\begin{align} 
\label{eq:mapping}
    \notag
    \6{I} \mapsto [0 \, | \, 0], & \quad \6{X} \mapsto [1 \,  | \, 0], \quad \6{Z} \mapsto [0 \, | \, 1], \quad \6{Y} \mapsto [1 \, | \, 1], \\
    & \6P \mapsto \left[ \6u^\top \, | \, \6v^\top \right] = [u_1 \dots u_n \, | \, v_1 \dots v_n],
\end{align}
where $u_i, v_i \in \mathbb{F}_2$ and the index $i$ represents the $i$-th qubit.
This induces a group isomorphism between ${\mathcal P}_n/\Phi$, with the group operation inherited from multiplication in $\mathcal P_n$, and the additive group $\mathbb F_2^{2n}$ and a non-invertible (many-to-one) mapping from $\mathcal{P}_n$ to $\mathbb{F}_{2}^{2n}$. Note that 
\begin{align}
\label{eq:weight_identity}
    \text{wt}(\6P) = \wt{\6u \vee \6v} = \tfrac{1}{2}(\wt{\6u}+\wt{\6v}+\wt{\6w}),    
\end{align}
where $\vee$ denotes the bit-wise OR and $\6w=\6u+\6v$. The last equality follows from $\wt{\6u \vee \6v} = \wt{\6u} + \wt{\6v} - \wt{\6u \wedge \6v}$ and $\wt{\6u + \6v} = \wt{\6u} + \wt{\6v} - 2\wt{\6u \wedge \6v}$, where $\wedge$ denotes the bit-wise AND.

We call \emph{pure} $\sym{X}$-, $\sym{Y}$-, and $\sym{Z}$-type Pauli operators the $n$-qubit Pauli operators whose non-identity tensor factors are, respectively, only $\6X$, only $\6Y$, or only $\6Z$, independently of the global phase, and \emph{mixed} otherwise.
Two $n$-qubit Pauli operators $\6P_1, \6P_2 \in \mathcal{P}_n$ either commute ($\6P_1 \6P_2 = \6P_2 \6P_1$) or anti-commute ($\6P_1 \6P_2 = -\6P_2 \6P_1$). 
Using the mappings $\6P_1 \mapsto [\6u_1^\top | \6v_1^\top]$ and $\6P_2 \mapsto [\6u_2^\top | \6v_2^\top]$, we say that $\6P_1$ and $\6P_2$ commute iff $\6u_1 \cdot \6v_2 + \6v_1 \cdot \6u_2 = 0$.

\subsection{Stabilizer Codes} \label{subsec:stabilizer_codes}

Qubit stabilizer codes are the quantum analogue of classical binary linear codes~\cite{gottesman1997stabilizercodesquantumerror}. 
\begin{Def} [Stabilizer Codes]
A \emph{stabilizer group}
$\mathcal{S} = \langle \6S_1, \dots, \6S_m \rangle$
is an abelian subgroup of $\mathcal{P}_n$ that does not contain $-\6I$, generated by $m \geq n-k$ commuting Pauli operators $\6S_1,\dots,\6S_m\in\mathcal P_n$, whose binary images under
\eqref{eq:mapping} have rank $n-k$. The associated \emph{stabilizer code} with $n$ physical qubits, $k$ logical qubits, and minimum quantum distance $d$,
denoted by $\mathcal{C}\llbracket n,k,d\rrbracket$, is defined as the $2^k$-dimensional subspace
\begin{align}
    \mathcal C
    :=
    \left\{
        \ket{\psi}\in(\mathbb C^2)^{\otimes n}
        \,\middle|\,
        \6S_i\ket{\psi}=\ket{\psi},\ \forall i\in[m]
    \right\}.    
\end{align}
\end{Def}

\begin{Def} [Degenerate Errors and Logical Operators]
For a stabilizer group $\mathcal{S} \subseteq \mathcal{P}_n$, the normalizer\footnote{For a stabilizer subgroup, the centralizer and the normalizer coincide~\cite{gottesman1997stabilizercodesquantumerror}.} $\mathcal{N}(\mathcal{S})$ contains all the \emph{degenerate errors} and \emph{logical errors} of the associated stabilizer code $\mathcal{C}$. Let
\begin{align}
    \Phi\mathcal{S} := \{\phi\6S \mid \phi\in\Phi, \,\6S\in\mathcal{S}\}.    
\end{align}
A Pauli operator $\6P \in \mathcal{S}\subseteq\mathcal{N}(\mathcal{S})$, by definition, acts trivially on the code space, i.e., $\6P\ket{\psi}=\ket{\psi}$ for all $\ket{\psi}\in\mathcal C$. More generally, every element of $\Phi\mathcal S$ acts on the code space as a global phase, namely, if $\6P=\phi\6S$ with $\phi\in\Phi$ and $\6S\in\mathcal S$, then $\6P\ket{\psi}=\phi\ket{\psi}$ for all $\ket{\psi}\in\mathcal C$. The elements of $\Phi\mathcal S$ are thus phase-equivalent to stabilizers and are known as \emph{degenerate} errors~\cite{Panteleev_degenerate} or \emph{harmless} errors~\cite{Babar2015}. Moreover, a Pauli operator $\6P \in \mathcal{N}(\mathcal{S}) \setminus \Phi\mathcal{S}$ commutes with all stabilizers but is not equivalent to a stabilizer up to a global phase. Hence, it preserves the code space but acts nontrivially on it, i.e., there exists $\ket{\psi}\in\mathcal C$ such that $\6P\ket{\psi} \neq \phi\ket{\psi},  \forall \phi\in\Phi$. These operators are called \emph{logical operators} or \emph{logical errors}~\cite{gottesman1997stabilizercodesquantumerror}.
\end{Def}

\begin{Def} [Minimum Distances of Stabilizer Codes] 
\label{def:min_dist_QSC}
The \emph{quantum minimum distance} $d$ of a stabilizer code is defined\footnote{If $k=0$, the distance is the minimum over an empty set, in which case we set $d=\infty$ by convention.} as follows
\begin{align} \label{eq:d_min_QSC}
    d := \min \{\textup{wt}(\6P) \, | \,
    \6P \in \mathcal{N}(\mathcal{S}) \setminus \Phi\mathcal{S}\}.
\end{align}
The \emph{classical minimum distance}~\cite{gottesman1997stabilizercodesquantumerror, Panteleev_degenerate}
$\delta$ of a stabilizer code is defined as the minimum weight of a non-scalar Pauli operator in the normalizer group:
\begin{align}
    \delta := \min \{\textup{wt}(\6P) \, | \,
    \6P \in \mathcal{N}(\mathcal{S}) \setminus \Phi\{\6I\} \}.
\end{align}
By construction, the quantum and classical minimum distances always satisfy $d \geq \delta$.
\end{Def}

By leveraging the isomorphism in~\eqref{eq:mapping} we can rewrite the stabilizer formalism in matrix form as follows. 

\begin{Prop} [Parity-Check Matrices for Stabilizer Codes]
\label{prop:PCM_stab}
Consider a code $\mathcal{C}\llbracket n, k, d \rrbracket$ with stabilizer group $\mathcal{S}$. Under~\eqref{eq:mapping}, the binary images of a possibly redundant set of $m\geq n-k$ stabilizer generators are collected as the rows of $\6H_{\mathcal S}$:
\begin{align} \label{eq:stab_matrix}
    \6H_{\mathcal{S}} :=
    \begin{bmatrix}[c|c]
        \6H_{\mathrm{X}} & \6H_{\mathrm{Z}}
    \end{bmatrix}
    \in \mathbb{F}_2^{m \times 2n},
\end{align}
where $\6H_{\mathrm{X}}, \6H_{\mathrm{Z}} \in \mathbb{F}_2^{m \times n}$. We call $\6H_{\mathcal{S}}$ a \ac{PCM} associated to $\mathcal{C}$.
The commutativity property of stabilizers, under the isomorphism \eqref{eq:mapping}, is equivalent to the \emph{symplectic product orthogonality condition}, that is:
\begin{align} \label{eq:symp}
    \6H_{\mathrm{X}} \6H_{\mathrm{Z}}^{\top} + \6H_{\mathrm{Z}} \6H_{\mathrm{X}}^{\top} = \60.
\end{align}

The dimension of the stabilizer code can be computed as 
\(k = n - \rk(\6H_{\mathcal{S}}),\) while its rate is $R = k/n$.
\end{Prop}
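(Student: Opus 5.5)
The plan is to prove the two claims of Proposition~\ref{prop:PCM_stab} separately: the symplectic orthogonality condition~\eqref{eq:symp}, and the dimension formula $k=n-\rk(\6H_{\mathcal S})$. Both come from the isomorphism~\eqref{eq:mapping} between $\mathcal P_n/\Phi$ and $\mathbb F_2^{2n}$ together with the commutation criterion stated in Section~\ref{subsec:quantum_information}.

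\emph{Commutativity.} Let the $i$-th row of $\6H_{\mathcal S}$ be $[\6u_i^\top|\6v_i^\top]$, the image of $\6S_i$. The $(i,j)$ entry of $\6H_{\mathrm X}\6H_{\mathrm Z}^\top+\6H_{\mathrm Z}\6H_{\mathrm X}^\top$ is $\6u_i\cdot\6v_j+\6v_i\cdot\6u_j$. By the criterion recalled above, this entry vanishes exactly when $\6S_i$ and $\6S_j$ commute. For the criterion itself, I will use the qubit-wise argument: single-qubit Paulis anticommute iff $u_{1,\ell}v_{2,\ell}+v_{1,\ell}u_{2,\ell}=1$, and signs multiply across the tensor factors. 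It follows that pairwise commutation of all generators is equivalent to~\eqref{eq:symp}. Since commutation of the generators is equivalent to $\mathcal S$ being abelian, this settles the first claim.

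\emph{Dimension.} The isomorphism sends $\mathcal S\Phi/\Phi$ onto $\rs(\6H_{\mathcal S})$, a subspace of dimension $r=\rk(\6H_{\mathcal S})$. Because $-\6I\notin\mathcal S$, we also have $\mathcal S\cap\Phi=\{\6I\}$. Indeed, $\pm i\6I\in\mathcal S$ is impossible, since it would square to $-\6I$. Hence $\mathcal S\cong\mathcal S\Phi/\Phi$, and $|\mathcal S|=2^{r}$. Choose $r$ independent generators. The code space is the image of the projector
\begin{align}
\Pi=\frac{1}{|\mathcal S|}\sum_{\6S\in\mathcal S}\6S,
\end{align}
so $\dim\mathcal C=\mathrm{tr}(\Pi)$. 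Every non-identity Pauli operator with phase in $\Phi$ is traceless. The only element of $\mathcal S$ proportional to $\6I$ is $\6I$ itself, so $\mathrm{tr}(\Pi)=2^n/2^r$. This gives $k=n-r$, and the rate statement $R=k/n$ is just the definition.

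The main obstacle is the phase bookkeeping: I need $\mathcal S$ to inject into $\mathbb F_2^{2n}$, so that $|\mathcal S|=2^r$ and not $2^{r+1}$ or $2^{r+2}$. This injectivity is exactly what the hypothesis $-\6I\notin\mathcal S$ supplies, so I will state that step explicitly. Everything else is routine linear algebra over $\F_2$.
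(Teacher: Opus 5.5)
Your proof is correct. The paper gives no proof of this proposition; it is stated as standard background. The commutation criterion is simply asserted at the end of Section~\ref{subsec:quantum_information}, and the link between rank and dimension is essentially built into the paper's definition of a stabilizer code: the generators' binary images are assumed to have rank $n-k$, and the code space is declared to be $2^k$-dimensional.

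For commutativity, your argument is exactly the entrywise reading of that criterion, so there is nothing to compare. For the dimension, you supply the standard projector--trace derivation that the paper omits. This actually shows that the definition is consistent, i.e., that the joint $+1$ eigenspace has dimension $2^{n-\rk(\6H_{\mathcal S})}$. You also correctly identify $-\6I\notin\mathcal S$ as the hypothesis that makes $\mathcal S$ inject into $\mathbb F_2^{2n}$.

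Two small points to tighten when you write it up:
\begin{enumerate}
\item $\mathcal S\cap\Phi$ should read $\mathcal S\cap\Phi\{\6I\}$, since $\Phi$ is a set of scalars, not operators.
\item The claim that $\mathcal C$ is the image of $\Pi$ deserves one line. Reindexing the group sum gives $\6S\Pi=\Pi$ for all $\6S\in\mathcal S$. This yields both $\Pi^2=\Pi$ and $\mathrm{im}\,\Pi\subseteq\mathcal C$. Conversely, $\Pi\ket{\psi}=\ket{\psi}$ for every $\ket{\psi}\in\mathcal C$. Because $\Pi$ is idempotent, $\mathrm{tr}(\Pi)=\dim\mathrm{im}\,\Pi=\dim\mathcal C$, which is what your trace computation needs.
\end{enumerate}
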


\begin{Prop} [Logical Operator Matrices for Stabilizer Codes]
The logical operators of a $\mathcal{C}\llbracket n, k, d \rrbracket$ stabilizer code admit a binary representation: the $2k$ independent logical generators are collected into the rows of the \emph{logical matrix}
\begin{align} \label{eq:logical_matrix}
    \6L_{\mathcal{S}} :=
    \begin{bmatrix}[c|c]
        \6L_{\mathrm{X}} & \6L_{\mathrm{Z}}
    \end{bmatrix}
    \in \mathbb{F}_2^{2k \times 2n},
\end{align}
where $\6L_{\mathrm{X}}, \6L_{\mathrm{Z}} \in \mathbb{F}_2^{2k \times n}$, and each row is the image under \eqref{eq:mapping} of a logical operator $\6L \in \mathcal{N}(\mathcal{S}) \setminus \Phi\mathcal{S}$.
By definition, these rows 
\begin{itemize}
    \item[$(i)$] commute with all stabilizer generators, i.e., they satisfy the symplectic orthogonality condition
    \begin{align} \label{eq:symp_logical}
        \6H_{\mathrm{X}} \6L_{\mathrm{Z}}^{\top} + \6H_{\mathrm{Z}} \6L_{\mathrm{X}}^{\top} = \60,
    \end{align}
    \item[$(ii)$] are not contained in $\rs(\6H_{\mathcal{S}})$.
\end{itemize}
The generators can be chosen as a symplectic (conjugate) basis $\{ \bar{\6X}_i, \bar{\6Z}_i \}_{i \in [k]}$, where $\bar{\6X}_i$ and $\bar{\6Z}_j$ anti-commute iff $i = j$, while all the remaining pairs commute. 
\end{Prop}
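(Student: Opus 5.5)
The plan is to derive everything in the statement from the group isomorphism \eqref{eq:mapping} between $\mathcal P_n/\Phi$ and $(\mathbb F_2^{2n},+)$, together with the commutation criterion of Section~\ref{subsec:quantum_information}. The symplectic form on $\mathbb F_2^{2n}$ is $\langle [\6u_1^\top|\6v_1^\top],[\6u_2^\top|\6v_2^\top]\rangle = \6u_1\cdot\6v_2+\6v_1\cdot\6u_2$. First, collecting the images of the stabilizer generators as the rows of $\6H_{\mathcal S}$ and of the logical generators as the rows of $\6L_{\mathcal S}=[\6L_{\mathrm X}|\6L_{\mathrm Z}]$, I would note the following. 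Entry $(j,\ell)$ of $\6H_{\mathrm X}\6L_{\mathrm Z}^\top+\6H_{\mathrm Z}\6L_{\mathrm X}^\top$ is exactly the symplectic product of the $j$-th stabilizer generator with the $\ell$-th logical generator. So property $(i)$, equation \eqref{eq:symp_logical}, is equivalent to each row commuting with every generator, and hence with all of $\mathcal S$, since commutation is preserved under products. This is the same as membership in the centralizer, which equals $\mathcal N(\mathcal S)$ by the footnoted fact.

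Second, property $(ii)$: a row lies in $\rs(\6H_{\mathcal S})$ iff the corresponding Pauli operator equals a product of generators up to phase, i.e.\ lies in $\Phi\mathcal S$. This holds because the map is a group isomorphism modulo $\Phi$, so the image of $\Phi\mathcal S$ is exactly the row span. Hence rows outside the row span are precisely images of elements of $\mathcal N(\mathcal S)\setminus\Phi\mathcal S$.

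Third, the count $2k$ and the symplectic basis. Let $V=\rs(\6H_{\mathcal S})$, with $\dim V=n-k$ by Proposition~\ref{prop:PCM_stab}. Then $V$ is isotropic: the diagonal of \eqref{eq:symp} is automatic, and its off-diagonal entries give pairwise orthogonality. The image of $\mathcal N(\mathcal S)$ is the symplectic complement $V^{\perp_s}$. Since the form is nondegenerate, $\dim V^{\perp_s}=2n-(n-k)=n+k$, so $V^{\perp_s}/V$ has dimension $2k$. The induced form on this quotient is well defined because $V$ is isotropic and lies in the radical, and it is nondegenerate because $(V^{\perp_s})^{\perp_s}=V$.

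The main obstacle is producing the conjugate pairs $\{\bar{\6X}_i,\bar{\6Z}_i\}$ by symplectic Gram--Schmidt on $V^{\perp_s}/V$. The steps are:
\begin{enumerate}
\item Pick any nonzero class $\bar x_1$.
\item By nondegeneracy, find $\bar z_1$ with $\langle \bar x_1,\bar z_1\rangle=1$.
\item Project all remaining vectors onto the symplectic complement of $\mathrm{span}\{\bar x_1,\bar z_1\}$ via $w\mapsto w+\langle w,\bar z_1\rangle \bar x_1+\langle w,\bar x_1\rangle \bar z_1$. The induced form remains nondegenerate on this complement.
\item Iterate $k$ times.
\end{enumerate}
Lifting representatives to $V^{\perp_s}$ gives $2k$ rows that are linearly independent modulo $V$. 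Each such row is nonzero modulo $V$, so the rows satisfy $(i)$ and $(ii)$ and have the claimed anticommutation pattern, which is equivalent to the symplectic products being $\delta_{ij}$.
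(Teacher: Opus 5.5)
Your proposal is correct. The paper gives no proof of this proposition. It is recalled as standard background from the stabilizer formalism: $(i)$--$(ii)$ are declared to hold ``by definition'', and the existence of a conjugate basis $\{\bar{\mathbf{X}}_i,\bar{\mathbf{Z}}_i\}_{i\in[k]}$ is simply asserted.

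Your argument supplies the standard derivation that the paper leaves implicit. You identify the image of $\mathcal{N}(\mathcal{S})$ with the symplectic complement $V^{\perp_s}$ of $V=\rs(\mathbf{H}_{\mathcal{S}})$, obtain $\dim(V^{\perp_s}/V)=2k$ from nondegeneracy and isotropy, and run symplectic Gram--Schmidt on the nondegenerate quotient. In particular, this justifies the count of $2k$ independent logical generators, which the statement takes for granted.

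One ingredient is worth naming explicitly. Your projection $w\mapsto w+\langle w,\bar z_1\rangle \bar x_1+\langle w,\bar x_1\rangle \bar z_1$ lands in the complement only because the form is \emph{alternating}, not merely symmetric: $\langle \mathbf{x},\mathbf{x}\rangle=\mathbf{u}\cdot\mathbf{v}+\mathbf{v}\cdot\mathbf{u}=0$ for every $\mathbf{x}=[\mathbf{u}^\top|\mathbf{v}^\top]$. You already use this fact for the isotropy of $V$, so nothing is missing.
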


Note that, in the binary representation, the set $\Phi\mathcal S$ corresponds to $\rs(\6H_{\mathcal S})$, while the scalar subgroup $\Phi\{\6I\}$ corresponds to the zero vector.

\begin{Prop} [Minimum Distances of Stabilizer Codes] 
Using the notation of Proposition~\ref{prop:PCM_stab}, the quantum and the classical distances are given by:
\begin{subequations}
\label{eq:stab_distance}
\begin{align}
\label{eq:stab_distance_q}
    d = & \min\! \left\{ \wt{\6u \vee \6v} \, \middle| \, \6H_{\mathcal{S}}\! \vstack{\6v}{\6u} = \60, \vstack{\6u}{\6v} \notin \rs(\6H_{\mathcal{S}}) \right\}, \\
\label{eq:stab_distance_c}
    \delta = 
    & \min\! \left\{ \wt{\6u \vee \6v} \, \middle| \, \6H_{\mathcal{S}}\! \vstack{\6v}{\6u} = \60, \vstack{\6u}{\6v} \neq \60 \right\}.
\end{align}\end{subequations}
\end{Prop}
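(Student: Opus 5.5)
The plan is to translate the two set-theoretic definitions of $d$ and $\delta$ in Definition~\ref{def:min_dist_QSC} into binary linear algebra through the isomorphism~\eqref{eq:mapping}. Three facts are needed: membership in the normalizer becomes a kernel condition on $\6H_{\mathcal{S}}$, membership in $\Phi\mathcal{S}$ becomes membership in $\rs(\6H_{\mathcal{S}})$, and the Pauli weight becomes $\wt{\6u \vee \6v}$ by~\eqref{eq:weight_identity}.

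\emph{Step 1: the normalizer condition.} Take $\6P \in \mathcal{P}_n$ with image $[\6u^\top|\6v^\top]$. Since $\mathcal{S}$ is generated by $\6S_1,\dots,\6S_m$, the operator $\6P$ lies in $\mathcal{N}(\mathcal{S})$ (equivalently, in the centralizer, as noted in the footnote) iff it commutes with every generator. By the commutation criterion of Section~\ref{subsec:quantum_information}, $\6P$ commutes with $\6S_i$, whose image is the $i$-th row $[\6h_{X,i}^\top|\6h_{Z,i}^\top]$, iff $\6h_{X,i}\cdot\6v+\6h_{Z,i}\cdot\6u=0$. Stacking these conditions over $i\in[m]$ gives $\6H_{\mathrm X}\6v+\6H_{\mathrm Z}\6u=\60$. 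This is exactly $\6H_{\mathcal{S}}\vstack{\6v}{\6u}=\60$, the swapped-halves form appearing in~\eqref{eq:stab_distance}. The condition depends only on the image, so global phases play no role.

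\emph{Step 2: the excluded sets.} By the group isomorphism $\mathcal{P}_n/\Phi\cong\mathbb{F}_2^{2n}$, the image of $\Phi\mathcal{S}$ is the subgroup generated by the rows of $\6H_{\mathcal{S}}$, namely $\rs(\6H_{\mathcal{S}})$. Also, $\6P\in\Phi\mathcal{S}$ iff its image lies in $\rs(\6H_{\mathcal{S}})$, because $\Phi\mathcal{S}$ is closed under multiplication by phases and is therefore a union of full fibres of the map. Likewise, $\Phi\{\6I\}$ is exactly the fibre over $\60$.

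\emph{Step 3: transferring the minimum.} By~\eqref{eq:weight_identity}, the weight $\text{wt}(\6P)=\wt{\6u\vee\6v}$ depends only on the image. Hence the minimum over Pauli operators in each definition equals the minimum over binary vectors satisfying the translated constraints. Every binary vector has a preimage, for instance phase $1$ with tensor factors read off by~\eqref{eq:mapping}, so no vectors are lost in this transfer. Combining Steps 1--3 yields~\eqref{eq:stab_distance_q} and~\eqref{eq:stab_distance_c}.

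The main obstacle I expect is only bookkeeping. First, the swap of $\6u$ and $\6v$ in the kernel condition has to be checked carefully. Second, one must justify that phase-equivalence classes are preserved, i.e.\ that $\Phi\mathcal{S}$ is a full preimage of $\rs(\6H_{\mathcal{S}})$. That holds because the kernel of the map $\mathcal{P}_n\to\mathbb{F}_2^{2n}$ is precisely $\Phi$, and $\Phi\mathcal{S}\supseteq\Phi$.
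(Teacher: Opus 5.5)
Your proposal is correct and is essentially the argument the paper leaves implicit: the paper presents this proposition as a direct rewriting of Definition~\ref{def:min_dist_QSC} through the isomorphism~\eqref{eq:mapping}, using the symplectic commutation criterion for the normalizer, the correspondence of $\Phi\mathcal{S}$ with $\rs(\6H_{\mathcal{S}})$ and of $\Phi\{\6I\}$ with the zero vector, and the weight identity~\eqref{eq:weight_identity}. Your write-up additionally makes explicit the bookkeeping the paper skips: the swapped halves in the kernel condition, the fact that $\Phi\mathcal{S}$ is a union of full fibres, and surjectivity of the map.
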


A special subclass of such stabilizer codes, in which the stabilizer generators consist exclusively of $\sym{X}$-type or $\sym{Z}$-type Pauli operators, is given by the so-called \ac{CSS} codes~\cite{Calderbank1996, Steane1996}.

\begin{Def} [\ac{CSS} Codes]
\label{def:CSS_code}
Consider two binary classical codes $C_{\mathrm{X}}[n, k_{\mathrm{X}}, \delta_{\mathrm{X}}]$ and $C_{\mathrm{Z}}[n, k_{\mathrm{Z}}, \delta_{\mathrm{Z}}]$ having associated \acp{PCM} $\6H_{\mathrm{X}} \in \mathbb{F}_2^{m_{\mathrm{X}} \times n}$ and $\6{H}_{\mathrm{Z}} \in \mathbb{F}_2^{m_{\mathrm{Z}} \times n}$, respectively, and such that $C_{\mathrm{Z}}^\perp \subseteq C_{\mathrm{X}}$, equivalently $C_{\mathrm{X}}^\perp \subseteq C_{\mathrm{Z}}$. 
This property can be expressed as 
\begin{align}
    \label{eq:ortho_CSS}
    \mathbf{H}_{\mathrm{Z}} \mathbf{H}_{\mathrm{X}}^{\top} = \60, \quad \text{equivalently} \quad \6H_{\mathrm{X}} \6{H}_{\mathrm{Z}}^{\top} = \60.
\end{align}

A \emph{\ac{CSS} code} can be represented by the following \ac{PCM}
\begin{align}
\label{eq:CSS_PCM}
    \6{H}_{\mathrm{CSS}} := 
        \begin{bmatrix} [c|c] 
            \6H_{\mathrm{X}} & \60 \\ 
            \60 & \6H_{\mathrm{Z}} 
        \end{bmatrix} \in \mathbb{F}_2^{(m_{\mathrm{X}} + m_{\mathrm{Z}}) \times 2n}
\end{align}
and we denote it as $\mathcal{C}(C_{\mathrm{X}}, C_{\mathrm{Z}})\llbracket n, k, d \rrbracket$, 
where $k=n-\rk(\6H_{\mathrm{X}})-\rk(\6H_{\mathrm{Z}})$ is the number of logical qubits, $R = k/n$ is the rate and $d$ is the minimum distance. 
Furthermore, the logical matrix of~\eqref{eq:logical_matrix} can be chosen to have the specific form:
\begin{align}
\label{eq:CSS_logical}
    \6{L}_{\mathrm{CSS}} := 
        \begin{bmatrix} [c|c] 
            \6L_{\mathrm{X}} & \60 \\ 
            \60 & \6L_{\mathrm{Z}} 
        \end{bmatrix} \in \mathbb{F}_2^{2k \times 2n}
\end{align}
where the $X$-type logical operators are associated to $\6L_{\mathrm{X}}$ and $Z$-type logical operators  are associated to $\6L_{\mathrm{Z}}$ (both consisting of $k$ rows) and satisfy:
\begin{align}
    \mathbf{H}_{\mathrm{Z}} \mathbf{L}_{\mathrm{X}}^{\top} = \60, \quad \6H_{\mathrm{X}} \6{L}_{\mathrm{Z}}^{\top} = \60,
    \quad \text{and} \quad \mathbf{L}_{\mathrm{Z}} \mathbf{L}_{\mathrm{X}}^{\top} = \6I.
\end{align}
\end{Def}

As a consequence, \ac{CSS} codes allow correcting $\sym{X}$-type and $\sym{Z}$-type errors separately. In particular, the classical code $C_{\mathrm{X}}$ with associated PCM $\6H_{\mathrm{X}}$ is used to correct $\sym{Z}$-type errors, and similarly for $C_{\mathrm{Z}}$ and $\6H_{\mathrm{Z}}$ with $X$-type errors.

\begin{Def} [Coset Distance]
\label{def:d_coset}
For subspaces $D \subseteq C \subseteq \mathbb{F}_2^n$ we define the \emph{coset minimum distance} $d(C\setminus D)$ and the minimum distance $d(C)$ as
\begin{subequations}
\begin{align}
    d(C\setminus D) & := \min\{ \wt{\6x} \,|\, \6x\in C,\ \6x\notin D \} \\
    d(C) & := \min\{ \wt{\6x} \,|\, \6x\in C,\ \6x\notin \{\60\} \} .
\end{align}\end{subequations}
By construction, the coset distance satisfies $d(C\setminus D) \geq d(C)$.
\end{Def}

\begin{Prop} [Minimum Distance of a \ac{CSS} Code]
\label{prop:CSS_d_min}
Let $\mathcal{C}(C_{\mathrm{X}}, C_{\mathrm{Z}})\llbracket n, k, d \rrbracket$ be a \ac{CSS} code constructed from classical codes $C_{\mathrm{X}}[n, k_{\mathrm{X}}, \delta_{\mathrm{X}}]$ and $C_{\mathrm{Z}}[n, k_{\mathrm{Z}}, \delta_{\mathrm{Z}}]$ as in Definition~\ref{def:CSS_code}. 
The lowest weights of $Z$-type and $X$-type logical operators are, respectively, given by the following \emph{coset distances}
\begin{subequations}\begin{align}
    d_{\mathrm{X} \setminus \mathrm{Z}} &:= d\big(\ker(\6H_{\mathrm{X}})\setminus \rs(\6H_{\mathrm{Z}})\big) = d(C_{\mathrm{X}}\setminus C_{\mathrm{Z}}^\perp),  \\
    d_{\mathrm{Z} \setminus \mathrm{X}} &:= d\big(\ker(\6H_{\mathrm{Z}})\setminus \rs(\6H_{\mathrm{X}})\big) = d(C_{\mathrm{Z}}\setminus C_{\mathrm{X}}^\perp),
\end{align}\end{subequations}
where $\ker(\6H_{\mathrm{X}}) = \rs(\6L_{\mathrm{Z}})+\rs(\6H_{\mathrm{Z}})$ and $\ker(\6H_{\mathrm{Z}}) = \rs(\6L_{\mathrm{X}})+\rs(\6H_{\mathrm{X}})$. Furthermore, we have
\begin{subequations}\begin{align}
    \delta_{\mathrm{X}} = d(C_{\mathrm{X}}) \leq d_{\mathrm{X} \setminus \mathrm{Z}}, \\
    \delta_{\mathrm{Z}} = d(C_{\mathrm{Z}}) \leq d_{\mathrm Z \setminus \mathrm X}.
\end{align}\end{subequations}
 Consequently, the quantum minimum distance $d$ and the classical distance $\delta$ satisfy:
\begin{align}
\label{eq:d_and_delta}
    d =  
    \min\! \left\{d_{\mathrm{Z} \setminus \mathrm{X}}, d_{\mathrm{X} \setminus \mathrm{Z}} \right\}
    \geq 
    \delta =
    \min\! \left\{\delta_{\mathrm{X}}, \delta_{\mathrm{Z}}\right\}.
\end{align}
\end{Prop}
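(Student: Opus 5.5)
The plan is to reduce everything to the binary description of Proposition~\ref{prop:PCM_stab} applied to the block-diagonal matrix $\6H_{\mathrm{CSS}}$ in~\eqref{eq:CSS_PCM}, and to split a logical operator into its $X$ and $Z$ parts. A Pauli $\6P\mapsto[\6u^\top|\6v^\top]$ lies in $\mathcal N(\mathcal S)$ iff $\6H_{\mathrm{CSS}}[\6v;\6u]=\60$. Because of the block structure, this is the same as $\6H_{\mathrm{X}}\6v=\60$ and $\6H_{\mathrm{Z}}\6u=\60$, i.e., $\6v\in C_{\mathrm{X}}$ and $\6u\in C_{\mathrm{Z}}$. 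Likewise $[\6u;\6v]\in\rs(\6H_{\mathrm{CSS}})$ iff $\6u\in\rs(\6H_{\mathrm{X}})=C_{\mathrm{X}}^\perp$ and $\6v\in\rs(\6H_{\mathrm{Z}})=C_{\mathrm{Z}}^\perp$. Hence the normalizer is the product $C_{\mathrm{Z}}\times C_{\mathrm{X}}$ and the stabilizer is $C_{\mathrm{X}}^\perp\times C_{\mathrm{Z}}^\perp$. The kernel identities $\ker(\6H_{\mathrm{X}})=\rs(\6L_{\mathrm{Z}})+\rs(\6H_{\mathrm{Z}})$ (and symmetrically for $\6H_{\mathrm{Z}}$) then follow by a dimension count. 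We have $\rs(\6H_{\mathrm{Z}})\subseteq\ker(\6H_{\mathrm{X}})$ by~\eqref{eq:ortho_CSS}, and $\6L_{\mathrm{Z}}$ lies there too. Moreover $\6L_{\mathrm{Z}}\6L_{\mathrm{X}}^\top=\6I$ while $\6H_{\mathrm{Z}}\6L_{\mathrm{X}}^\top=\60$, so the $k$ rows of $\6L_{\mathrm{Z}}$ are independent modulo $\rs(\6H_{\mathrm{Z}})$. The total dimension is therefore $\rk(\6H_{\mathrm{Z}})+k=n-\rk(\6H_{\mathrm{X}})=\dim\ker(\6H_{\mathrm{X}})$.

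Next I identify pure logicals. Taking $\6u=\60$, a pure $Z$-type logical is exactly a $\6v\in C_{\mathrm{X}}\setminus C_{\mathrm{Z}}^\perp$, with weight $\wt{\6v}$; its minimum is $d_{\mathrm{X}\setminus\mathrm{Z}}$. Symmetrically, the minimum over pure $X$-type logicals is $d_{\mathrm{Z}\setminus\mathrm{X}}$. The inequalities $\delta_{\mathrm{X}}\le d_{\mathrm{X}\setminus\mathrm{Z}}$ and $\delta_{\mathrm{Z}}\le d_{\mathrm{Z}\setminus\mathrm{X}}$ are immediate from Definition~\ref{def:d_coset}, since $\60\in C_{\mathrm{Z}}^\perp$ makes the coset minimum a minimum over a subset of $C_{\mathrm{X}}\setminus\{\60\}$.

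The main step is $d=\min\{d_{\mathrm{Z}\setminus\mathrm{X}},d_{\mathrm{X}\setminus\mathrm{Z}}\}$. The bound ``$\le$'' holds because pure logicals are logicals. For ``$\ge$'', take a logical $[\6u|\6v]$ with $\6u\in C_{\mathrm{Z}}$, $\6v\in C_{\mathrm{X}}$, and not both $\6u\in C_{\mathrm{X}}^\perp$ and $\6v\in C_{\mathrm{Z}}^\perp$. Say $\6v\notin C_{\mathrm{Z}}^\perp$. Then $[\60|\6v]$ is itself a pure $Z$ logical, and by~\eqref{eq:weight_identity} $\text{wt}(\6P)=\wt{\6u\vee\6v}\ge\wt{\6v}\ge d_{\mathrm{X}\setminus\mathrm{Z}}$. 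The other case is symmetric. The point I would take most care over is exactly this decoupling: the normalizer and the stabilizer both factor as products, so failing to be a stabilizer is witnessed by one component alone, and the OR-weight dominates each component.

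The same argument without the coset condition gives $\delta$. Any nonzero $[\6u|\6v]$ in the normalizer has a nonzero component, which is a codeword of $C_{\mathrm{Z}}$ or of $C_{\mathrm{X}}$ of weight at most $\wt{\6u\vee\6v}$. Pure codewords attain the bound, so $\delta=\min\{\delta_{\mathrm{X}},\delta_{\mathrm{Z}}\}$. Finally, $d\ge\delta$ follows from the two componentwise inequalities.
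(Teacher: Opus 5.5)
Your proof is correct and complete. It rests on the factorization of the normalizer as $C_{\mathrm{Z}}\times C_{\mathrm{X}}$ and of the stabilizer as $C_{\mathrm{X}}^\perp\times C_{\mathrm{Z}}^\perp$, the dimension count for the kernel identities via $\6L_{\mathrm{Z}}\6L_{\mathrm{X}}^\top=\6I$, and the bound $\wt{\6u\vee\6v}\ge\max\{\wt{\6u},\wt{\6v}\}$; the paper gives no proof of its own and simply defers to the standard CSS literature (Calderbank--Shor), whose standard decoupling argument is what you have written out.
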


See, e.g.,~\cite{Calderbank1996} for a proof.

\begin{Def} [Self-Dual \ac{CSS} Code]
A \ac{CSS} code $\mathcal{C}(C_{\mathrm{X}}, C_{\mathrm{Z}})$ is called self-dual (SD) if $C_{\mathrm{X}} = C_{\mathrm{Z}}$.
\end{Def}

\subsection{Error Model and Syndromes}
In this paper, we benchmark our codes using a \emph{code capacity noise model} which is based, specifically, on the \emph{depolarizing channel}. 
In this quantum channel, each qubit is affected by a $\6X$, $\6Z$, or $\6Y$ Pauli operator with probability $p/3$ each, and remains unaffected with probability $1 - p$, where $p$ is the \emph{depolarizing probability}. The syndromes associated to these errors are obtained as follows.

\begin{Def} [Syndrome Extraction]
Consider a stabilizer code with associated \ac{PCM} $\6H_\mathcal{S} = [\, \6H_{\mathrm{X}} \,\vert\, \6H_{\mathrm{Z}}\,] \in \mathbb{F}_2^{m\times2n}$ and a Pauli error with binary representation $\6e^\top=[\,\6e_{\mathrm{X}}^\top \,\vert\, \6e_{\mathrm{Z}}^\top\,] \in \mathbb{F}_2^{2n}$. 
The obtained \emph{syndrome vector} $\6s\in \mathbb{F}_2^{m}$ is
\begin{align}
    \6s = \6H_{\mathrm{X}}\6e_{\mathrm{Z}} + \6H_{\mathrm{Z}}\6e_{\mathrm{X}} .
\end{align}
\end{Def}

Note that with the block-diagonal structure of the \ac{PCM} of \ac{CSS} codes the syndrome outcome decouples as
\begin{align}
    \label{eq:CSS_syn}
    \6s_{\mathrm{X}} = \6H_{\mathrm{X}}\6e_{\mathrm{Z}}, \quad \6s_{\mathrm{Z}} = \6H_{\mathrm{Z}}\6e_{\mathrm{X}},
\end{align}
so that $\sym{X}$- and $\sym{Z}$-type errors can be corrected separately. This does not hold for the quantum XYZ stabilizer codes considered here, motivating the decoding strategy described in Section~\ref{subsec:deco_XYZ_codes}.

\subsection{Quantum LDPC Codes}
\label{subsec:LDPC_codes}

A quantum stabilizer code is referred to as \ac{QLDPC} when its \ac{PCM} is sparse. 
Its Tanner graph~\cite{Tanner1981} contains one variable node per qubit and one check node per stabilizer generator, with an edge whenever the generator acts nontrivially on the corresponding qubit. 
For non-\ac{CSS} codes, each edge is additionally labeled by its Pauli type. We denote the uniform stabilizer weight by $w_{\text{r}}$ (which is the row weight of the corresponding \ac{PCM}) and the girth of the Tanner graph associated with $\6H_\sigma$ by $g_\sigma$, with $\sigma \in \{ X, Y, Z \}$.

\section{Quantum XYZ Stabilizer Codes}
\label{sec:quantum_XYZ_codes}

In this Section, we introduce a novel framework for non-\ac{CSS} stabilizer codes, which includes an additional \ac{PCM} associated with stabilizers composed exclusively of $\sym{I}$- and $\sym{Y}$-type operators.  
\begin{Def} [Quantum XYZ Stabilizer Codes]
\label{def:XYZ_codes}
    Let us consider three classical linear codes
    $C_{\mathrm{X}}[n, k_{\mathrm{X}}, \delta_{\mathrm{X}}]$, $C_{\mathrm{Y}}[n, k_{\mathrm{Y}}, \delta_{\mathrm{Y}}]$, and $C_{\mathrm{Z}}[n, k_{\mathrm{Z}}, \delta_{\mathrm{Z}}]$ such that $C_{\mathrm{Y}}^\perp \subseteq C_{\mathrm{X}}$, $C_{\mathrm{Z}}^\perp \subseteq C_{\mathrm{Y}}$, and $C_{\mathrm{X}}^\perp \subseteq C_{\mathrm{Z}}$.
    These properties can be expressed in terms of \acp{PCM} representing such classical codes, using the following orthogonality conditions
    \begin{align} \label{eq:prod_XYZ}
        \6H_{\mathrm{X}} \6H_{\mathrm{Y}}^\top = \60, \quad \6H_{\mathrm{Y}} \6H_{\mathrm{Z}}^\top = \60, \quad \6H_{\mathrm{Z}} \6H_{\mathrm{X}}^\top = \60,
    \end{align}
    where $\6H_{\sigma} \in \mathbb{F}_2^{m_{\sigma} \times n}$, with $\sigma \in \{ X, Y, Z \}$.
    Under these conditions, we obtain a stabilizer code, described by the following \ac{PCM}
    \begin{align} \label{eq:PCM_XYZ}
        \6H_{\mathrm{XYZ}} := 
        \begin{bmatrix} [c|c]
            \6H_{\mathrm{X}} & \60 \\
            \60   & \6H_{\mathrm{Z}} \\
            \6H_{\mathrm{Y}} & \6H_{\mathrm{Y}}
        \end{bmatrix}
        \in \mathbb{F}_2^{(m_{\mathrm{X}} + m_{\mathrm{Y}} + m_{\mathrm{Z}}) \times 2n}.
    \end{align}
    We call it \emph{quantum XYZ stabilizer code} or simply \emph{XYZ code}, 
    and we denote it by $\mathcal{C}_{\mathrm{XYZ}}(C_{\mathrm{X}}, C_{\mathrm{Y}}, C_{\mathrm{Z}})\llbracket n, k, d \rrbracket$, or with the shorthand notations $\mathcal{C}_{\mathrm{XYZ}}(C_{\mathrm{X}}, C_{\mathrm{Y}}, C_{\mathrm{Z}})$ or $\mathcal{C}_{\mathrm{XYZ}}\llbracket n, k, d \rrbracket$, where the dimension is $k=n-\rk{(\6H_{\mathrm{XYZ}}})$.
\end{Def}

\subsection{Logical Operators and Syndromes of XYZ Codes} \label{subsec:deco_XYZ_codes}

Observe that \ac{CSS} codes can be chosen to have \ac{PCM} consisting only of $\sym{X}$- and $\sym{Z}$-type operators as in~\eqref{eq:CSS_PCM} and \emph{logical} generators consisting only of $\sym{X}$- and $\sym{Z}$-type operators as in~\eqref{eq:CSS_logical}. 
For quantum XYZ \emph{stabilizer} codes, the \ac{PCM} is extended by adding $\sym{Y}$-type stabilizers to~\eqref{eq:CSS_PCM}.\footnote{In principle, one could also consider quantum XYZ \emph{generator} codes, obtained by adding $Y$-type logical operators in~\eqref{eq:CSS_logical}, but these will not be considered in the present work.}
For XYZ codes, unlike in the \ac{CSS} case, representatives of the logical Pauli generators need not be $\sym{X}$-type or $\sym{Z}$-type up to identities; in general, they may mix $\sym{X}$, $\sym{Y}$, and $\sym{Z}$ components, as in~\eqref{eq:logical_matrix}.
Furthermore, for XYZ codes
the syndrome vector $\6s$ is given by the horizontal stack of
\begin{align}
    \6s_{\mathrm{X}} = \6H_{\mathrm{X}} \6e_{\mathrm{Z}}, \
    \6s_{\mathrm{Z}} = \6H_{\mathrm{Z}} \6e_{\mathrm{X}}, \
    \text{and} \
    \6s_{\mathrm{Y}} = \6H_{\mathrm{Y}} (\6e_{\mathrm{X}} + \6e_{\mathrm{Z}}).
\end{align}
Unlike in \ac{CSS} codes, $\sym{X}$- and $\sym{Z}$-type errors are mixed in $\6s_{\mathrm{Y}}$ and thus they must be decoded jointly.

\subsection{Matrix Decomposition of XYZ Codes} 
\label{subsec:deco_XYZ}

Notice that a $\sym{Y}$-type operator can be expressed as the sum of $X$- and $Z$-type operators. 
Therefore, the $\sym{Y}$-type checks that can be obtained from $X$- and $Z$-type stabilizers are redundant. 
These operators produce what we call the \emph{reducible}-XYZ component; the complementary part cannot be reduced to a CSS code and will be denoted as the \emph{irreducible}-XYZ component. 
More precisely, for an XYZ code $\mathcal{C}_{\mathrm{XYZ}}(C_{\mathrm{X}}, C_{\mathrm{Y}}, C_{\mathrm{Z}})$ having PCM as in~\eqref{eq:PCM_XYZ} we give some definitions. 
In particular, the following decomposition isolates the part of the $Y$-type checks that is already generated by the $X$- and $Z$-type checks. 

\begin{Def}[Reducible and Irreducible XYZ Components]
    Write $C_{\sigma}^{\perp} := \rs(\6H_{\sigma})$ for $\sigma\in\{\sym{X},\sym{Y},\sym{Z}\}$, then let $\6H_{\mathrm{X} \cap \mathrm{Y}}$, $\6H_{\mathrm{Y} \cap \mathrm{Z}}$, and $\6H_{\mathrm{X} \cap \mathrm{Z}}$ be any choice of full-rank binary matrices such that
    \begin{subequations}
    \begin{align}
        \rs(\6H_{\mathrm{X} \cap \mathrm{Y}}) 
        &= C_{\mathrm{X}}^{\perp} \cap C_{\mathrm{Y}}^{\perp}, 
        \\
        \rs(\6H_{\mathrm{Y} \cap \mathrm{Z}}) &= C_{\mathrm{Y}}^{\perp} \cap C_{\mathrm{Z}}^{\perp}, \\
        \rs(\6H_{\mathrm{X} \cap \mathrm{Z}}) 
        &= C_{\mathrm{X}}^{\perp} \cap C_{\mathrm{Z}}^{\perp}.   
    \end{align}    
    \end{subequations}
    Next, define $\6H_{\mathrm{X}}'$, $\6H_{\mathrm{Y}}'$, and $\6H_{\mathrm{Z}}'$ as any matrices such that
    \begin{subequations}
    \begin{align}
        \rs(\6H_{\mathrm{X}})
        & = 
        \rs(\6H_{\mathrm{X}}') 
        \oplus
        \rs\!
        \begin{bmatrix} [c]
        \6H_{\mathrm{X} \cap \mathrm{Y}}\\
        \6H_{\mathrm{X} \cap \mathrm{Z}}
        \end{bmatrix}, 
        \\
        \rs(\6H_{\mathrm{Y}})
        & =
        \rs(\6H_{\mathrm{Y}}')
        \oplus
        \rs\!
        \begin{bmatrix} [c]
        \6H_{\mathrm{X} \cap \mathrm{Y}}\\
        \6H_{\mathrm{Y} \cap \mathrm{Z}}
        \end{bmatrix}, \\
        \rs(\6H_{\mathrm{Z}})
        & = 
        \rs(\6H_{\mathrm{Z}}') 
        \oplus
        \rs\!
        \begin{bmatrix} [c]
        \6H_{\mathrm{X} \cap \mathrm{Z}}\\
        \6H_{\mathrm{Y} \cap \mathrm{Z}}
        \end{bmatrix}, 
    \end{align}    
    \end{subequations}
    which are direct sums of linearly independent subspaces. We then say that any matrix $\6H_{\mathrm{XYZ}}'$ such that
    \begin{align}
    \label{eq:irred}
        \rs 
        (\6H_{\mathrm{XYZ}}') 
        = \rs\!
        \begin{bmatrix} [c|c]
            \6H_{\mathrm{X}}' & \60 \\
            \60 & \6H_{\mathrm{Z}}' \\
            \6H_{\mathrm{Y}}' & \6H_{\mathrm{Y}}'
        \end{bmatrix} 
    \end{align}
    is a generator of the \emph{irreducible} XYZ component of the code. 
    Furthermore, we say that any matrix $\6H_{\cap}$ such that
    \begin{align}
    \label{eq:red}
        \rs 
        (\6H_\cap)
        = \rs\!
        \begin{bmatrix}
            \6H_{\mathrm{X} \cap \mathrm{Y}} \\
            \6H_{\mathrm{Y} \cap \mathrm{Z}} \\
            \6H_{\mathrm{X} \cap \mathrm{Z}}
        \end{bmatrix} 
        =:
        C_\cap^{\perp} 
    \end{align}
    is a generator of the \emph{reducible} XYZ component of the code.
\end{Def}

By exploiting the above definition, the \ac{PCM} of an XYZ code can be rewritten as: 
\begin{align}
\label{eq:red_irred}
    \6H_{\mathrm{XYZ}} 
    \rsim
    \begin{bmatrix} [c|c]
        \6H_{\mathrm{X}}' & \60 \\
        \6H_{\mathrm{X} \cap \mathrm{Y}} & \60 \\
        \6H_{\mathrm{X} \cap \mathrm{Z}} & \60 \\
        \60 & \6H_{\mathrm{Z}}' \\
        \60 & \6H_{\mathrm{Y} \cap \mathrm{Z}} \\
        \60 & \6H_{\mathrm{X} \cap \mathrm{Z}} \\
        \6H_{\mathrm{Y}}' & \6H_{\mathrm{Y}}' \\
        \6H_{\mathrm{Y} \cap \mathrm{Z}} & \6H_{\mathrm{Y} \cap \mathrm{Z}} \\
        \6H_{\mathrm{X} \cap \mathrm{Y}} & \6H_{\mathrm{X} \cap \mathrm{Y}}
    \end{bmatrix}
    \rsim
    \left[ \begin{array} {c|c}
        \6H_{\mathrm{X}}' & \60 \\
        \60 & \6H_{\mathrm{Z}}' \\
        \6H_{\mathrm{Y}}' & \6H_{\mathrm{Y}}' \\
        \hline
        \6H_\cap & \60 \\
        \60 & \6H_\cap
    \end{array} \right]
    .
\end{align}
Notice that the two lowest blocks contain the same matrix $\6H_\cap$ in both the $\sym{X}$ and the $\sym{Z}$ components, so these can be interpreted simultaneously as $\sym{X}$-, $\sym{Y}$-, and $\sym{Z}$-type checks. 
In contrast, the row spans of the matrices $\6H_{\sigma}'$ are mutually linearly independent and each generates $\sigma$-type checks separately, meaning that, e.g., $Y$-type checks cannot be obtained by combining together $\sym{X}$-type checks generated by $\6H_{\mathrm{X}}'$ and  $\sym{Z}$-type checks generated by $\6H_{\mathrm{Z}}'$. This matrix decomposition will be useful to identify when an XYZ code is \emph{genuine} and to study the minimum distance bounds of XYZ codes, respectively in Section~\ref{subsec:gen_XYZ_codes} and in Section~\ref{sec:bounds_dmin}.

\subsection{Genuine XYZ Codes} 
\label{subsec:gen_XYZ_codes}

In this section, we explore the connection between \ac{CSS} and XYZ codes. 
Although the \ac{PCM} of an XYZ code is built from three Pauli types, it may be a \ac{CSS} code in disguise, since it is possible that its stabilizer group is generated using only two Pauli types. 
This could be done either directly or after the application of a \ac{LC} unitary, i.e., a Clifford gate that acts on each single-qubit as a relabeling of the Pauli operators. These notions are formalized as follows.

\begin{Def}[Genuine XYZ Codes] 
\label{def:genuine_XYZ}
    The code $\mathcal{C}_{\mathrm{XYZ}}$ is called \emph{genuine} if it is not a \ac{CSS} code, and \emph{non-genuine} otherwise. Equivalently, the code is non-genuine if it admits a set of stabilizer generators that consists exclusively of pure $\sym{X}$- and $\sym{Z}$-type operators.
\end{Def}

\begin{Def}[Genuine XYZ Codes under Uniform Pauli Relabeling]
\label{def:genuine_XYZ_uniform}
    The code $\mathcal{C}_{\mathrm{XYZ}}$ is called \emph{genuine under uniform Pauli relabeling} if, for every single-qubit Clifford unitary $U$ (acting as a permutation of the $\{\sym{X}, \sym{Y}, \sym{Z}\}$ labels), $U^{\otimes n}\mathcal{C}_{\mathrm{XYZ}}$ is not a \ac{CSS} code, and is called \emph{non-genuine} otherwise. 
    Equivalently, a non-genuine XYZ code under uniform Pauli relabeling admits a set of stabilizer generators that consists exclusively of pure operators of $\sym{X}$- and $\sym{Y}$-type, or $\sym{Y}$- and $\sym{Z}$-type, or $\sym{X}$- and $\sym{Z}$-type.
\end{Def}

\begin{Def}[Genuine XYZ Codes under Local Pauli Relabeling]
\label{def:genuine_XYZ_local}
    The code $\mathcal{C}_{\mathrm{XYZ}}$ is called \emph{genuine under local Pauli relabeling} if, for all single-qubit Clifford unitaries $U_1, \dots, U_n$, the code $(U_1 \otimes \dots \otimes U_n)\mathcal{C}_{\mathrm{XYZ}}$ is not a \ac{CSS} code, and is called \emph{non-genuine} otherwise. 
    Equivalently, a non-genuine XYZ code under local Pauli relabeling is \ac{LC}-equivalent to a \ac{CSS} code, i.e., it can be transformed into it by a \ac{LC} unitary.
\end{Def}

With the next theorem, we show a necessary and sufficient condition for an XYZ code to be a \ac{CSS} code.

\begin{The}
\label{theo:gen_XYZ_codes}
    Let $\mathcal{C}_{\mathrm{XYZ}}(C_{\mathrm{X}}, C_{\mathrm{Y}}, C_{\mathrm{Z}})$ be an XYZ code. The code is \ac{CSS} (and thus non-genuine) iff
    \begin{align}
    \label{eq:CSS_eq_0}
        C_{\mathrm{Y}}^{\perp}
        =
        \left(C_{\mathrm{X}}^{\perp}\cap C_{\mathrm{Y}}^{\perp}\right)
        +
        \left(C_{\mathrm{Y}}^{\perp}\cap C_{\mathrm{Z}}^{\perp}\right).
    \end{align}
    In this case $\mathcal{C}_{\mathrm{XYZ}}(C_{\mathrm{X}}, C_{\mathrm{Y}}, C_{\mathrm{Z}})$ is a \ac{CSS} code with $\sym{X}$-type checks spanning $C_{\mathrm{X}}^\perp+(C_{\mathrm{Y}}^{\perp}\cap C_{\mathrm{Z}}^{\perp})$ and $\sym{Z}$-type checks spanning $C_{\mathrm{Z}}^\perp + (C_{\mathrm{X}}^{\perp}\cap C_{\mathrm{Y}}^{\perp})$. 
    Condition~\eqref{eq:CSS_eq_0} is equivalent to
    \begin{align}
    \label{eq:CSS_rank_condition}
        \rk\left(\6H_{\mathrm{XYZ}}\right)
        =
        \rk\!\begin{bmatrix}\6H_{\mathrm{X}}\\\6H_{\mathrm{Y}}\end{bmatrix}
        + \rk\!\begin{bmatrix}\6H_{\mathrm{Y}}\\\6H_{\mathrm{Z}}\end{bmatrix},
    \end{align}
    which can be numerically verified efficiently.
\end{The}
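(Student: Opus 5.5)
We work throughout in the binary picture. Let $S:=\rs(\6H_{\mathrm{XYZ}})\subseteq\mathbb F_2^{2n}$, and write a vector as $[\6u^\top|\6v^\top]$. The code is CSS iff $S=(S\cap(\mathbb F_2^n\times\{\60\}))\oplus(S\cap(\{\60\}\times\mathbb F_2^n))$, i.e., iff $S$ is generated by its pure-$X$ and pure-$Z$ elements. The plan is to compute these two intersections explicitly and compare dimensions.

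\textbf{Step 1 (describe $S$).} We have $S=\{[\6a+\6c\,|\,\6b+\6c] : \6a\in C_{\mathrm{X}}^\perp,\ \6b\in C_{\mathrm{Z}}^\perp,\ \6c\in C_{\mathrm{Y}}^\perp\}$.

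\textbf{Step 2 (pure-$X$ part).} A vector $[\6a+\6c\,|\,\6b+\6c]$ is pure $X$ exactly when $\6b=\6c$. This forces $\6c\in C_{\mathrm{Y}}^\perp\cap C_{\mathrm{Z}}^\perp$, so
\[
S_X:=S\cap(\mathbb F_2^n\times\{\60\})=\big(C_{\mathrm{X}}^\perp+(C_{\mathrm{Y}}^\perp\cap C_{\mathrm{Z}}^\perp)\big)\times\{\60\}.
\]
Symmetrically, the pure-$Z$ part is $S_Z=\{\60\}\times\big(C_{\mathrm{Z}}^\perp+(C_{\mathrm{X}}^\perp\cap C_{\mathrm{Y}}^\perp)\big)$. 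This already yields the stated description of the $X$- and $Z$-type checks whenever the code is CSS.

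\textbf{Step 3 (the iff).} Sufficiency: if \eqref{eq:CSS_eq_0} holds, split $\6c=\6c_1+\6c_2$ with $\6c_1\in C_{\mathrm{X}}^\perp\cap C_{\mathrm{Y}}^\perp$ and $\6c_2\in C_{\mathrm{Y}}^\perp\cap C_{\mathrm{Z}}^\perp$. Then $[\6c|\6c]=[\6c_2|\60]+[\6c_1|\60]+[\60|\6c_1]+[\60|\6c_2]$, where the first and last terms are justified by $\6c_2\in C_{\mathrm{Y}}^\perp\cap C_{\mathrm{Z}}^\perp$ and the middle two by $\6c_1\in C_{\mathrm{X}}^\perp\cap C_{\mathrm{Y}}^\perp$. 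Hence every $Y$-check lies in $S_X+S_Z$.

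Necessity: suppose $S=S_X\oplus S_Z$. Then for $\6c\in C_{\mathrm{Y}}^\perp$ we can write $[\6c|\6c]=[\6a+\6c_2|\60]+[\60|\6b+\6c_1]$, with $\6a\in C_{\mathrm{X}}^\perp$, $\6c_2\in C_{\mathrm{Y}}^\perp\cap C_{\mathrm{Z}}^\perp$, $\6b\in C_{\mathrm{Z}}^\perp$, and $\6c_1\in C_{\mathrm{X}}^\perp\cap C_{\mathrm{Y}}^\perp$. Comparing halves gives $\6c=\6a+\6c_2=\6b+\6c_1$. Consequently $\6a=\6c+\6c_2\in C_{\mathrm{Y}}^\perp$, so $\6a\in C_{\mathrm{X}}^\perp\cap C_{\mathrm{Y}}^\perp$, and $\6c=\6a+\6c_2$ lies in the right-hand side of \eqref{eq:CSS_eq_0}. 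This step is the crux, but it is short once the intersections are known. Note that ``admits a generating set of pure $X$ and $Z$ operators'' is exactly $S=S_X+S_Z$, because the sum is automatically direct.

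\textbf{Step 4 (rank criterion).} In general $S\supseteq S_X\oplus S_Z$, so the code is CSS iff $\dim S=\dim S_X+\dim S_Z$. The identity map $[\6u|\6v]\mapsto[\6u|\6u+\6v]$ is an invertible change of coordinates under which the $X$-block rows stay in $\rs[\6H_{\mathrm{X}}|\6H_{\mathrm{X}}]$, the $Y$-rows become $[\6H_{\mathrm{Y}}|\60]$, and the $Z$-rows stay $[\60|\6H_{\mathrm{Z}}]$. A cleaner route is the dimension formula: $\dim S_X=\dim(C_{\mathrm{X}}^\perp+C_{\mathrm{Y}}^\perp\cap C_{\mathrm{Z}}^\perp)$. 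For this we would use the orthogonality relations in the form $C_{\mathrm{X}}^\perp\cap(C_{\mathrm{Y}}^\perp\cap C_{\mathrm{Z}}^\perp)=C_{\cap}$-type terms. The main obstacle we expect is to show directly that $\dim S_X+\dim S_Z$ equals $\rk[\6H_{\mathrm{X}};\6H_{\mathrm{Y}}]+\rk[\6H_{\mathrm{Y}};\6H_{\mathrm{Z}}]$ exactly when \eqref{eq:CSS_eq_0} holds.

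We would instead prove it as follows. Note that $S_Z\oplus\big(\rs[\6H_{\mathrm{X}};\6H_{\mathrm{Y}}]\big)$ viewed via the projection $\pi:[\6u|\6v]\mapsto\6u$ works as follows: $\ker\pi|_S=S_Z$ and $\pi(S)=C_{\mathrm{X}}^\perp+C_{\mathrm{Y}}^\perp$. Hence
\[
\rk\6H_{\mathrm{XYZ}}=\rk\begin{bmatrix}\6H_{\mathrm{X}}\\\6H_{\mathrm{Y}}\end{bmatrix}+\dim S_Z .
\]
Therefore \eqref{eq:CSS_rank_condition} is equivalent to $\dim S_Z=\dim(C_{\mathrm{Y}}^\perp+C_{\mathrm{Z}}^\perp)$. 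Since $S_Z$ corresponds to the subspace $C_{\mathrm{Z}}^\perp+(C_{\mathrm{X}}^\perp\cap C_{\mathrm{Y}}^\perp)\subseteq C_{\mathrm{Y}}^\perp+C_{\mathrm{Z}}^\perp$, equality of dimensions means $C_{\mathrm{Y}}^\perp\subseteq C_{\mathrm{Z}}^\perp+(C_{\mathrm{X}}^\perp\cap C_{\mathrm{Y}}^\perp)$. By the modular law (valid because $C_{\mathrm{X}}^\perp\cap C_{\mathrm{Y}}^\perp\subseteq C_{\mathrm{Y}}^\perp$), this is $C_{\mathrm{Y}}^\perp=(C_{\mathrm{Y}}^\perp\cap C_{\mathrm{Z}}^\perp)+(C_{\mathrm{X}}^\perp\cap C_{\mathrm{Y}}^\perp)$, which is \eqref{eq:CSS_eq_0}. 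Finally, computing these ranks requires only Gaussian elimination, which gives the efficient numerical check.
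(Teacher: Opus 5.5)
Your proof is correct, but it takes a more direct route than the paper. The paper works through its reducible/irreducible decomposition. It first notes that \eqref{eq:CSS_eq_0} is equivalent to $\mathbf{H}_{\mathrm{Y}}'=\mathbf{0}$ and gets sufficiency from the block form \eqref{eq:red_irred}. For necessity it uses the modular law to show that $N_{\mathrm{X}}$ and $N_{\mathrm{Z}}$ meet $\mathrm{row}(\mathbf{H}_{\mathrm{Y}}')$ trivially, which gives the splitting $\mathrm{rk}(\mathbf{H}_{\mathrm{XYZ}})=\dim W_{\mathrm{X}}+\dim W_{\mathrm{Z}}+\mathrm{rk}(\mathbf{H}_{\mathrm{Y}}')$. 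It then uses both projections $\pi_{\mathrm{X}},\pi_{\mathrm{Z}}$ to obtain $\mathrm{rk}(\mathbf{H}_{\mathrm{Y}}')=\mathrm{rk}[\mathbf{H}_{\mathrm{X}};\mathbf{H}_{\mathrm{Y}}]+\mathrm{rk}[\mathbf{H}_{\mathrm{Y}};\mathbf{H}_{\mathrm{Z}}]-\mathrm{rk}(\mathbf{H}_{\mathrm{XYZ}})$.

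You never introduce the complement $\mathbf{H}_{\mathrm{Y}}'$. You read off $S_X$ and $S_Z$ directly from the parametrization $[\mathbf a+\mathbf c\,|\,\mathbf b+\mathbf c]$ and prove both directions of the equivalence by explicit element manipulation. Your necessity argument is an element-wise version of the modular-law step and only needs the $X$-half of $[\mathbf c|\mathbf c]$. You then get the rank criterion from a single projection: $\dim S=\mathrm{rk}[\mathbf{H}_{\mathrm{X}};\mathbf{H}_{\mathrm{Y}}]+\dim S_Z$ reduces \eqref{eq:CSS_rank_condition} to $C_{\mathrm{Z}}^\perp+(C_{\mathrm{X}}^\perp\cap C_{\mathrm{Y}}^\perp)=C_{\mathrm{Y}}^\perp+C_{\mathrm{Z}}^\perp$, and the modular law finishes.

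The paper's route buys a quantitative reading: the rank defect counts the independent irreducible $Y$-checks, which ties the theorem to the decomposition reused in Section~\ref{sec:bounds_dmin}. Your Step~3 already contains this with one more line. Every $[\mathbf a+\mathbf c\,|\,\mathbf b+\mathbf c]$ is congruent to $[\mathbf c|\mathbf c]$ modulo $S_X\oplus S_Z$, so $\mathbf c\mapsto[\mathbf c|\mathbf c]$ maps $C_{\mathrm{Y}}^\perp$ onto $S/(S_X\oplus S_Z)$. Your two computations show its kernel is exactly $(C_{\mathrm{X}}^\perp\cap C_{\mathrm{Y}}^\perp)+(C_{\mathrm{Y}}^\perp\cap C_{\mathrm{Z}}^\perp)$.

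In a write-up, drop the abandoned opening of Step~4. The coordinate change there is not an ``identity map'', and the remark about ``$C_\cap$-type terms'' is not meaningful as stated. Your final argument does not use either.
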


The rank condition~\eqref{eq:CSS_rank_condition} can also be obtained directly from~\cite[Lemma~7.4]{cross2025small}.

\begin{IEEEproof}
By construction we have
\begin{align}
    C_{\mathrm{Y}}^{\perp} 
    & = \rs(\6H_{\mathrm{Y}}') \oplus [\rs(\6H_{\mathrm{X}\cap\mathrm{Y}}) + \rs(\6H_{\mathrm{Y}\cap\mathrm{Z}})] \nonumber\\
    & = \rs(\6H_{\mathrm{Y}}') \oplus [(C_{\mathrm{X}}^{\perp} \cap C_{\mathrm{Y}}^{\perp}) + (C_{\mathrm{Y}}^{\perp}\cap C_{\mathrm{Z}}^{\perp})].
\end{align}
Hence, the condition $C_{\mathrm{Y}}^{\perp} =(C_{\mathrm{X}}^{\perp} \cap C_{\mathrm{Y}}^{\perp}) + (C_{\mathrm{Y}}^{\perp} \cap C_{\mathrm{Z}}^{\perp})$ is equivalent to $\6H_{\mathrm{Y}}'=\60$.

If $\6H_{\mathrm{Y}}'=\60$, Eq.~\eqref{eq:red_irred} directly shows that the code is \ac{CSS}. 

We now show, conversely, that if the code is \ac{CSS} then $\6H_{\mathrm{Y}}'=\60$. We consider the spaces
\begin{align}
    N_{\mathrm{X}} 
    & := C_{\mathrm{X}}^{\perp}+(C_{\mathrm{Y}}^{\perp}\cap C_{\mathrm{Z}}^{\perp}) =
    \rs(\6H_{\mathrm{X}}')+\rs(\6H_{\cap}), \\
    N_{\mathrm{Z}}
    & := C_{\mathrm{Z}}^{\perp}+(C_{\mathrm{X}}^{\perp}\cap C_{\mathrm{Y}}^{\perp})
    = \rs(\6H_{\mathrm{Z}}')+\rs(\6H_{\cap}).
\end{align}
We then apply the \emph{modular law} $(D + E) \cap F = D + (E \cap F)$, which holds for vector spaces $D \subseteq F$. Taking $D = C_{\mathrm{Y}}^{\perp} \cap C_{\mathrm{Z}}^{\perp}$, $E = C_{\mathrm{X}}^{\perp}$, and $F = C_{\mathrm{Y}}^{\perp}$ we get
\begin{align} 
    N_{\mathrm{X}}\cap\rs(\6H_{\mathrm{Y}}')
    & \subseteq 
    N_{\mathrm{X}} \cap C_{\mathrm{Y}}^{\perp} \nonumber\\
    & = 
    (C_{\mathrm{Y}}^{\perp}\cap C_{\mathrm{Z}}^{\perp}) + (C_{\mathrm{X}}^{\perp}\cap C_{\mathrm{Y}}^{\perp}).
\end{align}
Since $(C_{\mathrm{Y}}^{\perp}\cap C_{\mathrm{Z}}^{\perp}) + (C_{\mathrm{X}}^{\perp}\cap C_{\mathrm{Y}}^{\perp})$ intersects $\rs(\6H_{\mathrm{Y}}')$ only in $\60$, we get $N_{\mathrm{X}} \cap \rs(\6H_{\mathrm{Y}}') = \{\60\}$. Similarly we obtain $N_{\mathrm{Z}} \cap \rs(\6H_{\mathrm{Y}}') = \{\60\}$.
The row span $W := \rs(\6H_{\mathrm{XYZ}})$ can then be written as
\begin{align}
    W & = 
    \left\{ 
    \vstack{\6x+\6y}{\6z+\6y}
    \,\middle|\, 
    \6x \in N_{\mathrm{X}}, \6z\in N_{\mathrm{Z}}, \6y \in \rs(\6H_{\mathrm{Y}}') \right\}, \\
    W_{\mathrm{X}}
    & :=
    W\cap\left(\mathbb{F}_2^{n}\times\{\60\}\right),
    \
    W_{\mathrm{Z}}
    := 
    W\cap\left(\{\60\}\times\mathbb{F}_2^{n}\right).
\end{align}
A vector belongs to $W_{\mathrm{X}}$ (i.e., it is of $\sym{X}$-type) iff $\6y+\6z=\60$, that is, $\6y=\6z \in N_{\mathrm{Z}}\cap\rs(\6H_{\mathrm{Y}}')=\{\60\}$, which implies $\6y=\6z=\60$. Similarly, a vector in $W$ is of $\sym{Z}$-type iff $\6x=\6y =\60$. Moreover, the $\sym{X}$, $\sym{Y}$, and $\sym{Z}$ components are independent, since
\begin{align}
\label{eq:linear_indip}
    \vstack{\6x}{\60} +
    \vstack{\60}{\6z} +
    \vstack{\6y}{\6y} =
    \vstack{\60}{\60}
\end{align}
implies $\6y = \6x\in N_{\mathrm{X}} \cap \rs(\6H_{\mathrm{Y}}') = \{\60\}$, which forces $\6x=\6y=\6z=\60$. Consequently we have:
\begin{align}
\label{eq:rk_split}
    \rk(\6H_{\mathrm{XYZ}}) = \dim(W_{\mathrm{X}}) + \dim(W_{\mathrm{Z}}) + \rk(\6H_{\mathrm{Y}}').
\end{align}
Since $W_{\mathrm{X}} \oplus W_{\mathrm{Z}}\subseteq W$ always holds, the code is \ac{CSS} iff $W = W_{\mathrm{X}}\oplus W_{\mathrm{Z}}$, i.e.\ iff $\rk(\6H_{\mathrm{XYZ}}) = \dim(W_{\mathrm{X}}) + \dim(W_{\mathrm{Z}})$,
which by~\eqref{eq:rk_split} is equivalent to $\6H_{\mathrm{Y}}'=\60$. This proves the converse implication.

To prove the rank condition~\eqref{eq:CSS_rank_condition} we consider the projection $\pi_{\mathrm{X}}: W \to \mathbb{F}_2^{n}$ onto the first $n$ components of the vector space; it has image $C_{\mathrm{X}}^{\perp}+C_{\mathrm{Y}}^{\perp}$ and kernel $W_{\mathrm{Z}}$, thus
\begin{align}
    \dim(N_{\mathrm{Z}})
    & =
    \dim(W_{\mathrm{Z}}) = 
    \rk(\6H_{\mathrm{XYZ}}) - 
    \rk\!
    \begin{bmatrix}[c]
        \6H_{\mathrm{X}}\\
        \6H_{\mathrm{Y}}
    \end{bmatrix} 
\end{align}
and a similar expression holds using the projection $\pi_{\mathrm{Z}}$ on the last $n$ components. Substituting into~\eqref{eq:rk_split} yields
\begin{align}
    \rk(\6H_{\mathrm{Y}}')
    = 
    \rk\!
    \begin{bmatrix}[c]
        \6H_{\mathrm{X}}\\
        \6H_{\mathrm{Y}}
    \end{bmatrix}
    +
    \rk\!
    \begin{bmatrix}[c]
        \6H_{\mathrm{Y}}\\
        \6H_{\mathrm{Z}}
    \end{bmatrix}
    - 
    \rk(\6H_{\mathrm{XYZ}}),
\end{align}
and thus $\rk(\6H_{\mathrm{Y}}') = 0$ is equivalent to condition~\eqref{eq:CSS_rank_condition}.
\end{IEEEproof}

We now consider the notion of genuine and non-genuine XYZ codes under uniform Pauli relabeling.

\begin{Cor} [Genuine XYZ Codes under Uniform Pauli Relabeling]
\label{cor:genuine_uniform}
By the same arguments as in Theorem~\ref{theo:gen_XYZ_codes}, the code $\mathcal{C}_{\mathrm{XYZ}}(C_{\mathrm{X}}, C_{\mathrm{Y}}, C_{\mathrm{Z}})$ is \ac{CSS} up to uniform relabeling iff \emph{at least} one of the following conditions holds:
\begin{subequations}
\label{eq:CSS_eq_1}
\begin{align}
    \label{eq:CSS_eq_1a}
    C_{\mathrm{X}}^{\perp}
    & =
    \left(C_{\mathrm{X}}^{\perp}\cap C_{\mathrm{Y}}^{\perp}\right)
    +
    \left(C_{\mathrm{X}}^{\perp}\cap C_{\mathrm{Z}}^{\perp}\right)\\
    \label{eq:CSS_eq_1b}
    C_{\mathrm{Y}}^{\perp}
    & =
    \left(C_{\mathrm{X}}^{\perp}\cap C_{\mathrm{Y}}^{\perp}\right)
    +
    \left(C_{\mathrm{Y}}^{\perp}\cap C_{\mathrm{Z}}^{\perp}\right) \\
    \label{eq:CSS_eq_1c}
    C_{\mathrm{Z}}^{\perp}
    & =
    \left(C_{\mathrm{X}}^{\perp}\cap C_{\mathrm{Z}}^{\perp}\right)
    +
    \left(C_{\mathrm{Y}}^{\perp}\cap C_{\mathrm{Z}}^{\perp}\right).
\end{align}
\end{subequations}
These conditions can be efficiently evaluated using the rank condition in~\eqref{eq:CSS_rank_condition} and the ones obtained via label permutations of $\{\sym{X},\sym{Y},\sym{Z}\}$.
\end{Cor}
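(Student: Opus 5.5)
The plan is to reduce the corollary to Theorem~\ref{theo:gen_XYZ_codes} by showing that a uniform relabeling maps an XYZ code to another XYZ code with its three component codes permuted. A uniform single-qubit Clifford $U^{\otimes n}$ acts on the binary image~\eqref{eq:mapping} by the same invertible linear map on every qubit. This map fixes $[0|0]$ and permutes the three nonzero labels $[1|0],[0|1],[1|1]$. Hence a pure $\sigma$-type operator with support $\mathbf v$ is sent to a pure $\tau(\sigma)$-type operator with the same support, for some permutation $\tau$ of $\{X,Y,Z\}$.

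First step. Apply $U^{\otimes n}$ to the stabilizer group generated by the rows of~\eqref{eq:PCM_XYZ}. The result is generated by pure $\tau(X)$-checks with supports $\rs(\mathbf H_{\mathrm X})$, pure $\tau(Y)$-checks with supports $\rs(\mathbf H_{\mathrm Y})$, and pure $\tau(Z)$-checks with supports $\rs(\mathbf H_{\mathrm Z})$. The pairwise orthogonality conditions~\eqref{eq:prod_XYZ} are symmetric in the three labels, so the image is again an XYZ code $\mathcal C_{\mathrm{XYZ}}(C_{\tau^{-1}(X)},C_{\tau^{-1}(Y)},C_{\tau^{-1}(Z)})$. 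Phases are irrelevant here, since CSS-ness is a property of the binary row span.

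Second step. By Definition~\ref{def:genuine_XYZ_uniform}, the code is non-genuine under uniform relabeling iff, for some $\tau$, the relabeled code is CSS. By Theorem~\ref{theo:gen_XYZ_codes}, this holds iff the component code sitting in the $Y$ slot, namely $C_{\tau^{-1}(Y)}$, satisfies~\eqref{eq:CSS_eq_0} with the other two components in the $X$ and $Z$ slots.

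The right-hand side of~\eqref{eq:CSS_eq_0} is symmetric in the $X$ and $Z$ components. The condition therefore depends only on which original code lands in the $Y$ slot, and there are exactly three possibilities:
\begin{itemize}
\item original $Y$ in the $Y$ slot gives~\eqref{eq:CSS_eq_1b};
\item original $X$ in the $Y$ slot gives~\eqref{eq:CSS_eq_1a};
\item original $Z$ in the $Y$ slot gives~\eqref{eq:CSS_eq_1c}.
\end{itemize}
Taking the union over all six permutations yields the "at least one" statement. The rank test then follows by applying~\eqref{eq:CSS_rank_condition} to the relabeled matrices. Concretely, for the condition on $C_\sigma^\perp$ we compare $\rk(\mathbf H_{\mathrm{XYZ}})$, which is invariant because relabeling is a linear isomorphism of $\mathbb F_2^{2n}$, with the sum of the ranks of the two stacks that contain $\mathbf H_\sigma$.

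The main obstacle is matching the equivalence form in Definition~\ref{def:genuine_XYZ_uniform}. That form says the relabeled generators must be "$X$ and $Y$", "$Y$ and $Z$", or "$X$ and $Z$" type, which is slightly different from "$U^{\otimes n}\mathcal C$ is CSS". I would check this carefully by noting that "$\sigma$ and $\rho$ types only" is CSS after a uniform relabeling sending $\{\sigma,\rho\}$ to $\{X,Z\}$. The map-on-labels claim itself is routine, using the fact that single-qubit Cliffords act on $\mathbb F_2^2$ as $GL_2(\mathbb F_2)\cong S_3$.
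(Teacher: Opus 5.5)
Your proposal is correct and is essentially the paper's own argument: the paper only asserts the corollary ``by the same arguments'' as Theorem~\ref{theo:gen_XYZ_codes} together with label permutations, and your reduction spells this out. A uniform relabeling sends the code to $\mathcal{C}_{\mathrm{XYZ}}(C_{\tau^{-1}(X)},C_{\tau^{-1}(Y)},C_{\tau^{-1}(Z)})$, so Theorem~\ref{theo:gen_XYZ_codes} applies to whichever component occupies the $Y$ slot, and your observation that $\rk(\mathbf H_{\mathrm{XYZ}})$ is invariant under the relabeling isomorphism correctly justifies the permuted rank tests.
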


\begin{Exa} [Non-genuine XYZ Codes from SD \ac{CSS} Code]
\label{exa:genuine}
    Consider $\mathcal{C}_{\mathrm{XYZ}}(C_{\mathrm{X}}, C_{\mathrm{X}}, C_{\mathrm{Z}})$, obtained by setting   $C_{\mathrm{Y}} = C_{\mathrm{X}}$, so that the construction is based only on two classical codes $C_{\mathrm{X}}$ and $C_{\mathrm{Z}}$. Its \ac{PCM} is
    \begin{align}
        \6H_{\mathrm{XYZ}} =
        \begin{bmatrix} [c|c]
            \6H_{\mathrm{X}} & \60 \\
            \60 & \6H_{\mathrm{Z}} \\
            \6H_{\mathrm{X}} & \6H_{\mathrm{X}}
        \end{bmatrix}
        \in \mathbb{F}_2^{(2m_{\mathrm{X}} + m_{\mathrm{Z}}) \times 2n}.
    \end{align}
    Conditions~\eqref{eq:CSS_eq_1a} and~\eqref{eq:CSS_eq_1b} hold, so the code is non-genuine: it collapses into the \ac{CSS} code $\mathcal{C}(C_{\mathrm{X}}, C_{\mathrm{Z}} \cap C_{\mathrm{X}})$. 
   More generally, whenever any two of $C_{\mathrm{X}}, C_{\mathrm{Y}}, C_{\mathrm{Z}}$ coincide, the XYZ code is non-genuine under uniform Pauli relabeling. 
   If the coinciding pair involves $C_{\mathrm{Y}}$, then the code is already \ac{CSS} in the original Pauli labeling. 
   Hence, using three \emph{distinct} classical codes is necessary, but not sufficient, for having a genuine XYZ code.
\end{Exa}

\begin{Exa}[4-qubit XYZ Code equivalent to a \ac{CSS} Code under Local Pauli Relabeling]
\label{ex:n4-counterexample}
    Consider these stabilizers for a XYZ code with  $n=4$ qubits:
    \begin{align}
      \6S_1 = \6X_1 \6X_4,\
      \6S_2 = \6Y_1 \6Y_3 \6Y_4,\
      \6S_3 = \6Z_1 \6Z_2 \6Z_4.
    \end{align}
    Here the underlying classical codes $\CP{X} = \vspan[1\,0\,0\,1]$, $\CP{Y} = \vspan [1\,0\,1\,1]$, $\CP{Z} = \vspan[1\,1\,0\,1]$ are distinct one-dimensional spaces with trivial pairwise intersections, so this code is genuine under uniform Pauli relabeling. 
    However, under local Pauli relabeling it becomes \ac{CSS}: transposing the $\sym{Y}$ and $\sym{Z}$ labels on qubit~$3$ alone maps $\6S_2 \to \widetilde{\6S}_2 = \6Y_1 \6Z_3 \6Y_4$, and replacing it with the product $\6S_1\,\widetilde{\6S}_2 = -\6Z_1 \6Z_3 \6Z_4$ (a row addition in the binary representation) results in the stabilizer group  generators 
    $\{ \6X_1 \6X_4, \6Z_1 \6Z_3 \6Z_4, -\6Z_1 \6Z_2 \6Z_4 \}$, 
    which are \ac{CSS}.
\end{Exa}

\begin{Exa}[A Small Genuine XYZ Code under Local Pauli Relabeling]
\label{ex:n4-genuine}
Consider an XYZ code with $n = 4$ qubits:
\begin{align}
  \6S_1 = \6X_1 \6X_3 \6X_4,\
  \6S_2 = \6Y_1 \6Y_2 \6Y_4,\
  \6S_3 = \6Z_1 \6Z_2 \6Z_3.
\end{align}
The three rows are linearly independent over $\F_2$, so none of the conditions~\eqref{eq:CSS_eq_1} holds. Moreover, an exhaustive search over all $6^4 = 1296$ assignments of single-qubit Pauli relabelings shows that no local relabeling brings the code into a \ac{CSS} form.
\end{Exa}

\subsection{Complexity of LC-equivalence to CSS Codes}

A natural question raised by the above examples is whether it is possible to efficiently determine if a given stabilizer code is \ac{LC}-equivalent to a CSS code as per Definition~\ref{def:genuine_XYZ_local}. To the best of our knowledge, the computational complexity of this decision problem is open. It is known that deciding the pairwise \ac{LC}-equivalence of stabilizer states (that is, stabilizer codes with null rate, $k=0$) can be done efficiently~\cite{bouchet1991efficient}, \cite{van2004efficient}, but we are not aware of either a polynomial-time algorithm or a hardness result for deciding whether a stabilizer code is \ac{LC}-equivalent to some \ac{CSS} code.
Related hard problems on graph states (such as the vertex-minor problem~\cite{dahlberg2022complexity}) do not seem to be directly applicable.

In conclusion, the presence of $X$-, $Y$-, and $Z$-type generators in the \ac{PCM} is not sufficient to guarantee a genuinely non-\ac{CSS} code. 
The rank criteria derived above identify when the additional $Y$-type constraints are essential. 
We next study how these additional constraints affect the minimum distance.

\section{Minimum Distance Bounds for XYZ codes}
\label{sec:bounds_dmin}

In this section, we derive upper and lower bounds on the quantum minimum distance of XYZ codes. The upper bound follows by restricting the logical-operator search to pure Pauli types. Lower bounds are obtained from \ac{CSS} codes whose stabilizer groups are contained in the XYZ stabilizer group, and are subsequently refined for mixed logical operators using the weight identity in~\eqref{eq:weight_identity}.

We begin by deriving an upper bound by studying $d^\sigma$, the minimum-weight logical errors consisting solely of $\sym{\sigma}$-type operators for $\sigma\in\{\sym{X},\sym{Y},\sym{Z}\}$.\footnote{If the set of pure $\sigma$-type logical operators is empty, we set $d^\sigma=\infty$.}
This yields the following result.

\begin{Prop}[$\sym{X}$-type Minimum-weight Logical Errors]
\label{the:X_errors}
    Let $\mathcal{C}_{\mathrm{XYZ}}(C_{\mathrm{X}},C_{\mathrm{Y}},C_{\mathrm{Z}})$ be an XYZ code.
    The minimum-weight $\sym{X}$-type logical errors have weight
    \begin{align} \label{eq:X_dist_upper}
        d^{\mathrm{X}}
        = \min\bigl\{\wt{\6v}\,\big|\,
        \6v\in (C_{\mathrm{Y}}\cap C_{\mathrm{Z}})
        \setminus(C_{\mathrm{X}}^{\perp}+C_\cap^{\perp})\bigr\}.
    \end{align}
\end{Prop}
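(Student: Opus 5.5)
The plan is to translate ``pure $X$-type logical operator'' into binary conditions and then identify the set of pure $X$-type elements of $\rs(\6H_{\mathrm{XYZ}})$. A pure $X$-type Pauli operator $\6P$ has binary image $[\6v^\top\,|\,\60^\top]$ with $\wt{\6v}=\text{wt}(\6P)$. By the symplectic condition, $\6P\in\mathcal N(\mathcal S)$ iff $\6H_{\mathrm{X}}\cdot\60+\6H_{\mathrm{Z}}\6v=\60$ and $\6H_{\mathrm{Y}}(\6v+\60)=\60$. This holds iff $\6H_{\mathrm{Z}}\6v=\60$ and $\6H_{\mathrm{Y}}\6v=\60$, i.e., iff $\6v\in C_{\mathrm{Y}}\cap C_{\mathrm{Z}}$. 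The $\6H_{\mathrm{X}}$ rows impose no constraint on an $X$-type vector. Hence the pure $X$-type elements of the normalizer are exactly $C_{\mathrm{Y}}\cap C_{\mathrm{Z}}$, embedded as $[\6v^\top|\60^\top]$.

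Next, I would characterize when such a vector is not a degenerate error, i.e., when $[\6v^\top|\60^\top]\notin W:=\rs(\6H_{\mathrm{XYZ}})$. The key claim is $W_{\mathrm{X}}=W\cap(\mathbb F_2^n\times\{\60\})=C_{\mathrm{X}}^\perp+C_\cap^\perp$, with $N_{\mathrm{X}}$ as in the proof of Theorem~\ref{theo:gen_XYZ_codes}. Note that $C_{\mathrm{X}}^\perp+C_\cap^\perp=C_{\mathrm{X}}^\perp+(C_{\mathrm{Y}}^\perp\cap C_{\mathrm{Z}}^\perp)=N_{\mathrm{X}}$, since $C_{\mathrm X}^\perp\cap C_{\mathrm Y}^\perp$ and $C_{\mathrm X}^\perp\cap C_{\mathrm Z}^\perp$ already lie in $C_{\mathrm{X}}^\perp$. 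I would prove the claim directly, without relying on the specific complement choices.

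For ``$\supseteq$'', every $\6x\in C_{\mathrm X}^\perp$ gives $[\6x^\top|\60^\top]\in W$. For $\6a\in C_{\mathrm Y}^\perp\cap C_{\mathrm Z}^\perp$, the sum of the $Y$-row $[\6a^\top|\6a^\top]$ and the $Z$-row $[\60^\top|\6a^\top]$ equals $[\6a^\top|\60^\top]\in W$.

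For ``$\subseteq$'', a general element of $W$ is $[(\6x+\6y)^\top|(\6z+\6y)^\top]$ with $\6x\in C_{\mathrm X}^\perp$, $\6y\in C_{\mathrm Y}^\perp$, and $\6z\in C_{\mathrm Z}^\perp$. Being $X$-type forces $\6y=\6z\in C_{\mathrm Y}^\perp\cap C_{\mathrm Z}^\perp$. The $X$-part $\6x+\6y$ then lies in $C_{\mathrm X}^\perp+(C_{\mathrm Y}^\perp\cap C_{\mathrm Z}^\perp)$.

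Combining the two steps, the pure $X$-type logical operators correspond bijectively to the vectors of $(C_{\mathrm{Y}}\cap C_{\mathrm{Z}})\setminus(C_{\mathrm{X}}^\perp+C_\cap^\perp)$. Their Pauli weights equal the Hamming weights $\wt{\6v}$, which yields \eqref{eq:X_dist_upper}. If the set is empty, the convention $d^{\mathrm X}=\infty$ applies.

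One sanity check is needed: $C_{\mathrm X}^\perp+C_\cap^\perp\subseteq C_{\mathrm Y}\cap C_{\mathrm Z}$, so that the expression is a genuine coset distance in the sense of Definition~\ref{def:d_coset}. This follows from the orthogonality conditions \eqref{eq:prod_XYZ} and from $C_{\mathrm Y}^\perp\cap C_{\mathrm Z}^\perp\subseteq C_{\mathrm Y}\cap C_{\mathrm Z}$, which holds because $\6H_{\mathrm Y}\6H_{\mathrm Z}^\top=\60$.

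The only delicate step is the identification of $W_{\mathrm X}$. In particular, one must not forget the contribution of $C_{\mathrm Y}^\perp\cap C_{\mathrm Z}^\perp$, which appears as $X$-type stabilizers only through a combination of $Y$- and $Z$-rows. The rest is direct bookkeeping with the symplectic product.
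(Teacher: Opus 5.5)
Your proof is correct. The first step, where the commutation condition reduces to $\mathbf{H}_{\mathrm{Z}}\mathbf{v}=\mathbf{0}$ and $\mathbf{H}_{\mathrm{Y}}\mathbf{v}=\mathbf{0}$, is exactly the paper's. You justify the crucial second step, identifying the $X$-type stabilizers, by a different and more direct route.

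The paper rewrites $\mathbf{H}_{\mathrm{XYZ}}$ through the reducible/irreducible decomposition~\eqref{eq:red_irred}. It then asserts that the irreducible blocks $\mathbf{H}_{\mathrm{Y}}'$ and $\mathbf{H}_{\mathrm{Z}}'$ ``cannot be turned into $X$-type checks'', so the $X$-type stabilizers are spanned by $\rs(\mathbf{H}_{\mathrm{X}}')+\rs(\mathbf{H}_\cap)$. Making that assertion rigorous implicitly needs the modular-law independence facts, such as $N_{\mathrm{Z}}\cap\rs(\mathbf{H}_{\mathrm{Y}}')=\{\mathbf{0}\}$, from the proof of Theorem~\ref{theo:gen_XYZ_codes}.

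You instead parametrize $\rs(\mathbf{H}_{\mathrm{XYZ}})$ directly as $[(\mathbf{x}+\mathbf{y})^\top\,|\,(\mathbf{z}+\mathbf{y})^\top]$, with $\mathbf{x}\in C_{\mathrm{X}}^\perp$, $\mathbf{y}\in C_{\mathrm{Y}}^\perp$, $\mathbf{z}\in C_{\mathrm{Z}}^\perp$. From this you read off $W_{\mathrm{X}}=C_{\mathrm{X}}^\perp+(C_{\mathrm{Y}}^\perp\cap C_{\mathrm{Z}}^\perp)$, with both inclusions explicit. This includes the $Y$-row plus $Z$-row combination that the paper leaves hidden inside~\eqref{eq:red_irred}.

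Your route is more elementary and self-contained. It also does not depend on the non-unique choice of complements $\mathbf{H}_\sigma'$. The paper's route instead buys consistency with the decomposition framework it reuses for Theorem~\ref{theo:gen_XYZ_codes} and for the imposed CSS codes of Definition~\ref{def:CSS_imposed}.

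Your sanity check $C_{\mathrm{X}}^\perp+C_\cap^\perp\subseteq C_{\mathrm{Y}}\cap C_{\mathrm{Z}}$ is not needed for the formula itself, though it is a correct confirmation. It also follows immediately from the stabilizer group being abelian, so that $W$ lies inside the normalizer.
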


\begin{IEEEproof}
    Restricting~\eqref{eq:stab_distance} to $\sym{X}$-type errors we have
    \begin{align}
        d^{\mathrm{X}} = 
        \min \left\{
        \wt{\6x} \, \middle| \,
        \6H_{\mathrm{XYZ}} \vstack{\60}{\6x} =\60,
        \vstack{\6x}{\60} \notin \rs(\6H_{\mathrm{XYZ}})
        \right\}.
    \end{align}
    Using the definition~\eqref{eq:PCM_XYZ}, the syndrome condition
    $\6H_{\mathrm{XYZ}} \big[\begin{smallmatrix} \60 \\ \6x \end{smallmatrix}\big] = \60$ is equivalent to  $\6H_{\mathrm{Z}}\6x=\60$ and $\6H_{\mathrm{Y}}\6x=\60$, so that $\6x\in C_{\mathrm{Y}}\cap C_{\mathrm{Z}}$. 
    Using the decomposition~\eqref{eq:red_irred} the remaining condition $\big[\begin{smallmatrix} \6x \\ \60 \end{smallmatrix}\big] \notin\rs(\6H_{\mathrm{XYZ}})$ is equivalent to $\6x \notin \rs(\6H_{\mathrm{X}}') + \rs(\6H_\cap)$, since by construction $\6H_{\mathrm{Y}}'$, $\6H_{\mathrm{Z}}'$ are irreducible components that cannot be turned into $\sym{X}$-type checks. Using  $\rs(\6H_{\mathrm{X}}') + \rs(\6H_\cap) = \rs(\6H_{\mathrm{X}}) + \rs(\6H_\cap)$, the second condition is equivalent to $\6x \notin C_{\mathrm{X}}^{\perp} + C_\cap^{\perp}$, which then gives~\eqref{eq:X_dist_upper}.
\end{IEEEproof}

Analogous results hold for $d^{\mathrm{Y}}$ and $d^{\mathrm{Z}}$. 
Summarizing,
\begin{subequations}
\begin{align}
    d^{\mathrm{X}} & = \min\bigl\{\wt{\6v}\,\big|\,
    \6v\in(C_{\mathrm{Y}}\cap C_{\mathrm{Z}})\setminus(C_{\mathrm{X}}^{\perp}+C_\cap^{\perp})\bigr\}, \\
    d^{\mathrm{Y}} & = \min\bigl\{\wt{\6v}\,\big|\,
    \6v\in(C_{\mathrm{X}}\cap C_{\mathrm{Z}})\setminus(C_{\mathrm{Y}}^{\perp}+C_\cap^{\perp})\bigr\}, \\
    d^{\mathrm{Z}} & = \min\bigl\{\wt{\6v}\,\big|\,
    \6v\in(C_{\mathrm{X}}\cap C_{\mathrm{Y}})\setminus(C_{\mathrm{Z}}^{\perp}+C_\cap^{\perp})\bigr\},
\end{align}
\end{subequations}
are the minimum-weights of logical errors consisting solely of $\sym{X}$-, $\sym{Y}$- and $\sym{Z}$-type operators, respectively. 
Since each $d^{\sigma}$ is the minimum weight of the set of logical operators restricted to a
single Pauli type, the upper bound follows.

\begin{Cor}[Distance Upper Bound]
\label{cor:upper_bound}
    Let $\mathcal{C}_{\mathrm{XYZ}}(C_{\mathrm{X}},C_{\mathrm{Y}},C_{\mathrm{Z}})$ be an XYZ code. Then
    \begin{align} \label{eq:upper}
        d \leq 
        U := 
        \min \left\{ d^{\mathrm{X}}, d^{\mathrm{Y}}, d^{\mathrm{Z}} \right\}.
    \end{align}
\end{Cor}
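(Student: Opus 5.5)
The bound follows directly from the definition of the quantum minimum distance in~\eqref{eq:d_min_QSC}: $d$ is a minimum over the whole set of logical operators $\mathcal{N}(\mathcal{S})\setminus\Phi\mathcal{S}$. Restricting the minimization to any subset of this set can only increase (or keep equal) the minimum. So I will show that each $d^{\sigma}$, for $\sigma\in\{\sym{X},\sym{Y},\sym{Z}\}$, is a minimum over such a subset. Then $d\le d^\sigma$ for each $\sigma$, and taking the minimum over $\sigma$ gives~\eqref{eq:upper}.

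Concretely, for $\sigma=\sym{X}$, Proposition~\ref{the:X_errors} identifies $d^{\mathrm{X}}$ as the minimum of $\wt{\6x}$ over binary vectors $\6x$ with two properties. First, $\big[\begin{smallmatrix}\6x\\ \60\end{smallmatrix}\big]$ satisfies the syndrome condition $\6H_{\mathrm{XYZ}}\big[\begin{smallmatrix}\60\\ \6x\end{smallmatrix}\big]=\60$. Second, this vector is not in $\rs(\6H_{\mathrm{XYZ}})$. These are exactly the two conditions in~\eqref{eq:stab_distance_q}, specialized to $\6u=\6x$, $\6v=\60$. For such vectors, $\wt{\6u\vee\6v}=\wt{\6x}$, which matches the objective in~\eqref{eq:stab_distance_q}. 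Hence the feasible set defining $d^{\mathrm{X}}$ embeds into the feasible set defining $d$ with the same weight, and $d\le d^{\mathrm{X}}$.

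The analogous statements for $\sym{Z}$ and $\sym{Y}$ are handled the same way. For $\sym{Z}$ one takes $\6u=\60$, $\6v=\6z$. For $\sym{Y}$ one takes $\6u=\6v=\6y$, where $\wt{\6y\vee\6y}=\wt{\6y}$ again matches the Pauli weight. These correspond to the characterizations stated after Proposition~\ref{the:X_errors}, which I take as given.

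The only point requiring care is the convention $d^\sigma=\infty$ when no pure $\sigma$-type logical operator exists, so that the inequality remains valid. If all three are infinite but $k>0$, the bound is vacuous. If $k=0$, then $d=\infty$ by convention, and every $d^\sigma=\infty$ as well, since there are no logical operators at all. There is no real obstacle: the argument is simply monotonicity of a minimum under restriction of its domain.
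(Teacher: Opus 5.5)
Your proposal is correct and takes the same approach as the paper. The paper notes in one line that each $d^{\sigma}$ is a minimum over the subset of logical operators of pure $\sigma$-type, so $d \le d^{\sigma}$. Your explicit embedding into~\eqref{eq:stab_distance_q} via $(\6u,\6v)=(\6x,\60)$, $(\6y,\6y)$, $(\60,\6z)$, together with your handling of the $d^{\sigma}=\infty$ convention, simply spells out that argument in more detail.
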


We now derive a lower bound on the minimum distance by considering three \ac{CSS} codes whose codespaces contain the XYZ codespace. These are obtained by removing some stabilizer generators and exploiting the decomposition~\eqref{eq:red_irred} to obtain a \ac{CSS} code. 

\begin{Def}[\ac{CSS} Codes imposed by the XYZ Structure]
\label{def:CSS_imposed}
    Let $\mathcal{C}_{\mathrm{XYZ}}(C_{\mathrm{X}},C_{\mathrm{Y}},C_{\mathrm{Z}})$ be an XYZ code, and set
    \begin{align}
        \tilde{\6H}_{\mathrm{X}} = \begin{bmatrix}\6H_{\mathrm{X}}\\\6H_\cap\end{bmatrix},
        \qquad
        \tilde{\6H}_{\mathrm{Z}} = \begin{bmatrix}\6H_{\mathrm{Z}}\\\6H_\cap\end{bmatrix},
    \end{align}
    defining the codes
    \begin{subequations}
    \begin{align}
        \tilde{C}_{\mathrm{X}} 
        & = \ker(\tilde{\6H}_{\mathrm{X}}) = C_{\mathrm{X}}\cap C_\cap,  &
        \tilde{C}_{\mathrm{X}}^\perp 
        & = C_{\mathrm{X}}^\perp + C_\cap^\perp,  
        \\
        \tilde{C}_{\mathrm{Z}} 
        & = 
        \ker(\tilde{\6H}_{\mathrm{Z}}) = C_{\mathrm{Z}}\cap C_\cap,  &
        \tilde{C}_{\mathrm{Z}}^\perp 
        & = C_{\mathrm{Z}}^\perp + C_\cap^\perp, 
    \end{align}        
    \end{subequations}
    whose orthogonality is a consequence of~\eqref{eq:prod_XYZ} and of~\eqref{eq:red_irred}. The associated \ac{CSS} code $\mathcal{C}(\tilde{C}_{\mathrm{X}},\tilde{C}_{\mathrm{Z}})$ has \ac{PCM}
    \begin{align}
    \label{eq:CSS_tilde}
        \tilde{\6H}_{\mathrm{CSS}} =
        \begin{bmatrix} [c|c]
            \tilde{\6H}_{\mathrm{X}} & \60 \\
            \60 & \tilde{\6H}_{\mathrm{Z}}
        \end{bmatrix},
    \end{align}
    and $\rs(\tilde{\6H}_{\mathrm{CSS}})$ generates exactly the $\sym{X}$- and $\sym{Z}$-type stabilizers of $\mathcal{C}_{\mathrm{XYZ}}(C_{\mathrm{X}}, C_{\mathrm{Y}}, C_{\mathrm{Z}})$.
    When $\6H_\cap$ is empty, this reduces to the \ac{PCM} of $\mathcal{C}(C_{\mathrm{X}}, C_{\mathrm{Z}})$, as in~\eqref{eq:CSS_PCM}.

    We denote the associated classical and quantum minimum distances as: 
    \begin{subequations}
    \begin{align}
        \tilde{\delta}_{\mathrm{X}} 
        & := d(\tilde{C}_{\mathrm{X}}), 
        & 
        \tilde{d}_{\mathrm{X} \setminus \mathrm{Z}} 
        & := d(\tilde{C}_{\mathrm{X}} \setminus \tilde{C}_{\mathrm{Z}}^\perp), 
        \\
        \tilde{\delta}_{\mathrm{Z}} 
        & := d(\tilde{C}_{\mathrm{Z}}) 
        & 
        \tilde{d}_{\mathrm{Z} \setminus \mathrm{X}} 
        & := d(\tilde{C}_{\mathrm{Z}} \setminus \tilde{C}_{\mathrm{X}}^\perp)
    \end{align}
    \end{subequations}
    obtained using Definition~\ref{def:d_coset}. We also define:
    \begin{align} \label{eq:quantum_d_min_tilde}
        \tilde{\delta}_{\mathrm{X},\mathrm{Z}} := \min\{\tilde{\delta}_{\mathrm{X}},\tilde{\delta}_{\mathrm{Z}}\},
        \quad
        \tilde{d}_{\mathrm{X},\mathrm{Z}} := \min\{\tilde{d}_{\mathrm{Z}\setminus\mathrm{X}},\tilde{d}_{\mathrm{X}\setminus\mathrm{Z}}\}.
    \end{align}    
\end{Def}

By applying the same construction after a uniform Pauli relabeling, we also use the analogous notation for the imposed \ac{CSS} codes $\mathcal{C}(\tilde{C}_{\mathrm{X}},\tilde{C}_{\mathrm{Y}})$ and $\mathcal{C}(\tilde{C}_{\mathrm{Y}},\tilde{C}_{\mathrm{Z}})$. In particular, we set
$\tilde{\delta}_{\mathrm{Y}} := d(\tilde{C}_{\mathrm{Y}})$, and, for every ordered pair of distinct Pauli labels $\sigma,\sigma'\in\{\mathrm{X},\mathrm{Y},\mathrm{Z}\}$, we define $\tilde{d}_{\sigma\setminus\sigma'}:=d(\tilde{C}_{\sigma}\setminus\tilde{C}_{\sigma'}^{\perp})$ and $\tilde{d}_{\sigma,\sigma'}:=\min\{\tilde{d}_{\sigma\setminus\sigma'},\tilde{d}_{\sigma'\setminus\sigma}\}$.

The specialization of Proposition~\ref{prop:CSS_d_min} to the \ac{CSS} code of Definition \ref{def:CSS_imposed} directly results in the following propositions.

\begin{Prop}[Coset Distances]
\label{cor:coset_distances}
    The distances of the \ac{CSS} codes $\mathcal{C}(\tilde{C}_{\mathrm{X}},\tilde{C}_{\mathrm{Z}})$ and $\mathcal{C}(C_{\mathrm{X}},C_{\mathrm{Z}})$ satisfy
    \begin{subequations}
    \begin{align}
        \tilde{\delta}_{\mathrm{Z}} & \geq \delta_{\mathrm{Z}},
        & 
        \tilde{d}_{\mathrm{Z}\setminus\mathrm{X}} & \geq d_{\mathrm{Z}\setminus\mathrm{X}},
        \\
        \tilde{\delta}_{\mathrm{X}} & \geq \delta_{\mathrm{X}}, 
        & 
        \tilde{d}_{\mathrm{X}\setminus\mathrm{Z}} & \geq d_{\mathrm{X}\setminus\mathrm{Z}}.
    \end{align}
    \end{subequations}
\end{Prop}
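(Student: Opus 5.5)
The plan is to reduce all four inequalities to set inclusions, since every quantity involved is a minimum of Hamming weights over some set. Shrinking a code can only raise its minimum distance. For the coset distances, the relevant fact is that removing a larger subspace while shrinking the ambient code can only shrink the feasible set.

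First I would record the inclusions that follow from Definition~\ref{def:CSS_imposed}. We have
\[
\tilde{C}_{\mathrm{X}} = C_{\mathrm{X}}\cap C_\cap \subseteq C_{\mathrm{X}}
\quad\text{and}\quad
\tilde{C}_{\mathrm{Z}} = C_{\mathrm{Z}}\cap C_\cap \subseteq C_{\mathrm{Z}},
\]
and dually
\[
\tilde{C}_{\mathrm{X}}^\perp = C_{\mathrm{X}}^\perp + C_\cap^\perp \supseteq C_{\mathrm{X}}^\perp
\quad\text{and}\quad
\tilde{C}_{\mathrm{Z}}^\perp = C_{\mathrm{Z}}^\perp + C_\cap^\perp \supseteq C_{\mathrm{Z}}^\perp .
\]

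The classical distances then follow at once. The set $\tilde{C}_{\mathrm{Z}}\setminus\{\mathbf{0}\}$ is contained in $C_{\mathrm{Z}}\setminus\{\mathbf{0}\}$, so its minimum weight is at least as large, giving $\tilde{\delta}_{\mathrm{Z}}\geq\delta_{\mathrm{Z}}$. The same argument gives $\tilde{\delta}_{\mathrm{X}}\geq\delta_{\mathrm{X}}$. If $\tilde{C}_{\mathrm{Z}}=\{\mathbf{0}\}$, the minimum is over an empty set and equals $\infty$, so the inequality still holds.

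For the coset distances, take any $\mathbf{x}\in\tilde{C}_{\mathrm{Z}}\setminus\tilde{C}_{\mathrm{X}}^\perp$. Then $\mathbf{x}\in C_{\mathrm{Z}}$, because $\tilde{C}_{\mathrm{Z}}\subseteq C_{\mathrm{Z}}$. Also $\mathbf{x}\notin C_{\mathrm{X}}^\perp$: otherwise it would lie in the larger space $\tilde{C}_{\mathrm{X}}^\perp$, contradicting the choice of $\mathbf{x}$. Hence
\[
\tilde{C}_{\mathrm{Z}}\setminus\tilde{C}_{\mathrm{X}}^\perp \subseteq C_{\mathrm{Z}}\setminus C_{\mathrm{X}}^\perp ,
\]
and taking minimum weights gives $\tilde{d}_{\mathrm{Z}\setminus\mathrm{X}}\geq d_{\mathrm{Z}\setminus\mathrm{X}}$. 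Symmetrically, $\tilde{d}_{\mathrm{X}\setminus\mathrm{Z}}\geq d_{\mathrm{X}\setminus\mathrm{Z}}$. When a set is empty, the convention that its minimum is $\infty$ keeps the inequality valid.

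The one point needing care, and the only real obstacle, is that both CSS codes are well defined. For $\mathcal{C}(C_{\mathrm{X}},C_{\mathrm{Z}})$ this is immediate from \eqref{eq:prod_XYZ}. For $\mathcal{C}(\tilde{C}_{\mathrm{X}},\tilde{C}_{\mathrm{Z}})$ we need $\tilde{C}_{\mathrm{Z}}^\perp\subseteq\tilde{C}_{\mathrm{X}}$. Since $C_{\mathrm{Z}}^\perp\subseteq C_{\mathrm{X}}$, it suffices to check $C_\cap^\perp\subseteq C_{\mathrm{X}}\cap C_\cap$ and $C_{\mathrm{Z}}^\perp\subseteq C_\cap$.

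Each generator space of $C_\cap^\perp$ is orthogonal to $C_{\mathrm{X}}^\perp$ by \eqref{eq:prod_XYZ}. For example, a vector of $C_{\mathrm{Y}}^\perp\cap C_{\mathrm{Z}}^\perp$ is orthogonal to $C_{\mathrm{X}}^\perp$, and a vector of $C_{\mathrm{X}}^\perp\cap C_{\mathrm{Z}}^\perp$ lies in $C_{\mathrm{Z}}^\perp$, which is orthogonal to $C_{\mathrm{X}}^\perp$. This shows $C_\cap^\perp\subseteq C_{\mathrm{X}}$.

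The self-orthogonality $C_\cap^\perp\subseteq C_\cap$ follows similarly, because any two distinct labels among $\mathrm{X},\mathrm{Y},\mathrm{Z}$ are mutually orthogonal. Each intersection space lies inside two of the spaces $C_{\mathrm{X}}^\perp, C_{\mathrm{Y}}^\perp, C_{\mathrm{Z}}^\perp$, so any two such spaces always share an orthogonal pair of labels.

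The inclusion $C_{\mathrm{Z}}^\perp\subseteq C_\cap$ holds by the same reasoning. This is exactly the orthogonality already asserted in Definition~\ref{def:CSS_imposed}, so I would simply cite it there. After that, the proposition is the monotonicity argument above.
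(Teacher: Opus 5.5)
Your proof is correct and follows the same route as the paper. In both, the inclusions $\tilde{C}_{\mathrm{Z}}\subseteq C_{\mathrm{Z}}$, $\tilde{C}_{\mathrm{X}}\subseteq C_{\mathrm{X}}$ and $\tilde{C}_{\sigma}^\perp\supseteq C_{\sigma}^\perp$ shrink the feasible sets, and the four inequalities follow by monotonicity of the minimum. Your extra check that $\mathcal{C}(\tilde{C}_{\mathrm{X}},\tilde{C}_{\mathrm{Z}})$ is a valid CSS code is also correct, but it is not needed for the inequalities, which are pure set-monotonicity statements. The paper simply takes that orthogonality from Definition~\ref{def:CSS_imposed}, so calling it the ``only real obstacle'' overstates its role.
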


\begin{IEEEproof}
    By construction $\tilde{C}_{\mathrm{Z}} = C_{\mathrm{Z}}\cap C_\cap\subseteq C_{\mathrm{Z}}$, so $\tilde{C}_{\mathrm{Z}}$ is a subcode of $C_{\mathrm{Z}}$, 
    which proves $ \tilde{\delta}_{\mathrm{Z}} \geq \delta_{\mathrm{Z}}$, and similarly for $\tilde{\delta}_{\mathrm{X}} \geq \delta_{\mathrm{X}}$. Furthermore, $\tilde{C}_{\mathrm{X}}^\perp = C_{\mathrm{X}}^\perp + C_\cap^\perp \supseteq C_{\mathrm{X}}^\perp$, and similarly for $\tilde{C}_{\mathrm{Z}}^\perp$, so the same arguments applied to the coset distances proves $\tilde{d}_{\mathrm{Z}\setminus\mathrm{X}} \geq d_{\mathrm{Z}\setminus\mathrm{X}}$ and $\tilde{d}_{\mathrm{X}\setminus\mathrm{Z}} \geq d_{\mathrm{X}\setminus\mathrm{Z}}$.
\end{IEEEproof}

\begin{Prop}[\ac{CSS} Distance and XYZ Distance]
\label{pro:CSS_XYZ_dist}
    Let $\mathcal{C}_{\mathrm{XYZ}}(C_{\mathrm{X}},C_{\mathrm{Y}},C_{\mathrm{Z}})$ be an XYZ code and $\mathcal{C} (\tilde{C}_{\mathrm{X}}, \tilde{C}_{\mathrm{Z}})$ the derived \ac{CSS} code as per Definition~\ref{def:CSS_imposed}.
    Then $\tilde{d}_{\mathrm{X},\mathrm{Z}}\leq d$.
\end{Prop}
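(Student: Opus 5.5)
The plan is to show that every logical operator of the XYZ code is also a logical operator of the imposed \ac{CSS} code $\mathcal{C}(\tilde{C}_{\mathrm{X}},\tilde{C}_{\mathrm{Z}})$. The minimum in~\eqref{eq:d_min_QSC} for the XYZ code then ranges over a subset of the set defining the \ac{CSS} distance. Once this is done, Proposition~\ref{prop:CSS_d_min} applied to $\mathcal{C}(\tilde{C}_{\mathrm{X}},\tilde{C}_{\mathrm{Z}})$ identifies that \ac{CSS} distance with $\tilde{d}_{\mathrm{X},\mathrm{Z}}$, and the bound $\tilde{d}_{\mathrm{X},\mathrm{Z}}\le d$ follows.

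Concretely, let $W=\rs(\6H_{\mathrm{XYZ}})$ and $\tilde W=\rs(\tilde{\6H}_{\mathrm{CSS}})$. From the decomposition~\eqref{eq:red_irred}, $\tilde W$ is spanned by the blocks $[\6H_{\mathrm{X}}'|\60]$, $[\60|\6H_{\mathrm{Z}}']$, $[\6H_\cap|\60]$ and $[\60|\6H_\cap]$. It is therefore a subspace of $W$, so $\tilde W\subseteq W$.

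We then take symplectic complements. Write $W^{\perp_s}$ for the set of binary vectors commuting with every row of $\6H_{\mathrm{XYZ}}$, i.e.\ the binary image of $\mathcal{N}(\mathcal{S})$. Since $\tilde W\subseteq W$, we get $W^{\perp_s}\subseteq \tilde W^{\perp_s}$; in words, every operator commuting with all XYZ stabilizers commutes with all stabilizers of the \ac{CSS} code.

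Now let $[\6u^\top|\6v^\top]$ be a minimum-weight XYZ logical operator, so it lies in $W^{\perp_s}\setminus W$. Because $\tilde W\subseteq W$, it is not in $\tilde W$ either, and hence it lies in $\tilde W^{\perp_s}\setminus\tilde W$. That is, it is a nontrivial logical operator of $\mathcal{C}(\tilde{C}_{\mathrm{X}},\tilde{C}_{\mathrm{Z}})$. By Proposition~\ref{prop:CSS_d_min}, this code has quantum distance $\min\{\tilde d_{\mathrm{Z}\setminus\mathrm{X}},\tilde d_{\mathrm{X}\setminus\mathrm{Z}}\}=\tilde d_{\mathrm{X},\mathrm{Z}}$. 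Its weight $\wt{\6u\vee\6v}=d$ is therefore at least $\tilde d_{\mathrm{X},\mathrm{Z}}$.

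There is no real obstacle; the only delicate points are bookkeeping. First, one must check that Definition~\ref{def:CSS_imposed} indeed gives a valid \ac{CSS} code, i.e.\ $\tilde{\6H}_{\mathrm{X}}\tilde{\6H}_{\mathrm{Z}}^\top=\60$. This holds because $\6H_\cap$ is orthogonal to $\6H_{\mathrm{X}}$, $\6H_{\mathrm{Z}}$ and to itself: its rows lie in pairwise intersections of the duals, and the three duals are pairwise orthogonal by~\eqref{eq:prod_XYZ}. Second, for the $k=0$ case of the \ac{CSS} code ($d=\infty$ convention), the inclusion of logical sets still gives the inequality trivially.
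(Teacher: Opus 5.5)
Your proposal is correct and takes the same route as the paper. Both arguments use $\rs(\tilde{\6H}_{\mathrm{CSS}})\subseteq\rs(\6H_{\mathrm{XYZ}})$ to show that every nontrivial XYZ logical operator commutes with all stabilizers of $\mathcal{C}(\tilde{C}_{\mathrm{X}},\tilde{C}_{\mathrm{Z}})$ but is not one of them, so the minimization defining $d$ runs over a subset of the one defining $\tilde{d}_{\mathrm{X},\mathrm{Z}}$; your additional checks of $\tilde{\6H}_{\mathrm{X}}\tilde{\6H}_{\mathrm{Z}}^\top=\60$ and of the $k=0$ convention are valid bookkeeping that the paper leaves implicit.
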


\begin{IEEEproof}
    The minimum distance~\eqref{eq:stab_distance_q} of $\mathcal{C} (\tilde{C}_{\mathrm{X}}, \tilde{C}_{\mathrm{Z}})$ is
    \begin{align}
        \tilde{d}_{\mathrm{X},\mathrm{Z}} =
        \min\! \left\{ \wt{\6u \vee \6v} \, \middle| \, \tilde{\6H}_{\mathrm{CSS}}\! \vstack{\6v}{\6u} = \60, \vstack{\6u}{\6v} \notin \rs(\tilde{\6H}_{\mathrm{CSS}})
        \right\}.
    \end{align}
    Since $\rs(\tilde{\6H}_{\mathrm{CSS}}) \subseteq \rs(\6H_{\mathrm{XYZ}})$, $\6H_{\mathrm{XYZ}} \big[\begin{smallmatrix}\6v\\\6u\end{smallmatrix}\big] = \60$ implies $\tilde{\6H}_{\mathrm{CSS}} \big[ \begin{smallmatrix}\6v\\\6u\end{smallmatrix} \big] = \60$ and, similarly, $\big[ \begin{smallmatrix}\6u\\\6v\end{smallmatrix} \big] \notin \rs(\tilde{\6H}_{\mathrm{XYZ}})$ implies $\big[ \begin{smallmatrix}\6u\\\6v\end{smallmatrix} \big] \notin \rs(\tilde{\6H}_{\mathrm{CSS}})$. Hence, every XYZ logical operator is also a logical operator of the \ac{CSS} code that enters the minimization defining $\tilde{d}_{\mathrm{X},\mathrm{Z}}$.
    Thus we conclude
    $\tilde{d}_{\mathrm{X},\mathrm{Z}}\le d$.
\end{IEEEproof}

The same derivation applies to $\mathcal{C}(\tilde{C}_{\mathrm{X}},\tilde{C}_{\mathrm{Y}})$
and $\mathcal{C} (\tilde{C}_{\mathrm{Y}}, \tilde{C}_{\mathrm{Z}})$, which are \ac{CSS} codes (up to uniform Pauli relabeling) obtained from the XYZ code by removing the stabilizers associated to $\6H_\mathrm{Z}'$ and to $\6H_\mathrm{X}'$, respectively. Thus we have $\tilde{d}_{\mathrm{X},\mathrm{Y}}\leq d$ and $\tilde{d}_{\mathrm{Y},\mathrm{Z}}\leq d$, that is, every non-trivial XYZ logical operator is also a non-trivial logical operator of each of these \ac{CSS} codes. The lower bound below then follows.

\begin{Cor}[Distance Lower Bound]
\label{cor:lower_bound}
    Let $\mathcal{C}_{\mathrm{XYZ}}(C_{\mathrm{X}},C_{\mathrm{Y}},C_{\mathrm{Z}})$ be an XYZ code and 
    $\mathcal{C} (\tilde{C}_{\mathrm{X}}, \tilde{C}_{\mathrm{Y}})$, $\mathcal{C} (\tilde{C}_{\mathrm{Y}}, \tilde{C}_{\mathrm{Z}})$, and $\mathcal{C} (\tilde{C}_{\mathrm{X}}, \tilde{C}_{\mathrm{Z}})$ the imposed
    \ac{CSS} codes.
    Then:
    \begin{align} 
    \label{eq:lower}
        d \geq 
        L := 
        \max
        \left\{ \tilde{d}_{\mathrm{X},\mathrm{Y}}, \tilde{d}_{\mathrm{Y},\mathrm{Z}}, \tilde{d}_{\mathrm{X},\mathrm{Z}} \right\}.
    \end{align}
\end{Cor}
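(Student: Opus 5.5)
The lower bound follows by applying Proposition~\ref{pro:CSS_XYZ_dist} three times, once for each pair of Pauli labels, and then taking the maximum. The inequality $\tilde{d}_{\mathrm{X},\mathrm{Z}}\le d$ is exactly Proposition~\ref{pro:CSS_XYZ_dist}. The work lies in justifying the two relabeled versions. Concretely, I would prove $\tilde{d}_{\mathrm{X},\mathrm{Y}}\le d$ and $\tilde{d}_{\mathrm{Y},\mathrm{Z}}\le d$ by transferring the situation through a uniform single-qubit Clifford $U^{\otimes n}$.

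The transfer needs three facts about $U^{\otimes n}$.
\begin{itemize}
\item It permutes the labels $\{\sym{X},\sym{Y},\sym{Z}\}$, up to phases.
\item It preserves Pauli weight.
\item It maps normalizer to normalizer and $\Phi\mathcal S$ to $\Phi(U^{\otimes n}\mathcal S U^{\dagger\otimes n})$.
\end{itemize}
Hence the minimum distance $d$ is invariant under this relabeling. For $\tilde{d}_{\mathrm{X},\mathrm{Y}}$, choose $U$ exchanging $\sym{Y}\leftrightarrow\sym{Z}$. The relabeled stabilizer group is then again an XYZ code, namely $\mathcal{C}_{\mathrm{XYZ}}(C_{\mathrm{X}},C_{\mathrm{Z}},C_{\mathrm{Y}})$ up to signs of generators. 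Signs do not affect the binary normalizer or the row span, so they do not affect distance. The pairwise orthogonality~\eqref{eq:prod_XYZ} is symmetric in the labels, so the relabeled triple is a valid XYZ code. Its reducible component $C_\cap^\perp$ is also unchanged, since it is defined symmetrically as the sum of the three pairwise intersections.

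Applying Proposition~\ref{pro:CSS_XYZ_dist} to the relabeled code gives that the imposed CSS code, built from $\tilde{C}_{\mathrm{X}}=C_{\mathrm{X}}\cap C_\cap$ and $\tilde{C}_{\mathrm{Y}}=C_{\mathrm{Y}}\cap C_\cap$, has distance $\tilde{d}_{\mathrm{X},\mathrm{Y}}\le d$. Relabeling back identifies this CSS code with $\mathcal{C}(\tilde{C}_{\mathrm{X}},\tilde{C}_{\mathrm{Y}})$ as defined after Definition~\ref{def:CSS_imposed}. The same argument with $U$ exchanging $\sym{X}\leftrightarrow\sym{Y}$ gives $\tilde{d}_{\mathrm{Y},\mathrm{Z}}\le d$. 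Since $d$ exceeds each of the three quantities, it exceeds their maximum $L$.

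Alternatively, I could avoid relabeling and argue directly, as in Proposition~\ref{pro:CSS_XYZ_dist}, via stabilizer-group inclusion. The subgroup generated by the pure $\sym{X}$- and $\sym{Y}$-type elements of $\mathcal S$ has binary image spanned by
\[
[\,C_{\mathrm{X}}^\perp+C_\cap^\perp \,|\, \60\,] \quad\text{and}\quad [\,\6y\,|\,\6y\,],\ \ \6y\in C_{\mathrm{Y}}^\perp+C_\cap^\perp .
\]
This span lies inside $\rs(\6H_{\mathrm{XYZ}})$. Any vector symplectically orthogonal to $\rs(\6H_{\mathrm{XYZ}})$ is then orthogonal to this subspace, and any vector outside $\rs(\6H_{\mathrm{XYZ}})$ is outside the subspace. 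Every XYZ logical is therefore a logical of the smaller code, which after relabeling is $\mathcal{C}(\tilde{C}_{\mathrm{X}},\tilde{C}_{\mathrm{Y}})$. Weights are unchanged because $\wt{\6u\vee\6v}$ is invariant under the label permutation.

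The main obstacle is bookkeeping: checking that the pure-$\sym{Y}$ part of $\rs(\6H_{\mathrm{XYZ}})$ is indeed spanned by $C_{\mathrm{Y}}^\perp+C_\cap^\perp$. This follows from~\eqref{eq:red_irred}, because the blocks $[\6H_\cap|\60]$ and $[\60|\6H_\cap]$ add to give $[\6H_\cap|\6H_\cap]$. Only the inclusion of this span into $\rs(\6H_{\mathrm{XYZ}})$ is actually needed, not equality, and that inclusion is immediate.
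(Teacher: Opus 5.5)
Your proposal is correct and follows the paper's argument. The paper likewise obtains $\tilde d_{\mathrm X,\mathrm Y}\le d$ and $\tilde d_{\mathrm Y,\mathrm Z}\le d$ by rerunning Proposition~\ref{pro:CSS_XYZ_dist} on the CSS codes (up to uniform Pauli relabeling) obtained by dropping the $\6H_{\mathrm Z}'$ and $\6H_{\mathrm X}'$ stabilizers, then takes the maximum with $\tilde d_{\mathrm X,\mathrm Z}\le d$; your checks on label symmetry and on the stabilizer-span inclusion into $\rs(\6H_{\mathrm{XYZ}})$ just spell out what the paper compresses into ``the same derivation applies.''
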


Next, we prove a second lower bound, based on the identity in~\eqref{eq:weight_identity} for mixed Pauli operators.

\begin{Lem}[Conditions for Mixed Operators]
\label{lem:mixed}
    Let $\big[\begin{smallmatrix} \6u \\ \6v \end{smallmatrix}\big]$ be associated to a nontrivial mixed logical operator of $\mathcal{C}_{\mathrm{XYZ}}(C_{\mathrm{X}},C_{\mathrm{Y}},C_{\mathrm{Z}})$, i.e., it is in $\mathcal{N}(\mathcal{S})/ \Phi \mathcal{S}$.
    Writing $\6w=\6u+\6v$, these three conditions all hold:
    \begin{enumerate}
        \item[$(i)$] $\6u \notin \tilde{C}_{\mathrm{X}}^\perp$ or $\6v \notin \tilde{C}_{\mathrm{Z}}^\perp$,
        \item[$(ii)$] $\6u \notin \tilde{C}_{\mathrm{Y}}^\perp$ or $\6w \notin \tilde{C}_{\mathrm{Z}}^\perp$,
        \item[$(iii)$] $\6v \notin \tilde{C}_{\mathrm{Y}}^\perp$ or $\6w \notin \tilde{C}_{\mathrm{X}}^\perp$.
    \end{enumerate}
\end{Lem}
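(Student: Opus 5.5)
Each of the three conditions will be proved by contradiction, by showing that its failure would place the operator $\big[\begin{smallmatrix} \6u \\ \6v \end{smallmatrix}\big]$ in $\rs(\6H_{\mathrm{XYZ}})$, i.e.\ make it phase-equivalent to a stabilizer. The key observation is that, by Definition~\ref{def:CSS_imposed} and the decomposition~\eqref{eq:red_irred}, the row span of $\6H_{\mathrm{XYZ}}$ contains every pure $\sigma$-type vector supported on $\tilde{C}_\sigma^\perp = C_\sigma^\perp + C_\cap^\perp$, for each $\sigma\in\{\sym{X},\sym{Y},\sym{Z}\}$. Indeed, $\rs(\6H_{\mathrm{XYZ}})$ contains $\big[\begin{smallmatrix} \6a \\ \60 \end{smallmatrix}\big]$ for $\6a\in C_{\mathrm{X}}^\perp$, $\big[\begin{smallmatrix} \60 \\ \6a \end{smallmatrix}\big]$ for $\6a\in C_{\mathrm{Z}}^\perp$, and $\big[\begin{smallmatrix} \6a \\ \6a \end{smallmatrix}\big]$ for $\6a\in C_{\mathrm{Y}}^\perp$. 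The last two blocks of~\eqref{eq:red_irred} put $\big[\begin{smallmatrix} \6c \\ \60 \end{smallmatrix}\big]$ and $\big[\begin{smallmatrix} \60 \\ \6c \end{smallmatrix}\big]$ in the row span for every $\6c\in C_\cap^\perp$, and therefore also their sum $\big[\begin{smallmatrix} \6c \\ \6c \end{smallmatrix}\big]$.

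For $(i)$, suppose $\6u\in\tilde{C}_{\mathrm{X}}^\perp$ and $\6v\in\tilde{C}_{\mathrm{Z}}^\perp$. We write
\[
\big[\begin{smallmatrix} \6u \\ \6v \end{smallmatrix}\big]=\big[\begin{smallmatrix} \6u \\ \60 \end{smallmatrix}\big]+\big[\begin{smallmatrix} \60 \\ \6v \end{smallmatrix}\big].
\]
By the observation above, both summands lie in $\rs(\6H_{\mathrm{XYZ}})$, so the operator is a stabilizer up to phase. This contradicts nontriviality.

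For $(ii)$, suppose $\6u\in\tilde{C}_{\mathrm{Y}}^\perp$ and $\6w\in\tilde{C}_{\mathrm{Z}}^\perp$. Since $\6v=\6u+\6w$ over $\F_2$, we decompose
\[
\big[\begin{smallmatrix} \6u \\ \6v \end{smallmatrix}\big]=\big[\begin{smallmatrix} \6u \\ \6u \end{smallmatrix}\big]+\big[\begin{smallmatrix} \60 \\ \6w \end{smallmatrix}\big].
\]
This is a pure $Y$-type part on $\6u$ plus a pure $Z$-type part on $\6w$, both in the row span, which again gives a contradiction. For $(iii)$ we use the symmetric split
\[
\big[\begin{smallmatrix} \6u \\ \6v \end{smallmatrix}\big]=\big[\begin{smallmatrix} \6v \\ \6v \end{smallmatrix}\big]+\big[\begin{smallmatrix} \6w \\ \60 \end{smallmatrix}\big],
\]
with $\6v\in\tilde{C}_{\mathrm{Y}}^\perp$ and $\6w\in\tilde{C}_{\mathrm{X}}^\perp$, which is valid since $\6u=\6v+\6w$.

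The only step requiring care is the observation itself. One must confirm that the irreducible components $\6H_\sigma'$ together with $C_\cap^\perp$ really generate all of $C_\sigma^\perp$ in the $\sigma$-type embedding. This follows from the direct-sum definitions of $\6H_\sigma'$, since e.g.\ $C_{\mathrm{Y}}^\perp=\rs(\6H_{\mathrm{Y}}')\oplus\rs\big[\begin{smallmatrix}\6H_{\mathrm{X}\cap\mathrm{Y}}\\ \6H_{\mathrm{Y}\cap\mathrm{Z}}\end{smallmatrix}\big]$. It also needs the remark that $C_\cap^\perp$ can be embedded in any of the three Pauli types. The mixedness hypothesis is not actually needed for the argument; the statement holds for every nontrivial logical operator.
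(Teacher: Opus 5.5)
Your proof is correct and is essentially the paper's argument. The paper notes that the stabilizers of each imposed CSS code $\mathcal{C}(\tilde{C}_\sigma,\tilde{C}_{\sigma'})$ lie in the XYZ stabilizer group, so a nontrivial XYZ logical operator cannot be one of them; you make this containment explicit by splitting the operator into pure $\sigma$- and $\sigma'$-type elements of $\rs(\6H_{\mathrm{XYZ}})$, including for the $Y$-relabeled pairs, and your remark that mixedness is never actually used is also right.
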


\begin{IEEEproof}
    Consider the CSS code $\mathcal{C}(\tilde{C}_\mathrm{X},\tilde{C}_\mathrm{Z})$ of Definition~\ref{def:CSS_imposed}. As in Proposition~\ref{pro:CSS_XYZ_dist}, all logical operators of $\mathcal{C}_{\mathrm{XYZ}}(C_{\mathrm{X}}, C_{\mathrm{Y}}, C_{\mathrm{Z}})$ are also logical operators of $\mathcal{C}(\tilde{C}_\mathrm{X},\tilde{C}_\mathrm{Z})$. 
    Since $\big[\begin{smallmatrix} \6u \\ \6v \end{smallmatrix}\big]$ is by assumption mixed, both $\6u$ and $\6v$ must be non-zero. Moreover, it is nontrivial for the XYZ code, hence it cannot be a stabilizer of the imposed \ac{CSS} code. Therefore, we cannot have simultaneously $\6u \in \tilde{C}_{\mathrm{X}}^\perp$ and $\6v \in \tilde{C}_{\mathrm{Z}}^\perp$, i.e., condition $(i)$. 
    The conditions $(ii)$ and $(iii)$ follow identically from
    $\mathcal{C}(\tilde{C}_\mathrm{Y},\tilde{C}_\mathrm{Z})$ and $\mathcal{C}(\tilde{C}_\mathrm{X},\tilde{C}_\mathrm{Y})$, which are \ac{CSS} up to uniform Pauli relabeling.
\end{IEEEproof}

\begin{Prop}[Distance Lower Bound based on Mixed Operators]
\label{prop:aug-mixed}
    Using the notation introduced in Definition~\ref{def:CSS_imposed}, and their analogues for the remaining Pauli pairs, define:
    \begin{align}
    B_{\mathrm{X}}
    & = \tfrac{1}{2}\tilde{\delta}_{\mathrm{X}}
            + \tfrac{1}{2}\min\left\{
            \tilde{d}_{\mathrm{Z}\setminus\mathrm{Y}} + \tilde{\delta}_{\mathrm{Y}},\,
            \tilde{d}_{\mathrm{Y}\setminus\mathrm{Z}} + \tilde{\delta}_{\mathrm{Z}}
            \right\},\\
    B_{\mathrm{Y}} 
    & = \tfrac{1}{2}\tilde{\delta}_{\mathrm{Y}}
            + \tfrac{1}{2} 
            \min\left\{
            \tilde{d}_{\mathrm{Z}\setminus\mathrm{X}} + \tilde{\delta}_{\mathrm{X}},\,
            \tilde{d}_{\mathrm{X}\setminus\mathrm{Z}} + \tilde{\delta}_{\mathrm{Z}}
            \right\},\\
    B_{\mathrm{Z}} 
    & = \tfrac{1}{2}\tilde{\delta}_{\mathrm{Z}}
            + \tfrac{1}{2}
            \min\left\{ 
            \tilde{d}_{\mathrm{X}\setminus\mathrm{Y}} + \tilde{\delta}_{\mathrm{Y}},\,
            \tilde{d}_{\mathrm{Y}\setminus\mathrm{X}} + \tilde{\delta}_{\mathrm{X}} 
            \right\}.
    \end{align}
    Then every mixed-type logical operator of
    $\mathcal{C}_{\mathrm{XYZ}} (C_{\mathrm{X}},C_{\mathrm{Y}},C_{\mathrm{Z}})$, has weight
    at least $B := \max\{ B_{\mathrm{X}}, B_{\mathrm{Y}}, B_{\mathrm{Z}} \}$.
\end{Prop}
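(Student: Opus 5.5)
The plan is to combine the weight identity~\eqref{eq:weight_identity}, $\mathrm{wt}(\6P)=\tfrac12(\wt{\6u}+\wt{\6v}+\wt{\6w})$ with $\6w=\6u+\6v$, with Lemma~\ref{lem:mixed}. Each $B_\sigma$ comes from lower-bounding one of the three vectors by a classical distance $\tilde\delta$. The other two are lower-bounded by a coset distance and a classical distance, according to which alternative of the relevant case of Lemma~\ref{lem:mixed} holds.

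\emph{Step 1: membership of $\6u,\6v,\6w$.} Let $\big[\begin{smallmatrix}\6u\\\6v\end{smallmatrix}\big]$ be a mixed logical operator, with $\6u=\6e_{\mathrm X}$ and $\6v=\6e_{\mathrm Z}$. Zero syndrome gives $\6H_{\mathrm X}\6v=\60$, $\6H_{\mathrm Z}\6u=\60$ and $\6H_{\mathrm Y}\6w=\60$.

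By~\eqref{eq:red_irred}, the rows $[\6H_\cap\,|\,\60]$ and $[\60\,|\,\6H_\cap]$ lie in $\rs(\6H_{\mathrm{XYZ}})$. Commutation with them gives $\6H_\cap\6v=\60$ and $\6H_\cap\6u=\60$, and by linearity $\6H_\cap\6w=\60$. Hence
\[
\6v\in\tilde C_{\mathrm X}, \qquad \6u\in\tilde C_{\mathrm Z}, \qquad \6w\in\tilde C_{\mathrm Y},
\]
where $\tilde C_{\mathrm Y}=C_{\mathrm Y}\cap C_\cap$ as in the relabeled Definition~\ref{def:CSS_imposed}.

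Since the operator is mixed, it is not pure $X$, $Z$ or $Y$. Therefore $\6u\neq\60$, $\6v\neq\60$ and $\6w\neq\60$, which yields
\[
\wt{\6v}\ge\tilde\delta_{\mathrm X}, \qquad \wt{\6u}\ge\tilde\delta_{\mathrm Z}, \qquad \wt{\6w}\ge\tilde\delta_{\mathrm Y}.
\]

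\emph{Step 2: case analysis for each $B_\sigma$.}

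For $B_{\mathrm X}$, keep $\wt{\6v}\ge\tilde\delta_{\mathrm X}$ and apply Lemma~\ref{lem:mixed}$(ii)$.
If $\6u\notin\tilde C_{\mathrm Y}^\perp$, then $\6u\in\tilde C_{\mathrm Z}\setminus\tilde C_{\mathrm Y}^\perp$, so $\wt{\6u}\ge\tilde d_{\mathrm Z\setminus\mathrm Y}$, while $\wt{\6w}\ge\tilde\delta_{\mathrm Y}$.
Otherwise $\6w\in\tilde C_{\mathrm Y}\setminus\tilde C_{\mathrm Z}^\perp$, so $\wt{\6w}\ge\tilde d_{\mathrm Y\setminus\mathrm Z}$, while $\wt{\6u}\ge\tilde\delta_{\mathrm Z}$.
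Summing and halving gives $\mathrm{wt}\ge B_{\mathrm X}$.

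For $B_{\mathrm Y}$, keep $\wt{\6w}\ge\tilde\delta_{\mathrm Y}$ and use $(i)$.
Either $\6u\in\tilde C_{\mathrm Z}\setminus\tilde C_{\mathrm X}^\perp$, giving $\wt{\6u}\ge\tilde d_{\mathrm Z\setminus\mathrm X}$ together with $\wt{\6v}\ge\tilde\delta_{\mathrm X}$.
Or $\6v\in\tilde C_{\mathrm X}\setminus\tilde C_{\mathrm Z}^\perp$, giving $\wt{\6v}\ge\tilde d_{\mathrm X\setminus\mathrm Z}$ together with $\wt{\6u}\ge\tilde\delta_{\mathrm Z}$.

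For $B_{\mathrm Z}$, keep $\wt{\6u}\ge\tilde\delta_{\mathrm Z}$ and use $(iii)$.
Either $\wt{\6v}\ge\tilde d_{\mathrm X\setminus\mathrm Y}$ and $\wt{\6w}\ge\tilde\delta_{\mathrm Y}$.
Or $\wt{\6w}\ge\tilde d_{\mathrm Y\setminus\mathrm X}$ and $\wt{\6v}\ge\tilde\delta_{\mathrm X}$.

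In each case the minimum over the two alternatives is a valid lower bound, so $\mathrm{wt}\ge B_\sigma$ for every $\sigma$, and hence $\mathrm{wt}\ge B$.

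\emph{Main obstacle.} The delicate point is Step 1. One must check that each vector lies in the correct tilde code, and in particular that $\6w$ is annihilated by $\6H_\cap$. This requires the reducible rows to appear both as $X$-type and as $Z$-type stabilizers, which is exactly what~\eqref{eq:red_irred} provides. One must also check that the relabeled coset distances $\tilde d_{\sigma\setminus\sigma'}$ are matched with the ambient codes containing $\6u$, $\6v$ and $\6w$ respectively. Once these memberships are fixed, the remaining argument is bookkeeping.
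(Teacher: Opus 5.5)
Your proof is correct and follows the paper's argument. You show $\mathbf{u}\in\tilde{C}_{\mathrm{Z}}$, $\mathbf{v}\in\tilde{C}_{\mathrm{X}}$ and $\mathbf{w}\in\tilde{C}_{\mathrm{Y}}$, all nonzero, then split on the alternatives of Lemma~\ref{lem:mixed} and substitute into the weight identity~\eqref{eq:weight_identity}; the only cosmetic difference is that you obtain these memberships directly from commutation with $[\mathbf{H}_\cap\,|\,\mathbf{0}]$ and $[\mathbf{0}\,|\,\mathbf{H}_\cap]$, whereas the paper cites Proposition~\ref{pro:CSS_XYZ_dist} and its relabeled analogues.
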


\begin{IEEEproof}
    Let $\big[\begin{smallmatrix}\6u\\\6v\end{smallmatrix}\big]$ be a vector associated with a mixed logical operator of the XYZ code; we want to show that its Pauli weight is greater than or equal to $B_{\mathrm{Y}}$. 
    Since the operator is mixed, $\6u$, $\6v$, and $\6w=\6u+\6v$ are all nonzero. Moreover, by Proposition~\ref{pro:CSS_XYZ_dist} it is a logical operator of each imposed \ac{CSS} code, so $\6u\in\tilde{C}_{\mathrm{Z}}$, $\6v\in\tilde{C}_{\mathrm{X}}$ and $\6w = \6u+\6v\in\tilde{C}_{\mathrm{Y}}$, hence $\wt{\6w} \geq \tilde{\delta}_{\mathrm{Y}}$. 
    We then use condition $(i)$ of Lemma~\ref{lem:mixed}: if $\6u\notin\tilde{C}_{\mathrm{X}}^\perp$, then $\6u \in \tilde{C}_{\mathrm{Z}} \setminus \tilde{C}_{\mathrm{X}}^\perp$ and $\6v\in\tilde{C}_{\mathrm{X}}$, giving $\wt{\6u} \geq \tilde{d}_{\mathrm{Z}\setminus\mathrm{X}}$ and $\wt{\6v} \ge \tilde{\delta}_{\mathrm{X}}$; if instead $\6v \notin \tilde{C}_{\mathrm{Z}}^\perp$, then $\6v \in \tilde{C}_{\mathrm{X}} \setminus \tilde{C}_{\mathrm{Z}}^\perp$ and $\6u \in \tilde{C}_{\mathrm{Z}}$, giving $\wt{\6v} \geq \tilde{d}_{\mathrm{X}\setminus\mathrm{Z}}$ and $\wt{\6u} \geq \tilde{\delta}_{\mathrm{Z}}$. Substituting into the weight identity~\eqref{eq:weight_identity}, $\wt{\6u\vee\6v} = \tfrac{1}{2}\big(\wt{\6u}+\wt{\6v}+\wt{\6w}\big)$ and retaining the smaller of the two cases yields
    \begin{align}
       \wt{\6u\vee\6v} 
       \geq 
       \tfrac{1}{2} 
        \min\left\{
        \tilde{d}_{\mathrm{Z}\setminus\mathrm{X}} + \tilde{\delta}_{\mathrm{X}} + \tilde{\delta}_{\mathrm{Y}},\,
        \tilde{d}_{\mathrm{X}\setminus\mathrm{Z}} + \tilde{\delta}_{\mathrm{Z}} + \tilde{\delta}_{\mathrm{Y}}
        \right\} 
        = B_{\mathrm{Y}} .
    \end{align}
    
    The bounds $B_{\mathrm{X}}$ and $B_{\mathrm{Z}}$ follow identically from conditions $(ii)$ and $(iii)$ of Lemma~\ref{lem:mixed}, using the \ac{CSS} codes $\mathcal{C}(\tilde{C}_{\mathrm{X}},\tilde{C}_{\mathrm{Y}})$ and $\mathcal{C}(\tilde{C}_{\mathrm{Y}},\tilde{C}_{\mathrm{Z}})$. 
    Since each $B_\sigma$, with $\sigma \in \{ X, Y, Z \}$, bounds every mixed-type logical operator, so does their maximum.
\end{IEEEproof}

Since each Pauli operator is either of \emph{pure} type or \emph{mixed} type, the minimum distance is also attained by a logical operator that is either of pure or mixed, which leads to the following Corollary.
\begin{Cor}[Combined Distance Lower Bound]
\label{cor:combined_lower}
    Let $\mathcal{C}_{\mathrm{XYZ}}(C_{\mathrm{X}},C_{\mathrm{Y}},C_{\mathrm{Z}})$ be an XYZ code, with $U$ the minimum distance of pure operators, as in~\eqref{eq:upper}, and $B$ the minimum distance of mixed operators, as in Proposition~\ref{prop:aug-mixed}.
    Then:
    \begin{align} 
    \label{eq:combined_lower}
        d \geq \min\bigl\{U, \, 
        B \bigr\},
    \end{align}
    and we have exactly $d = U$ when $U \leq B$.
\end{Cor}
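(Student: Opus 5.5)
The claim follows from a case split on the Pauli type of a minimum-weight logical operator, using Corollary~\ref{cor:upper_bound} and Proposition~\ref{prop:aug-mixed}. By Definition~\ref{def:min_dist_QSC}, $d$ is attained by some $\6P\in\mathcal{N}(\mathcal{S})\setminus\Phi\mathcal{S}$ with binary image $\big[\begin{smallmatrix}\6u\\\6v\end{smallmatrix}\big]$ and $\mathrm{wt}(\6P)=d$. If $k=0$, the set of logical operators is empty. Then $d=\infty$, every $d^\sigma=\infty$ so $U=\infty$, and both claims hold trivially. Assume $k\ge 1$ from now on.

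Every nonidentity Pauli operator is either pure ($\6u=\60$, $\6v=\60$, or $\6u=\6v$, i.e. $Z$-, $X$-, or $Y$-type) or mixed ($\6u$, $\6v$, $\6u+\6v$ all nonzero). This dichotomy is exactly the one used in the proofs of Proposition~\ref{the:X_errors} and Proposition~\ref{prop:aug-mixed}, and it splits the proof into two cases.

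\emph{Pure case.} If the minimizer $\6P$ is pure of type $\sigma$, it is admissible in the minimization defining $d^\sigma$ (Proposition~\ref{the:X_errors} and its $Y$, $Z$ analogues). Hence $d^\sigma\le d$, and therefore $U\le d$.

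\emph{Mixed case.} If the minimizer is mixed, Proposition~\ref{prop:aug-mixed} gives $\mathrm{wt}(\6P)\ge B$, so $d\ge B$.

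In either case $d\ge\min\{U,B\}$, which is \eqref{eq:combined_lower}.

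For the equality statement, suppose $U\le B$. Then $\min\{U,B\}=U$, so $d\ge U$ by the lower bound. Corollary~\ref{cor:upper_bound} gives $d\le U$, hence $d=U$. If $U=\infty$, then $U\le B$ forces $B=\infty$, and $d\ge\min\{U,B\}=\infty$ gives $d=\infty=U$.

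There is no substantive obstacle; the proof only combines the earlier results. The one point to check is that the pure/mixed dichotomy is exhaustive and matches the hypotheses of the earlier results. In particular, Proposition~\ref{prop:aug-mixed} requires all three of $\6u$, $\6v$, $\6w$ to be nonzero, and this is precisely the complement of the three pure types. Applying $\min$ and $\max$ to values in $\mathbb{N}\cup\{\infty\}$ also needs the conventions $d^\sigma=\infty$ for empty sets and $d=\infty$ for $k=0$; with these, the case $U=\infty$ goes through as shown above.
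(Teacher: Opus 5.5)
Your proof is correct and follows the paper's argument. Like the paper, you split a minimum-weight logical operator into the pure case, which gives $d \geq U$ via the $d^\sigma$, and the mixed case, which gives $d \geq B$ via Proposition~\ref{prop:aug-mixed}, and you get the equality $d = U$ from the upper bound of Corollary~\ref{cor:upper_bound}; your explicit handling of $k=0$ and $U=\infty$ is a harmless addition.
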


Eq.~\eqref{eq:lower} shows that every XYZ logical operator must also be logical for each imposed \ac{CSS} code, and hence cannot be lighter than the largest imposed \ac{CSS} distance. 
Eq.~\eqref{eq:combined_lower}, instead, captures genuinely mixed logical operators and can be stronger when the three component codes have large classical distances. 
Together, these bounds explain why adding $Y$-type stabilizers can increase the minimum distance relative to the underlying \ac{CSS} constructions. 

The proposed bounds require computing minimum-weight vectors in a code while excluding a given subspace. We formulate each problem as a \ac{MILP} with a candidate vector $\mathbf{x}$, linearized parity constraints enforcing $\mathbf{x}\in\mathcal{C}$, and auxiliary syndrome variables enforcing $\mathbf{x}\notin\mathcal{D}$, where $\mathcal{D}\subseteq\mathcal{C}$. The objective is to minimize $\|\mathbf{x}\|$. When the solver certifies optimality, the resulting value is exactly $d(\mathcal{C}\setminus\mathcal{D})$. These formulations involve one $n$-bit candidate vector, whereas computing the exact quantum distance requires the two binary components of a Pauli operator. Therefore, the bounds are expected to be easier to compute.

\section{Families of XYZ Stabilizer Codes}
\label{sec:XYZ_families}

In this section, we use the XYZ construction to identify three families of quantum stabilizer codes. More precisely, we show how a topological code family already known in the literature, the XYZ$^2$ code~\cite{Srivastava2022_xyzhexagonal}, is \ac{LC}-equivalent to XYZ codes. Furthermore, we construct two novel families (and as many instances) of \ac{QLDPC} XYZ codes: one based on \ac{IS} codes~\cite{Ostrev2024classicalproduct, Ostrev_IS} and one based on \ac{QD} codes~\cite{baldelli2025quantum, baldelli2026DC, baldelli2026ISIT, kirsten_QD}.

\subsection{Topological Construction based on the XYZ$^2$ Code}
\label{subsec:XYZ2}

\begin{figure}[t]
    \centering
    \resizebox{0.65\columnwidth}{!}{
    \begin{tikzpicture}[
  scale=1.8,
  vn/.style ={circle, draw=black, line width=.6pt, fill=white,
              minimum size=4.2mm, inner sep=0pt},
  vnErr/.style ={circle, draw=black, line width=.6pt, fill=white,
              minimum size=4.2mm, inner sep=0pt, 
              font=\normalsize\color{red!50!black}},
  vnErr2/.style ={circle, draw=black, line width=.6pt, fill=white,
              minimum size=4.2mm, inner sep=0pt, 
              font=\normalsize\color{blue!50!black}},
  cnX/.style={rectangle, draw=black, line width=.4pt, preaction={fill=white},,
              pattern=north west lines, pattern color=red!50!gray,
              font=\normalsize, minimum size=4.0mm, inner sep=0pt},
  cnY/.style={rectangle, draw=black, line width=.5pt, preaction={fill=white},,
              pattern=north west lines, pattern color=green!50!gray,
              font=\normalsize, minimum size=4.0mm, inner sep=0pt},
  cnZ/.style={rectangle, draw=black, line width=.5pt, preaction={fill=white},,
              pattern=north west lines, pattern color=blue!50!gray,
              font=\normalsize, minimum size=4.0mm, inner sep=0pt},
  edge/.style={gray!66, line width=1.0pt},
  edgeErr/.style={dotted, red!80!black, line width=2.0pt},
  edgeErr2/.style={dotted, blue!80!black, line width=2.0pt} 
]

\pgfdeclarelayer{bg}\pgfsetlayers{bg,main}

\def\dist{5}                        
\pgfmathtruncatemacro{\dm}{\dist-1}
\pgfmathtruncatemacro{\dmm}{\dist-2}
\def\hx{0.56}                  
\def\vy{0.64}                  
\def\hh{0.30}                  

\newcommand{\CN}[4]{%
  \pgfmathtruncatemacro{\col}{mod(#1+3*\dist,3)}%
  \ifcase\col \node[cnX] (#2) at (#3,#4){$X$};%
  \or         \node[cnY] (#2) at (#3,#4){$Y$};%
  \or         \node[cnZ] (#2) at (#3,#4){$Z$};%
  \fi}

\foreach \a in {0,...,\dm}{\foreach \bb in {0,...,\dm}{
  \pgfmathsetmacro{\cx}{\hx*(\a-\bb)}
  \pgfmathsetmacro{\cy}{-\vy*(\a+\bb)}
  \node[vn] (V-\a-\bb-t) at (\cx,\cy+\hh){};
  \node[vn] (V-\a-\bb-b) at (\cx,\cy-\hh){};
}}

\foreach \a in {0,...,\dm}{\foreach \bb in {0,...,\dm}{
  \pgfmathsetmacro{\cx}{\hx*(\a-\bb)}
  \pgfmathsetmacro{\cy}{-\vy*(\a+\bb)}
  \CN{\a-\bb}{L-\a-\bb}{\cx}{\cy};
}}

\ifnum\dist>1
\foreach \a in {0,...,\dmm}{\foreach \bb in {0,...,\dmm}{
  \pgfmathsetmacro{\cx}{\hx*(\a-\bb)}
  \pgfmathsetmacro{\cy}{-\vy*(\a+\bb+1)}
  \CN{\a-\bb}{H-\a-\bb}{\cx}{\cy};
}}
\fi

\ifnum\dist>1
\foreach \a in {0,...,\dmm}{\ifodd\a\else
  \pgfmathsetmacro{\cx}{\hx*(\a+1)}
  \pgfmathsetmacro{\cy}{-\vy*\a}
  \CN{\a+1}{BUR-\a}{\cx}{\cy};
\fi}
\foreach \bb in {0,...,\dmm}{\ifodd\bb
  \pgfmathsetmacro{\cx}{-\hx*(\bb+1)}
  \pgfmathsetmacro{\cy}{-\vy*\bb}
  \CN{-\bb-1}{BUL-\bb}{\cx}{\cy};
\fi}
\foreach \bb in {0,...,\dmm}{\ifodd\bb\else
  \pgfmathsetmacro{\cx}{\hx*(\dm-\bb)}
  \pgfmathsetmacro{\cy}{-\vy*(\dm+\bb+1)}
  \CN{\dm-\bb}{BLR-\bb}{\cx}{\cy};
\fi}
\foreach \a in {0,...,\dmm}{\ifodd\a
  \pgfmathsetmacro{\cx}{\hx*(\a-\dm)}
  \pgfmathsetmacro{\cy}{-\vy*(\a+\dm+1)}
  \CN{\a-\dm}{BLL-\a}{\cx}{\cy};
\fi}
\fi


\node[vnErr] (X0) at (0,-\vy-\hh+\vy){$X$};
\foreach \a in {1,...,\dm}{
  \node[vnErr] (X\a-0) at (\hx,-2*\vy*\a+\hh+\vy){$X$};
  \node[vnErr] (X\a-1) at (\hx,-2*\vy*\a-\hh+\vy){$X$};
}

\node[vnErr2] (L0) at (-4*\hx,-5*\vy + \hh+\vy){$Z$};
\node[vnErr2] (L1) at (-2*\hx,-5*\vy + \hh+\vy){$Y$};
\node[vnErr2] (L2) at ( 0*\hx,-5*\vy + \hh+\vy){$X$};
\node[vnErr2] (L3) at ( 2*\hx,-5*\vy + \hh+\vy){$Z$};
\node[vnErr2] (L4) at ( 4*\hx,-5*\vy + \hh+\vy){$Y$};

\begin{pgfonlayer}{bg}
  \foreach \a in {0,...,\dm}{\foreach \bb in {0,...,\dm}{
     \draw[edge] (L-\a-\bb)--(V-\a-\bb-t);
     \draw[edge] (L-\a-\bb)--(V-\a-\bb-b);
  }}
  \ifnum\dist>1
  \foreach \a in {0,...,\dmm}{\foreach \bb in {0,...,\dmm}{
     \pgfmathtruncatemacro{\ap}{\a+1}\pgfmathtruncatemacro{\bp}{\bb+1}
     \draw[edge] (H-\a-\bb)--(V-\a-\bb-b);
     \draw[edge] (H-\a-\bb)--(V-\ap-\bp-t);
     \draw[edge] (H-\a-\bb)--(V-\a-\bp-t);
     \draw[edge] (H-\a-\bb)--(V-\a-\bp-b);
     \draw[edge] (H-\a-\bb)--(V-\ap-\bb-t);
     \draw[edge] (H-\a-\bb)--(V-\ap-\bb-b);
  }}
  \fi
  \ifnum\dist>1
  \foreach \a in {0,...,\dmm}{\ifodd\a\else\pgfmathtruncatemacro{\ap}{\a+1}
     \draw[edge] (BUR-\a)--(V-\a-0-t);
     \draw[edge] (BUR-\a)--(V-\a-0-b);
     \draw[edge] (BUR-\a)--(V-\ap-0-t);
  \fi}
  \foreach \a in {0,...,\dmm}{\ifodd\a\pgfmathtruncatemacro{\ap}{\a+1}
     \draw[edge] (BLL-\a)--(V-\ap-\dm-t);
     \draw[edge] (BLL-\a)--(V-\ap-\dm-b);
     \draw[edge] (BLL-\a)--(V-\a-\dm-b);
  \fi}
  \foreach \bb in {0,...,\dmm}{\ifodd\bb\pgfmathtruncatemacro{\bp}{\bb+1}
     \draw[edge] (BUL-\bb)--(V-0-\bb-t);
     \draw[edge] (BUL-\bb)--(V-0-\bb-b);
     \draw[edge] (BUL-\bb)--(V-0-\bp-t);
  \fi}
  \foreach \bb in {0,...,\dmm}{\ifodd\bb\else\pgfmathtruncatemacro{\bp}{\bb+1}
     \draw[edge] (BLR-\bb)--(V-\dm-\bp-t);
     \draw[edge] (BLR-\bb)--(V-\dm-\bp-b);
     \draw[edge] (BLR-\bb)--(V-\dm-\bb-b);
  \fi}
  \fi
  
  \draw[edgeErr] (X0)--(X1-0)--(X\dm-1);
  \draw[edgeErr2] (L0)--(-3*\hx,-5*\vy+\vy)--
                 (L1)--(-1*\hx,-5*\vy+\vy)--
                 (L2)--( 1*\hx,-5*\vy+\vy)--
                 (L3)--( 3*\hx,-5*\vy+\vy)--(L4);

\end{pgfonlayer}

\end{tikzpicture}
    }
    \caption{Tanner graph of the $\mathcal{C}_\mathrm{XYZ-2D}\llbracket 50,1,5 \rrbracket$. Circles represent qubits and squares represent stabilizers. The qubits joined by a dashed line form minimum-weight logical operators of pure type (in red, weight $2d-1=9$) and mixed type (in blue, weight $d=5$).}
    \label{fig:hexagonal_code}
\end{figure}

The XYZ$^2$ code introduced in~\cite{Srivastava2022_xyzhexagonal} is a 2D topological code family with parameters $\llbracket 2d^2, 1, d \rrbracket$, obtained by placing qubits in a rhombus-shaped patch of the hexagonal lattice. Its mixed stabilizer generators are weight-$6$ plaquette operators with type $\sym{X Y Z X Y Z}$, together with weight-$2$ links with type $\sym{X X}$ on the vertical edges, and weight-$3$ half-plaquette operators on the boundary. Via local Pauli relabeling, every stabilizer generator can be homogenized to a pure one and the resulting family $\mathcal{C}_\mathrm{XYZ-2D}\llbracket 2d^2,1,d \rrbracket$ is XYZ, in the sense of Definition~\ref{def:XYZ_codes}, where the code parameters are invariant under \ac{LC} operations and thus inherited from the XYZ$^2$ code. In particular, each plaquette becomes a pure $X$-, $Y$-, or $Z$-type stabilizer, and so does each vertical link\footnote{We conjecture that all codes in this family are genuine under uniform relabeling, which we have numerically verified for $d\leq 99$, while we have not verified genuineness under arbitrary local relabeling as it requires $6^{2d^2}$ tests, which is computationally challenging already for $d=3$.}. The Tanner graph of the $\mathcal{C}_\mathrm{XYZ-2D}$ code for $d = 5$ is shown in Fig.~\ref{fig:hexagonal_code}.

It was proven that the XYZ$^2$ codes have $d^{\mathrm{X}} = d$ and $d^{\mathrm{Y}} = d^{\mathrm{Z}}= 2d^2$, resulting in excellent resilience against $Z$- and $Y$-biased noise; however, these properties are \emph{not} \ac{LC}-invariant~\cite{Srivastava2022_xyzhexagonal}. 
Preliminary investigations indicate that the minimum-weight pure-type logical operators of $\mathcal{C}_\mathrm{XYZ-2D}$ are strings joining the two opposite corners of the rhombus (see Fig.~\ref{fig:hexagonal_code}), which would force $d^\sigma = 2d-1$. This conjecture implies that minimum-weight logical operators must be mixed and would provide an explicit example of a family for which the pure-type upper bound $U=2d-1>d$ is asymptotically loose by a factor approaching two; finally, this would endow the code with resilience against \emph{any unknown} biased noise. 
However, such an unknown bias is a rather contrived error model, and for conventional noise channels, this code is suboptimal with respect to the XYZ$^2$ code, whose pure $Y$-type and $Z$-type distances scale as $n$. 
We therefore do not analyze the $\mathcal{C}_\mathrm{XYZ-2D}$ family further.

\subsection{XYZ Codes based on Intersecting Subset Codes} \label{subsec:IS_XYZ}

\ac{IS} codes~\cite{Ostrev_IS} stand as a generalization of the classical product code construction for \ac{CSS} codes~\cite{Ostrev2024classicalproduct} and of quantum Reed-Muller codes~\cite{steane1999}. To generalize the definition \ac{IS} codes to the XYZ framework, we need to add a further \ac{PCM}, associated with $Y$-type stabilizer generators.

\begin{Def}[XYZ Code based on \ac{IS} Codes]
\label{def:XYZ_IS}
    An \ac{IS} XYZ code $\mathcal{C}_{\mathrm{XYZ}-\emph{IS}}(C_{\mathrm{X}}, C_{\mathrm{Y}}, C_{\mathrm{Z}})\llbracket n, k, d \rrbracket$, with $n=2^\ell$, is designed starting from three tuples of subsets:
    \begin{align}
    \notag
        \mathcal{S}_{\mathrm{X}} = 
        \left(\begin{array}{c}
            S_1^{\mathrm{X}} \\ 
            \vdots \\ 
            S_{a_X}^{\mathrm{X}}
        \end{array}\right), 
        & 
        \quad
        \mathcal{S}_{\mathrm{Y}} = \left(\begin{array}{c}
            S_1^{\mathrm{Y}} \\ 
            \vdots \\ 
            S_{a_Y}^{\mathrm{Y}}
            \end{array}\right), 
        \quad
        \mathcal{S}_{\mathrm{Z}} = 
        \left(\begin{array}{c}
            S_1^{\mathrm{Z}} \\ 
            \vdots \\ 
            S_{a_Z}^{\mathrm{Z}}
        \end{array}\right), \\ 
        & \quad \forall \, \sigma \in \{X, Y, Z\}, \forall \, i : \ S_i^\sigma \subseteq [\ell],
    \end{align}
    where $S_i^\sigma$, for $i=1,\ldots,a_\sigma$ with $a_\sigma < 2^\ell$, are distinct subsets of $[\ell]$ that satisfy the following intersection conditions:
    \begin{align} 
    \label{eq:int_cond}
        \forall \, \sigma \neq \sigma': \forall \, i\in[a_\sigma], j\in[a_{\sigma'}]: \quad 
        S_i^{\sigma} \cap S_j^{\sigma'} \neq \emptyset .
    \end{align}
    
    The \acp{PCM} of the XYZ code are given by:
    \begin{align}
        \6H_{\mathrm{X}} =
        \begin{bmatrix}
            \6H_{S^{\mathrm{X}}_1} \\ 
            \vdots \\ 
            \6H_{S^{\mathrm{X}}_{a_{\mathrm{X}}}}
        \end{bmatrix}\!,
        \, \, 
        \6H_{\mathrm{Y}} =
        \begin{bmatrix}
            \6H_{S^{\mathrm{Y}}_1} \\ 
            \vdots \\ 
            \6H_{S^{\mathrm{Y}}_{a_{\mathrm{Y}}}}
        \end{bmatrix}\!,
        \, \, 
        \6H_{\mathrm{Z}} =
        \begin{bmatrix}
            \6H_{S^{\mathrm{Z}}_1} \\ 
            \vdots \\ 
            \6H_{S^{\mathrm{Z}}_{a_{\mathrm{Z}}}}
        \end{bmatrix}\!,
    \end{align}
    where we have
        \begin{align}
            & \6H_{S} = \bigotimes_{i=1}^\ell
        \begin{cases}
        	  \6h & \text{if} \ i \in S \\ 
            \6I_2 & \text{if} \ i \notin S
        \end{cases}
        \quad 
        \forall S \subseteq [\ell],
        \end{align}
        with $\6h = [1~1]$.
\end{Def}
In particular, the vector $\6h$ satisfies $\6h \6h^{\top} 
= 0$ which, together with the intersecting subset condition~\eqref{eq:int_cond}, guarantees that the three \acp{PCM} 
$\6H_{\mathrm{X}}$, $\6H_{\mathrm{Y}}$, $\6H_{\mathrm{Z}}$ are pairwise 
orthogonal\footnote{Both XYZ and \ac{CSS} codes based on \ac{IS} codes can also be constructed from arbitrary tuples of matrices $\6H_i^\sigma$ such that $\6H_i^\sigma (\6H_i^{\sigma'})^\top = \60$ for all $i\in[\ell]$ and $\sigma \neq \sigma'$, instead of using the fixed base matrix $\6h = [1~1]$, see~\cite{Ostrev_IS}.}.

The original \ac{CSS} version of the \ac{IS} codes are shown to be quantum \ac{GRM} codes, also known as decreasing monomial codes~\cite{bardet2016algebraic}; using this correspondence, it is possible to determine the quantum (coset) minimum distance in $\mathrm{poly}(n)$ time~\cite{Ostrev_IS}. The identification with \ac{GRM} codes does not hold for XYZ-\ac{IS} codes, and we could not find other polynomial-time algorithms to compute the minimum distance. However, it is possible to efficiently evaluate the bounds of Section~\ref{sec:bounds_dmin} by exploiting the closure of \ac{CSS} \ac{IS} codes under sum, intersection, and orthogonal complements. These closures can be shown by representing the codes $C_\sigma^\perp = \rs(\6H_{\sigma})$ with the tuple $\mathcal{S}_\sigma$ and through this with the sets $\mathcal{F}_\sigma := \mathcal{G}(C_\sigma^\perp) = \mathcal{F}(\mathcal{S}_\sigma)$, where
\begin{align}
    \mathcal{F}(\mathcal S) := \{ F \subseteq[\ell] \mid F \cap S = \varnothing \text{ for some } S \in \mathcal S \} .
\end{align}
The correspondence is given explicitly by:
\begin{align}
    C_\sigma^\perp = \rs\!
    \left( 
    \Pi[\mathcal{F}_\sigma]  
    \begin{bmatrix}
        1 & 1\\
        0 & 1
    \end{bmatrix}^{\otimes \ell}
    \right)
\end{align}
where $\Pi[\mathcal{F}_\sigma] \in \mathbb{F}_2^{|\mathcal{F}_\sigma| \times 2^\ell}$ is a matrix with a one in each row, in the position indexed by the corresponding element of $\mathcal{F}_\sigma$.
Sums and intersections of the codes $C_\sigma^\perp$ correspond, respectively, to unions and intersections of the associated $\mathcal{F}_\sigma$, while the dual code $C_\sigma$ corresponds to $\mathcal{G}(C_\sigma)=\{ F^c \mid F \in 2^{[\ell]} \setminus \mathcal{G}(C_\sigma^\perp)\}$, where $(\cdot)^c$ denotes set complement in $[\ell]$. Since these families are downward closed with the partial order given by set inclusion, a generating tuple for a set $\mathcal{F}$ is recovered via $\mathcal{S}(\mathcal{F}) = \{ F^{c} \mid F \in \max(\mathcal{F}) \}$.
Hence, each $C_\sigma^\perp$, their duals $C_\sigma$, the pairwise intersections $C_\sigma\cap C_{\sigma'}$, the reducible component $C_\cap^{\perp}$, and the imposed codes $\tilde{C}_{\mathrm{X}},\tilde{C}_{\mathrm{Y}},\tilde{C}_{\mathrm{Z}}$ of Definition~\ref{def:CSS_imposed}, are all \ac{IS} codes whose $\mathcal{F}$ sets are computed explicitly from $\mathcal{S}_{\mathrm{X}}, \mathcal{S}_{\mathrm{Y}}, \mathcal{S}_{\mathrm{Z}}$. E.g., we have
\begin{align}
    \mathcal{G}(C_\cap^\perp) = (\mathcal F_{\mathrm X}\cap\mathcal F_{\mathrm Y})\cup(\mathcal F_{\mathrm Y}\cap\mathcal F_{\mathrm Z})\cup(\mathcal F_{\mathrm X}\cap\mathcal F_{\mathrm Z}).
\end{align}
Therefore, since $\tilde C_{\mathrm X}^{\perp}=C_{\mathrm X}^{\perp}+C_\cap^\perp$, we obtain
\begin{align}
    \mathcal{G}(\tilde C_{\mathrm X}^{\perp})=\mathcal F_{\mathrm X}\cup\bigl(\mathcal F_{\mathrm Y}\cap\mathcal F_{\mathrm Z}\bigr).
\end{align}

As an example, we construct an XYZ code $\mathcal{C}_{\mathrm{XYZ}-\text{IS}}\llbracket 512, 9, d \rrbracket$, with $12\leq d \leq 16$, choosing $\ell = 9$ and the following three tuples 
\begin{align}
\begin{aligned}
\mathcal{S}_{\mathrm{X}} &\!=\!
\left(
\begin{array}{@{}c@{}}
\{1,2,3\}\\
\{4,5,6\}\\
\{7,8,9\}
\end{array}
\right),
\;
\mathcal{S}_{\mathrm{Y}} \!=\!
\left(
\begin{array}{@{}c@{}}
\{1,5,9\}\\
\{2,6,7\}\\
\{3,4,8\}\\
\{3,5,7\}\\
\{2,4,9\}\\
\{1,6,8\}
\end{array}
\right),
\;
\mathcal{S}_{\mathrm{Z}} \!=\!
\left(
\begin{array}{@{}c@{}}
\{1,4,7\}\\
\{2,5,8\}\\
\{3,6,9\}
\end{array}
\right),
\end{aligned}
\end{align}
satisfying the intersection condition given in~\eqref{eq:int_cond}.
We have verified that this code is genuine, also under uniform relabeling, while extending this verification to arbitrary local relabelings would require evaluating $6^{512}$ cases, which is computationally infeasible.   
We have bounded the quantum minimum distance using the closed-form distance equations of~\cite{Ostrev_IS} for \ac{CSS} \ac{IS} codes to compute the bounds of Section~\ref{sec:bounds_dmin}.
In particular, the distances with pure-type operators are $d^\mathrm{X} = d^\mathrm{Z} = 32$ and $d^\mathrm{Y} = 16$, giving $U = 16$ from Corollary~\ref{cor:upper_bound}.
The three \ac{CSS} codes of Definition~\ref{def:CSS_imposed} have $\tilde{\delta}_\sigma = 8$ and $\tilde{d}_{\sigma \setminus \sigma'} = 8$ for all $\sigma \neq \sigma'$ with $\sigma, \sigma' \in \{ X, Y, Z \}$, hence $L = 8$ from Corollary~\ref{cor:lower_bound}; finally, Proposition~\ref{prop:aug-mixed} yields $B_\mathrm{X} = B_\mathrm{Y} = B_\mathrm{Z} = 12$, so that Corollary~\ref{cor:combined_lower} gives the final result $12 \leq d \leq 16$. This is a concrete instance in which the bound~\eqref{eq:combined_lower} is strictly stronger than the one in~\eqref{eq:lower}.

\subsection{XYZ Codes based on Quasi-Dyadic Codes} 
\label{subsec:QD_XYZ}

We first recall the definitions of dyadic and \ac{QD} matrices.

\begin{Def} [Ring of Dyadic Matrices]
    Consider $\ell \in \mathbb{N}$. 
    We define $\mathcal M_{\ell}(\mathbb F_2)$
    as the set of $2^\ell\times 2^\ell$ matrices with entries over $\mathbb F_2$ and structured as follows
    \begin{align}
        \6M = 
        \begin{bmatrix}
        \6A & \6B \\
        \6B & \6A
        \end{bmatrix}, \quad \6A,\6B\in\mathcal M_{\ell-1}(\mathbb F_2).
    \end{align}
    For $\ell = 0$, $\mathcal M_0(\mathbb F_2) := \mathbb{F}_2$.
\end{Def}

Dyadic matrices are \emph{reproducible matrices}~\cite{santini_iacr}, i.e., it is possible to determine the entire matrix only using its first row, called \emph{signature}.

\begin{Def} [Dyadic Permutation Matrix, Quasi-Dyadic Matrix and Codes]
    We call \emph{\ac{DPM}} any dyadic matrix with a signature of weight $1$. 
    A matrix is called \ac{QD} if it is a $u \times w$ array of dyadic matrices; consequently, a classical linear code is called \ac{QD} if it admits a generator matrix or a \ac{PCM} that is \ac{QD}. 
\end{Def}

We construct \ac{CSS} \ac{QLDPC} codes using the \emph{affine construction} method of~\cite{baldelli2026ISIT}, that allows to have \ac{CSS} codes suitable for the CAMEL framework~\cite{CAMEL_Sisi}. 
Let us briefly recall how to build the binary \ac{PCM} of the component \ac{QD} codes.
A classical \ac{QD} code is described by an exponent matrix $\6E' \in \mathbb{F}_{2^\ell}^{u \times 2^{\ell}}$, with $u \le 2^\ell$. Each entry $p_{i,j}$ of the exponent matrix, where $i,j \in \mathbb F_{2^\ell}$, is associated to a (not necessarily distinct) \ac{DPM}.
Each row of $\6E'$ is generated as an affine permutation $p_{i,j} = a_i j + b_i$, with distinct non-zero multipliers $a_i \in \mathbb F_{2^\ell}^{\times}$ and $b_i \in \mathbb F_{2^\ell}$ with a multiplication  over the field $\mathbb F_{2^\ell}$.
Then, we \emph{lift} the exponent matrix, i.e., we replace each element $p_{i,j} \in \mathbb{F}_{2^{\ell}}$ with the corresponding \ac{DPM}.

\begin{Def} [XYZ Code based on \ac{QD} Codes]
An XYZ code $\mathcal{C}_{\mathrm{XYZ}-\emph{QD}}(C_{\mathrm{X}}, C_{\mathrm{Y}}, C_{\mathrm{Z}})\llbracket n, k, d \rrbracket$ based on \ac{QD} codes is defined by three exponent matrices $\6E'_{\mathrm{X}} \in \mathbb{F}_{2^\ell}^{u_{\mathrm{X}} \times  2^{\ell}}$, $\6E'_{\mathrm{Y}} \in \mathbb{F}_{2^\ell}^{u_{\mathrm{Y}} \times  2^{\ell}}$, $\6E'_{\mathrm{Z}}\in \mathbb{F}_{2^\ell}^{u_{\mathrm{Z}} \times  2^{\ell}}$ with affine rows
\begin{align}
    p^{(\sigma)}_{i,j} = a^{(\sigma)}_i j + b^{(\sigma)}_i, \qquad \quad \sigma \in\{X, Y, Z\}, 
\end{align}
where $u_{\mathrm{X}} + u_{\mathrm{Y}} + u_{\mathrm{Z}} < 2^{\ell}$, and the multipliers $a_i$ are distinct within and across the three matrices. 
Lifting the exponent matrices yields $\6H_{\mathrm{X}}' \in \mathbb{F}_{2}^{u_{\mathrm{X}}2^{\ell} \times 2^{2\ell}}$, $\6H_{\mathrm{Y}}' \in \mathbb{F}_{2}^{u_{\mathrm{Y}}2^{\ell} \times 2^{2\ell}}$ and $\6H_{\mathrm{Z}}' \in \mathbb{F}_{2}^{u_{\mathrm{Z}}2^{\ell} \times 2^{2\ell}}$, that satisfy
\begin{align}
    \6H'_{\sigma} (\6H'_{\sigma'})^\top = \6{1}_{u_\sigma 2^\ell \times u_{\sigma'} 2^\ell}, \, \, \, \, \sigma, \sigma' \in\{X, Y, Z\}, \, \sigma \neq \sigma'.
\end{align}
Finally, by setting
\begin{align}
    \6{H}_{\mathrm{X}} = 
    \begin{bmatrix} [c|c]
        \6{H}'_{\mathrm{X}} & \6{1}
    \end{bmatrix},
    \,
    \6{H}_{\mathrm{Y}} = 
    \begin{bmatrix} [c|c]
        \6{H}'_{\mathrm{Y}} & \6{1}
    \end{bmatrix},
    \,  
    \6{H}_{\mathrm{Z}} = 
    \begin{bmatrix} [c|c]
        \6{H}'_{\mathrm{Z}} & \6{1}
    \end{bmatrix},
\end{align}
we get the orthogonality condition
\begin{align}
    \6H_{\sigma} \6H_{\sigma'}^\top = \6{0} \qquad \sigma, \sigma' \in\{X, Y, Z\}, \, \, \sigma \neq \sigma'.
\end{align}
\end{Def}

As an example, the XYZ code $\mathcal{C}_{\mathrm{XYZ}-\text{QD}} \llbracket 257, 116, 16 \rrbracket$\footnote{The minimum distance was numerically estimated by keeping track of the minimum weight logical errors during the Monte Carlo simulations for the \ac{LER} under \ac{BP} decoding.}
$\ell = 4$, $2^{\ell} = 16$, $u_{\mathrm{X}} = u_{\mathrm{Y}} = u_{\mathrm{Z}} = 5$.
We have verified that this code instance is genuine and remains genuine under uniform Pauli relabeling.
As in Section~\ref{subsec:IS_XYZ}, we avoid the verification for the local Pauli relabeling due to complexity constraints.

Regarding the minimum-distance bounds, we use the tool available in~\cite{Pryadko_2022} to numerically estimate the distances $\tilde{d}_{\mathrm{X},\mathrm{Y}}$, $\tilde{d}_{\mathrm{Y},\mathrm{Z}}$, and $\tilde{d}_{\mathrm{X},\mathrm{Z}}$, as well as the coset distances $\tilde{d}_{\sigma\setminus\sigma'}$ for all distinct $\sigma,\sigma' \in \{\sym{X},\sym{Y},\sym{Z}\}$. These computations yield the numerical estimate $L=12$, according to Corollary~\ref{cor:lower_bound}. To evaluate the mixed-operator bound (Corollary~\ref{cor:combined_lower}), we additionally estimate the classical distances $\tilde{\delta}_{\mathrm{X}}$, $\tilde{\delta}_{\mathrm{Y}}$, and $\tilde{\delta}_{\mathrm{Z}}$ using the tool available in~\cite{MacKayDist}, obtaining $B=18$. 
Finally, the \ac{MILP} formulation solved with the Gurobi optimizer finds a pure logical operator for each $\sigma$-type, with $\sigma \in \{ X, Y, Z \}$, with weight $16$. 
The numerical estimation of the upper bound is thus $U = 16$ (Corollary~\ref{cor:upper_bound}). Together, these numerical results suggest that the minimum distance is $d=16$ (Corollary~\ref{cor:combined_lower}).

\section{Numerical Results}
\label{sec:num_res}

\begin{figure*}[t]
    \centering
    \subfloat[]{
        \resizebox{0.75\columnwidth}{!}{
            \begin{tikzpicture} 
    \begin{loglogaxis}[
        xlabel={$p$},
        ylabel={ LER },
        xmin=0.01, xmax=0.15,
        ymin=0.0000001, ymax=1,
        legend style={legend columns=1},
        grid=both,
        major grid style={solid, gray!70},
        minor grid style={solid, gray!30},
        legend style = {
            at={(0.5, -0.25)},
            anchor=north,
            legend columns = 1,
            font=\scriptsize,
        },
        legend pos = south east,
        height  = 8.0cm,
        width = \columnwidth, 
        scaled x ticks=false
    ]

    \addplot[ 
        color = NavyBlue, 
        mark = square*, thick,
        mark options = {solid},
        ]
        coordinates {
            (0.20, 1)
            (0.15, 1)
            (0.14, 1)
            (0.13, 1)
            (0.12, 1)
            (0.11, 1)
            (0.1, 0.980392156862745)
            (0.09, 0.917431192660551)
            (0.08, 0.793650793650794)
            (0.07, 0.571428571428571)
            (0.06, 0.346020761245675)
            (0.05, 0.126903553299492)
            (0.04, 0.0262743037309511)
            (0.03, 0.00264284581637507)
            (0.02, 9.74e-05)
            (0.01, 4.97e-07)
        };

    \addplot[
        color = NavyBlue,
        mark = square, thick, dashed, 
        mark options = {solid},
        line width = 0.5pt
        ]
        coordinates {
        (0.20, 1)
        (0.1, 0.970667)
        (0.0666667, 0.561175)
        (0.0444444, 0.0921592)
        (0.0296296, 0.00591809)
        (0.0197531, 0.000251917)
        (0.0131687, 1.4911e-05)
    };

    \addplot[
        color = cyan, 
        mark = square*, thick,
        mark options = {solid}
        ]
        coordinates {
            (0.20, 1)
            (0.15, 1)
            (0.14, 1)
            (0.13, 1)
            (0.12, 1)
            (0.11, 1)
            (0.10, 1)
            (0.09, 1)
            (0.08, 0.961538461538462)
            (0.07, 0.847457627118644)
            (0.06, 0.72463768115942)
            (0.05, 0.411522633744856)
            (0.04, 0.193423597678917)
            (0.03, 0.0539374325782093)
            (0.02, 0.00646078304690528)
            (0.01, 0.000439860124480415)
            (0.009, 0.000360034707345788)
            (0.008, 0.00026015312613004)
            (0.007, 0.000188697046891216)
            (0.006, 0.000124265127104585)
            (0.005, 7.44951648913227e-05)
            (0.004, 4.81159831830695e-05)
        };

    \addplot[
        color = Emerald, 
        mark = square*, thick,
        mark options = {solid}
        ]
        coordinates {
            (0.20, 1)
            (0.15, 1)
            (0.14, 1)
            (0.13, 1)
            (0.12, 1)
            (0.11, 1)
            (0.1, 1)
            (0.09, 1)
            (0.08, 1)
            (0.07, 0.970873786407767)
            (0.06, 0.813008130081301)
            (0.05, 0.571428571428571)
            (0.04, 0.256410256410256)
            (0.03, 0.0736919675755343)
            (0.02, 0.00796368559369276)
            (0.01, 0.00013033323601785)
            (0.009, 7.04994391769613e-05)
            (0.008, 3.15467066342409e-05)
            (0.007, 1.67518345162413e-05)
        };
    

    \legend{
        {$\mathcal{C}_{\mathrm{XYZ}-\text{QD}}$},
        {$\mathcal{C}_{\text{QD}}$},
        {$\mathcal{C}_{\text{Bic,}1}$},
        {$\mathcal{C}_{\text{GB,}1}$},
    }
    
    \end{loglogaxis}
\end{tikzpicture}
        }
        \label{fig:QD_XYZ_SOTA}
    }
    \subfloat[]{
        \resizebox{0.75\columnwidth}{!}{
            \input{Figures/IS_XYZ_SOTA}
        }
        \label{fig:IS_XYZ_SOTA}
    }
    \caption{Monte Carlo simulations of the \ac{LER} under \ac{BP4} \ac{CN} serial decoding over a code-capacity noise model, as a function of $p$. 
    (a): comparison between the \ac{LER} of $\mathcal{C}_{\mathrm{XYZ}-\text{QD}}$ and some state-of-the-art \ac{CSS} codes. With dashed lines and empty squares, we report the code $\mathcal{C}_{\text{QD}}$ decoded using the CAMEL decoder of~\cite{baldelli2026ISIT}. 
    (b): comparison between the \ac{LER} of $\mathcal{C}_{\mathrm{XYZ}-\text{IS}}$ and several state-of-the-art \ac{CSS} codes. With dashed lines and empty dots, we report the code $\mathcal{C}_{\text{IS}}$.}
    \label{fig:QD_IS_XYZ_SOTA}
\end{figure*}

In this section, we show the \ac{LER} performance of the proposed 
XYZ \ac{QLDPC} codes, one based on \ac{IS} and the other on \ac{QD} codes, through finite-length Monte Carlo simulations, over a code-capacity noise model, with depolarizing probability $p$. 
The two presented instances of XYZ codes are compared with representative state-of-the-art \ac{CSS} \ac{QLDPC} codes having comparable lengths, stabilizer generator weights, and rates.
We employ a \ac{BP4} decoder, relying on the sum-product algorithm (SPA).
We used a \ac{CN} serialized schedule for message passing.
The decoder runs at most for $50$ iterations, and the simulation continues with the same $p$ until $100$ logical errors are detected.

We compare the proposed XYZ codes with several \ac{CSS} codes, namely: a \ac{QD} code ($\mathcal{C}_{\text{QD}}$)~\cite{baldelli2026ISIT}, an
\ac{IS} code ($\mathcal{C}_{\text{IS}}$)~\cite{Ostrev2024classicalproduct}, 
bicycle codes ($\mathcal{C}_{\text{Bic}}$)~\cite{sparse_McKay}, 
\ac{GB} codes ($\mathcal{C}_{\text{GB}}$)~\cite{Kovalev_Pryadko}, 
\ac{HP} codes ($\mathcal{C}_{\text{HP}}$)~\cite{Tillich_HGP}, a
\ac{QM} code ($\mathcal{C}_{\text{QM}}$)~\cite{Pacenti_Margulis}, and a
\ac{BB} code ($\mathcal{C}_{\text{BB}}$)~\cite{Bravyi2024_BB_codes}.
For each tested code instance (be it \ac{CSS} or XYZ) the code parameters $n$, $k$, $R$, and $d$, the stabilizer generator weight $w_{\text{r}}$ (which is uniform for all the codes), and the girth $g_{\sigma}$, of each classical code $C_{\sigma}$ with $\sigma \in \{ X, Y, Z \}$, are reported in Table~\ref{tab:code_parameters}.
All tested \ac{CSS} codes are directly available in the literature, with the exception of the bicycle codes $\mathcal{C}_{\text{Bic,}1}$ and $\mathcal{C}_{\text{Bic,}2}$, constructed with the same method of~\cite{sparse_McKay}, and the \ac{GB} codes $\mathcal{C}_{\text{GB,}1}$ and $\mathcal{C}_{\text{GB,}2}$, designed as in~\cite{Hagiwara2007}\footnote{ In~\cite{Pant_Kal_almost_linear}, the authors remark that such class of \ac{CSS} codes can be considered as a specific case of \ac{GB} codes.}. 
Since instances simultaneously matched in length, dimension, stabilizer weight, are generally not available, we select the benchmark codes to match the parameters most relevant to each comparison.

In Fig.~\ref{fig:QD_XYZ_SOTA}, we compare the performance of $\mathcal{C}_{\mathrm{XYZ}-\text{QD}}$, defined in Section~\ref{subsec:QD_XYZ}, against some state-of-the-art \ac{CSS} \ac{QLDPC} codes instances with $n \approx 257$, $w_{\mathrm{r}} \approx 16$ and $R \approx 0.45$. We observe that the proposed $\mathcal{C}_{\mathrm{XYZ}-\text{QD}}$ behaves better than $\mathcal{C}_{\text{Bic,}1}$ and $\mathcal{C}_{\text{GB,}1}$, consistently with their smaller minimum distance, namely $d_{\text{Bic,}1} = 5$ and $d_{\text{GB,}1} = 8$, even though the latter exhibits a larger girth, equal to $6$. 
As a benchmark, we report also the performance of $\mathcal{C}_{\text{QD}}$ (dashed line with empty squares), which is the state-of-the-art code instance of \ac{QD} codes, for these specific length and rate. Such a code has been devised for a joint code and \ac{BP4} decoder framework, namely CAMEL~\cite{CAMEL_Sisi}, and therefore it has been decoded with the corresponding CAMEL decoder.
The latter significantly improves the code \ac{LER} with respect to a standard \ac{BP4}~\cite{baldelli2026ISIT}.
Despite this, the performance of $\mathcal{C}_{\mathrm{XYZ}-\text{QD}}$, decoded with a conventional \ac{BP4} decoder, is superior to that of $\mathcal{C}_{\text{QD}}$.  
This comparison is informative because both codes have the same length, stabilizer generator weight and nearly equal rates. 
It therefore isolates, to a substantial extent, the effect of replacing the \ac{CSS} structure with the proposed XYZ construction.

Moreover, in Fig.~\ref{fig:IS_XYZ_SOTA}, we report the behavior of $\mathcal{C}_{\mathrm{XYZ}-\text{IS}}$ against several state-of-the-art \ac{CSS} \ac{QLDPC} codes instances, with $n \approx 512$, $w_{\mathrm{r}} \approx 8$, and $R \approx 0.02$.
We remark that the performance of $\mathcal{C}_{\mathrm{XYZ}-\text{IS}}$ is better than $\mathcal{C}_{\text{Bic,}2}$, $\mathcal{C}_{\text{HP,}1}$, and $\mathcal{C}_{\text{HP,}2}$, which is consistent with their smaller minimum distances, namely, $d_{\text{Bic,}2} = d_{\text{HP,}1} = 6$, $d_{\text{HP,}2} = 8$. 
We observe that, for $p > 0.09$,  $\mathcal{C}_{\text{GB,}2}$ exhibits poorer performance than the proposed XYZ code.
Instead, we note that, for $p < 0.09$, the performance curve of $\mathcal{C}_{\text{QM}}$ exhibits an error floor.
Eventually, for $p>0.05$, the XYZ code shows better performance than the latter code. 
Moreover, we also include the \ac{LER} performance of $\mathcal{C}_{\text{BB}}$, which is the instance with the biggest $n$ and (estimated) $d$ among the family of \ac{BB} codes~\cite{Bravyi2024_BB_codes}. 
In particular, for \(p>0.07\), the \ac{BB} code achieves a lower \ac{LER} than the XYZ code. Below this crossover point, however, the XYZ code performs better.
Finally, Fig.~\ref{fig:IS_XYZ_SOTA} includes the performance of $\mathcal{C}_{\text{IS}}$, namely the \ac{CSS}  \ac{IS} code constructed using only the tuples $\mathcal{S}_{\mathrm{X}}$ and $\mathcal{S}_{\mathrm{Z}}$ of Section~\ref{subsec:IS_XYZ}. 
This comparison is not rate matched, since $\mathcal{C}_{\text{IS}}$ has a substantially larger dimension than $\mathcal{C}_{\mathrm{XYZ}-\text{IS}}$. 
Rather, it illustrates the effect of adding the $\sym{Y}$-type constraints within the same underlying \ac{IS} construction. Over the simulated range, the resulting XYZ code achieves a lower \ac{LER}, at the cost of a reduced coding rate.

\begin{table}[tb!]
\centering
\caption{Parameters of tested Stabilizer Codes.
With “$\cdot$” in the “\textbf{Ref.}” column we indicate that the corresponding code instance has been 
presented in this work for the first time. 
}
\begin{tabular}{c c c c c}
\hline
\textbf{Code} & \multicolumn{3}{c}{\textbf{Parameters}} & {\textbf{Ref.}} \\
\cline{2-4}
 & $w_{\mathrm{r}}$ & $g_{\text{X}} / g_{\text{Y}} / g_{\text{Z}}$ & $R$ &  \\
\hline
$\mathcal{C}_{\mathrm{XYZ}-\text{QD}}\llbracket 257, 116, 16 \rrbracket$ & $17$ & $4 / 4 / 4$ & $0.45$ & $\cdot$ \\ 
\hline
$\mathcal{C}_{\mathrm{XYZ}-\text{IS}}\llbracket 512, 9, \leq 16 \rrbracket$ & $8$ & $8 / 4 / 8$ & $0.02$ & $\cdot$ \\ 
\hline
$\mathcal{C}_{\text{QD}}\llbracket 257, 121, 10 \rrbracket$ & $17$ & $4 / \cdot / 4$ & $0.47$ & \cite{baldelli2026ISIT} \\
\hline
$\mathcal{C}_{\text{IS}}\llbracket 512, 174, 8 \rrbracket$ & $8$ & $8 / \cdot / 8$ & $0.33$ & \cite{Ostrev2024classicalproduct} \\
\hline
$\mathcal{C}_{\text{Bic,}1}\llbracket 256, 116, 5 \rrbracket$ & $16$ & $4 / \cdot / 4$ & $0.45$ &  $\cdot$ \\ 
\hline
$\mathcal{C}_{\text{Bic,}2}\llbracket 512, 10, 6 \rrbracket$ & $8$ & $4 / \cdot / 4$ & $0.02$ &  $\cdot$ \\ 
\hline
$\mathcal{C}_{\text{GB,}1}\llbracket 272, 142, 8 \rrbracket$ & $16$ & $6 / \cdot / 6$ & $0.52$ &  $\cdot$ \\ 
\hline
$\mathcal{C}_{\text{GB,}2}\llbracket 488, 6,\geq 8 \rrbracket$ & $8$ & $6 / \cdot / 6$ & $0.01$ &  $\cdot$ \\ 
\hline
$\mathcal{C}_{\text{HP,}1}\llbracket 400, 16, 6 \rrbracket$ & $7$ & $6 / \cdot / 6$ & $0.04$ & \cite{Roffe_decoding} \\
\hline
$\mathcal{C}_{\text{HP,}2}\llbracket 625, 25, 8 \rrbracket$ & $7$ & $6 / \cdot / 6$ & $0.04$ & \cite{Roffe_decoding} \\
\hline
$\mathcal{C}_{\text{QM}}\llbracket 672, 4, \geq 8 \rrbracket$ & $8$ & $6 / \cdot / 6$ & $0.01$ & \cite{Pacenti_Margulis} \\
\hline
$\mathcal{C}_{\text{BB}} \llbracket 756, 16, \leq 34 \rrbracket$ & $6$ & $6 / \cdot / 6$ & $0.02$ & \cite{Bravyi2024_BB_codes} \\
\hline
\end{tabular}
\label{tab:code_parameters}
\end{table}

\section{Conclusions}
\label{sec:conclusions}

We introduced quantum XYZ stabilizer codes as a generalization of the \ac{CSS} framework obtained by including $Y$-type stabilizer generators. 
We characterized when these additional generators produce a genuinely non-\ac{CSS} code and derived bounds on the resulting minimum distance. 
Moreover, we show that the topological XYZ$^2$ code fits in the proposed XYZ framework.
The constructions based on \ac{IS} and \ac{QD} codes show that this framework can yield sparse finite-length codes, while the numerical results confirm that the proposed instances can outperform comparable \ac{CSS} codes under \ac{BP4} decoding.

Future work should focus on the joint design of the three component codes and on decoders tailored to the full XYZ structure. 
It would also be relevant to extend the analysis beyond the code-capacity setting and to clarify the complexity of recognizing \ac{LC} equivalence to \ac{CSS} codes.

\section*{Acknowledgments}

Davide Orsucci acknowledges support in preliminary technical and conceptual work from Dr.\ Francesco Gautieri.

\bibliographystyle{IEEEtran}
\bibliography{strings, Archive_short}

\end{document}